\numberwithin{equation}{section}
	\theoremstyle{plain}
	\newtheorem{theorem}{Theorem}
	\numberwithin{theorem}{section}
	\newtheorem{proposition}[theorem]{Proposition}
	\theoremstyle{definition}
\newcommand\Eb{\mathds{E}}
\newcommand\Rb{\mathds{R}}
\newcommand\Ac{\mathscr{A}}
\newcommand\Nc{\mathscr{N}}
\renewcommand\d{\partial}
\newcommand\dd{\mathrm{d}}
\newcommand\ee{\mathrm{e}}
\begin{document}

\title{Optimal Trading with Differing Trade Signals}

\author{
Ryan Donnelly
\thanks{Department of Mathematics, King's College London.  \textbf{e-mail}: \url{ryan.f.donnelly@kcl.ac.uk}}
\and
Matthew Lorig
\thanks{Department of Applied Mathematics, University of Washington.  \textbf{e-mail}: \url{mlorig@uw.edu}}
}

\date{This version: \today}

\maketitle


\begin{abstract}
	We consider the problem of maximizing portfolio value when an agent has a subjective view on asset value which differs from the traded market price.
	The agent's trades will have a price impact which affect the price at which the asset is traded.
	In addition to the agent's trades affecting the market price, the agent may change his view on the asset's value if its difference from the market price persists.
	We also consider a situation of several agents interacting and trading simultaneously when they have a subjective view on the asset value.
	Two cases of the subjective views of agents are considered, one in which they all share the same information, and one in which they all have an individual signal correlated with price innovations.
	To study the large agent problem we take a mean-field game approach which remains tractable. After classifying the mean-field equilibrium we compute the cross-sectional distribution of agents' inventories and the dependence of price distribution on the amount of shared information among the agents.
\end{abstract}

%
%

\section{Introduction}
\label{sec:intro}

A significant proportion of trading performed in modern markets is done by computer algorithms with some reports giving figures of up to 80\% of trades in some markets (see \cite{kaya2016high} and \cite{bigiotti2018optimizing}). Many of these algorithms are used to execute strategies that manage inventory, for example to rebalance a portfolio or achieve a desired hedge ratio. Others may also be speculative, executing trades based on predictions of market behaviour. When a trading strategy is designed based on speculation, trade executions are typically based on a trade signal, which indicates that the value of the asset at a future time will be predictably different from its present value. Exploiting this predicted difference offers the possibility of attaining a profit.

The trade executions which are conducted in a market will also have impact on the dynamics of the market itself, and when several market participants implement trading strategies simultaneously they will inevitably influence the behaviour of each other. Thus, in order for a strategy to be designed to perform executions in an optimal manner, the trade signals of other market participants should also be taken into account. The analysis of such a system quickly leads to a high dimensional problem, but framing the system in terms of a mean-field game allows for further tractability.

In this work, we consider how an agent will base his trades through time on an observed trade signal. In general, a trade signal can be an abstract quantity which dictates tendencies of market dynamics as in \cite{donnelly2018optimal}, or it can be treated as a direct valuation adjustment of the asset compared to the prevailing midprice as in \cite{lehalle2017limit}. We take the latter approach in that the agent's trade signal is directly transformed into a monetary quantity to be added to the asset price to give a subjective valuation. The agent also controls his trading to manage the risk of his position at the end of the trading horizon as an acknowledgement that his assessment of value may not be accurate. This single agent framework is structured similar to \cite{almgren2001optimal} with the addition of the observed trade signal. Beyond the single agent problem we investigate a market in which several agents are trading, each of whom is observing a trade signal that dictates their subjective valuation of the asset. In this setting, in order to fully optimize the profits they seek to extract from their trade signal, the agents must take into account the aggregate behaviour of other market participants. In order to maintain tractability of the model when there are a large number of agents, we use a mean-field game approach. Similar approaches with respect to optimal execution and algorithmic trading are conducted in \cite{huang2015mean} and \cite{casgrain2018mean}.

Additionally, the work \cite{casgrain2018beliefs} also considers a mean-field approach where subsets of agents have different views of the asset price. While in that work the differing beliefs are modeled as agents behaving according to different probability measures, we work with a fixed probability where each agent observes a different trade signal process which forms their subjective valuation. We consider two specifications of how the trade signals of different agents relate to each other. First, we suppose that they all share the same trade signal which is correlated with the asset's midprice. Then any differences in their trading behaviour will come from different initial inventories. Second, we suppose that they all have different trade signals, each correlated with the asset's midprice with a structure that also dictates the nature of the correlation between any pair of signals. In the second case, if all correlations are equal to $1$ and the initial signal states are identical across all agents, then the model reverts to the first case of the shared signal.

In order for each agent to optimize their trades they must take into account the order flow submitted by other agents. This is similar to models proposed in \cite{cartea2016closed} and \cite{cartea2016incorporating} in which net order flow is given by an exogenous process. In this work, agents make an assumption about the net order flow process before conducting their individual optimization. A mean-field equilibrium is reached by finding a fixed point of the net order flow process. If all agents trade according to the mean-field equilibrium, then we can quantify the relationship between the correlation of their trade signals and the cross sectional distribution of their inventories through time. This also allows us to study how the overall price impact on the asset depends on how much information is shared between agents as dictated by the correlation between their signals.

In Section \ref{sec:one-agent} we propose our model of a single agent optimizing trades with the observation of a trade signal and analyze the agent's optimal trading strategy. In Section \ref{sec:multiple-agents-model} we propose our model when there are several agents trading simultaneously. The section is broken down into subsections depending on whether the agents share the same signal (Section \ref{sec:shared}) or have separate but correlated signals (Section \ref{sec:separate}). In Section \ref{sec:distribution} we compute the cross-sectional joint distribution of the inventory and trade signal across all agents and show how this depends on the correlation of the trade signals. We also compute how this correlation affects the variance of the asset price. We conclude in Section \ref{sec:conclusion}.

%
%

\section{One Agent}
\label{sec:one-agent}


\subsection{Model without interaction}
\label{sec:model-1}
In this section, we consider a single agent that wishes to maximize the value of a portfolio at a future time $T<\infty$ through trading a single risky asset with temporary and permanent price impact. The agent has his own subjective valuation of the asset which may be different from the market price. At each point in time the agent chooses a rate at which he trades shares of the asset via a process $\nu = (\nu_t)_{0 \leq t \leq T}$. Thus, denoting the agent's inventory holdings by $Q^\nu = (Q^\nu_t)_{0 \leq t \leq T}$, it changes according to
\begin{align}
	\dd Q^\nu_t &= \nu_t \dd t\,, & Q^\nu_0 &= Q_0\,.
\end{align}
The sign of $\nu_t$ indicates the types of trades to be submitted with positive values representing buy orders and negative values representing sell orders. The market view of asset value is denoted $S^\nu = (S^\nu_t)_{0 \leq t \leq T}$, which will be subject to a permanent price impact due to the agent's trades. We model permanent price impact through a linear relation and thus the market view of the asset satisfies
\begin{align}
	\dd S^\nu_t &= (\mu + b\nu_t) \dd t + \sigma \dd W_t\,, & S^\nu_0 &= S_0\,,
\end{align}
where $W = (W_t)_{0\leq t \leq T}$ is a Brownian motion. Temporary price impact is also accounted for by modeling the price of trades as being dependent on the speed of trading. Given that the speed of trading at time $t$ is $\nu_t$, the price at which the transaction occurs is
\begin{align}
	\widehat{S}^\nu_t = S^\nu_t + k \nu_t\,.
\end{align}
Thus, the agent's cash process, denoted $X^\nu = (X^\nu_t)_{0 \leq t \leq T}$ satisfies
\begin{align}
	\dd X^\nu_t &= -\widehat{S}^\nu_t\nu_t \dd t\,, & X^\nu_0 &= X_0\,.
\end{align}
We suppose that the agent observes a trade signal which means his own subjective view of the asset's value differs from the market view of the asset's value $S^\nu$. We will denote the difference between the subjective value and the market view of value by $V^\nu = (V^\nu_t)_{0 \leq t \leq T}$, which satisfies
\begin{align}
	\dd V^\nu_t &= -(\beta V^\nu_t + \gamma \nu_t) \dd t + \eta \dd Z_t\,, & V^\nu_0 &= V_0\,,
\end{align}
where $Z = (Z_t)_{0\leq t \leq T}$ is a Brownian motion correlated with $W$ with constant correlation parameter $\rho$. At time $t$ when the market value of the asset is $S_t^\nu$, the agent's subjective valuation of the asset due to the trade signal is $S_t^\nu + V_t^\nu$. Even though $V_t^\nu$ represents a valuation adjustment due to the trade signal, we will refer to the process $V^\nu$ as the trade signal. The dynamics of $V^\nu$ imply that the trade signal is influenced by the trading of the agent, this coming from the term $-\gamma \nu_t$. This is to capture the effect of diminishing the trade signal's strength when the agent acts upon the information that it provides. We also include the mean reverting term $-\beta\,V_t^\nu$ because the value imparted by the agent's trade signal will not necessarily last for long periods of time on average, even if not acted upon explicitly.


\subsection{Agent's Objective Functional and HJB Equation}
\label{sec:objective-one}
Throughout the remainder of Section \ref{sec:one-agent} we work with a complete and filtered probability space $(\Omega, (\mathcal{F}_t)_{0\leq t \leq T},\mathbb{P})$ where $(\mathcal{F}_t)_{0\leq t \leq T}$ is the standard augmentation of the natural filtration generated by $(W_t,Z_t)_{0\leq t\leq T}$ and $(S_0,Q_0,X_0,V_0)$. We suppose the agent wishes to maximize the following functional of $\nu$:
\begin{align}
J(\nu)
	&:=	\Eb \Big( X_T^\nu + Q_T^\nu (S_T^\nu + V_T^\nu) - \alpha (Q_T^\nu)^2\Big) , 
\end{align}
where the control process $\nu$ must be taken from the admissible set $\mathcal{N}$ which consists of $\mathcal{F}$-predictable processes such that $\mathbb{E}[\int_0^T \nu_t^2\,dt]<\infty$.
The first term in the expectation $X_T^\nu$ is the amount of cash on hand at time $T$.
The second term in the expectation $Q_T^\nu (S_T^\nu + V_T^\nu)$ is the agent's assessment of the value of his inventory holdings at time $T$. The third term $-\alpha (Q_T^\nu)^2$ behaves as a risk control term and is present because the agent acknowledges that his valuation due to the trade signal may not be completely accurate. This term helps to ensure that he does not acquire very large inventory positions due to the the risk of being incorrect. Let us define the agent's \textit{value function} $H$ as follows
\begin{align}
H(t,x,q,S,V)
	&:=	\sup_{\nu \in \Nc} \Eb_{t,x,q,S,V} \Big( X_T^\nu + Q_T^\nu (S_T^\nu + V_T^\nu) - \alpha (Q_T^\nu)^2 \Big) . \label{def:H}
\end{align}
The Hamilton-Jacobi-Bellman (HJB) partial differential equation (PDE) associated with $H$ is
\begin{align}
\d_t H + \sup_{\nu \in \Rb} ( \Ac^\nu H  )
	&=	0 , &
H(T,x,q,S,V)
	&=	x + q(S+V) - \alpha q^2 , \label{eq:hjb-pde}
\end{align}
where  the operator $\Ac^\nu$ is given by
\begin{align}
\Ac^\nu
	&=	-(S+k\nu)\nu\d_x + \nu\d_q + (\mu + b\nu)\d_S - (\beta V + \gamma \nu) \d_V + \frac{1}{2}\sigma^2\d_{SS} + \frac{1}{2}\eta^2\d_{VV} + \rho\sigma\eta\d_{SV}\,.
\end{align}


\subsection{Solving the HJB PDE}
\label{sec:hjb-pde}
In this section, we express the solution to the HJB PDE \eqref{eq:hjb-pde} in terms of a system of coupled ODEs.
\begin{proposition}
	Suppose $c_1,\dots,c_6:[0,T]\rightarrow\mathbb{R}$ satisfy the following system of ODEs with terminal conditions:
	\begin{align}
		c'_1 + \eta^2c_5 + \frac{(c_2-\gamma c_3)^2}{4k} &= 0\,,						& c_1(T) &= 0\,,\\
		c'_2 + \mu + \frac{(c_2-\gamma c_3)(b + 2 c_4 - \gamma c_6)}{2k} &= 0\,, 		& c_2(T) &= 0\,, \label{eqn:prop-c2}\\
		c'_3 - \beta c_3 + \frac{(c_2 - \gamma c_3)(c_6 - 2\gamma c_5)}{2k} &= 0\,,		& c_3(T) &= 0\,, \label{eqn:prop-c3}\\
		c'_4 + \frac{(b + 2c_4 - \gamma c_6)^2}{4k} &= 0\,,								& c_4(T) &= -\alpha\,,\\
		c'_5 - 2\beta c_5 + \frac{(c_6 - 2\gamma c_5)^2}{4k} &= 0\,,					& c_5(T) &= 0\,,\\
		c'_6 - \beta c_6 + \frac{(c_6 - 2\gamma c_5)(b+2c_4-\gamma c_6)}{2k} &= 0\,, 	& c_6(T) &= 1\,. 
	\end{align}
	Then the value function $H$ is given by
	\begin{align}
		H(t,x,q,S,V) &= x + qS + h(t,q,V)\,,\\
		h(t,q,V) &= c_1(t) + c_2(t)q + c_3(t)V + c_4(t)q^2 + c_5(t)V^2 + c_6(t)qV\,,
	\end{align}
	and the optimal trading strategy in feedback form is
	\begin{align}
		\nu^*(t,q,V) = \frac{c_2(t)-\gamma c_3(t) + (b + 2c_4(t) - \gamma c_6(t))q + (c_6(t) - 2\gamma c_5(t))V}{2k}\,.
	\end{align}
\end{proposition}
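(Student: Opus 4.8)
The plan is to proceed by the standard verification argument for HJB equations: first guess the ansatz, then reduce the PDE to the ODE system, then confirm the solution is optimal. Concretely, I would substitute the separated form $H(t,x,q,S,V) = x + qS + h(t,q,V)$ into the HJB PDE \eqref{eq:hjb-pde}. The point of peeling off $x + qS$ is that it captures exactly the terms in $\Ac^\nu$ that would otherwise obstruct separation: $-(S+k\nu)\nu\,\d_x$ contributes $-(S+k\nu)\nu$, while $\nu\,\d_q(qS) = \nu S$ and $(\mu+b\nu)\,\d_S(qS) = (\mu+b\nu)q$, so the bare $S\nu$ terms cancel and no $\d_{SS}$ or $\d_{SV}$ term survives against $x+qS$. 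What remains is an equation purely in $(t,q,V)$ for $h$, namely $\d_t h + \mu q + \sup_{\nu}\bigl(-k\nu^2 + \nu\,\d_q h - (\beta V+\gamma\nu)\d_V h + b\nu q\bigr) + \tfrac12\eta^2\d_{VV} h = 0$, with terminal condition $h(T,q,V) = q V - \alpha q^2$.

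Next I would carry out the pointwise maximization over $\nu\in\Rb$. The expression inside the supremum is concave quadratic in $\nu$ (the coefficient of $\nu^2$ is $-k<0$), so the maximizer is found by setting the derivative to zero: $-2k\nu + \d_q h + bq - \gamma\,\d_V h = 0$, giving $\nu^* = \bigl(\d_q h + bq - \gamma\,\d_V h\bigr)/(2k)$. Substituting $\nu^*$ back produces the term $\bigl(\d_q h + bq - \gamma\,\d_V h\bigr)^2/(4k)$. Then I would insert the quadratic ansatz $h(t,q,V) = c_1 + c_2 q + c_3 V + c_4 q^2 + c_5 V^2 + c_6 qV$, compute $\d_q h = c_2 + 2c_4 q + c_6 V$ and $\d_V h = c_3 + 2c_5 V + c_6 q$, so that $\d_q h + bq - \gamma\,\d_V h = (c_2 - \gamma c_3) + (b + 2c_4 - \gamma c_6)q + (c_6 - 2\gamma c_5)V$; this matches the claimed feedback form for $\nu^*$ once divided by $2k$. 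Expanding the squared term, collecting the $\d_{VV}$ and $-\beta V\,\d_V$ contributions, and matching coefficients of $1$, $q$, $V$, $q^2$, $V^2$, and $qV$ separately yields exactly the six ODEs displayed, while matching the terminal data $qV - \alpha q^2$ against $c_1(T)+\cdots+c_6(T)qV$ gives $c_4(T) = -\alpha$, $c_6(T) = 1$, and the rest zero.

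Finally, to convert this formal computation into a proof that $H$ is the value function, I would run a verification theorem. Assuming the ODE system has a solution on $[0,T]$ (which I would either take as a hypothesis, as the statement does via ``Suppose $c_1,\dots,c_6$ satisfy\ldots'', or establish by a short local-existence plus no-blow-up argument exploiting the Riccati structure: $c_4,c_5,c_6$ form a closed subsystem whose quadratic nonlinearities can be controlled), I would apply It\^o's formula to $H(t,X^\nu_t,Q^\nu_t,S^\nu_t,V^\nu_t)$ for an arbitrary admissible $\nu\in\Nc$. Because $H$ solves the HJB PDE, the drift term is $\d_t H + \Ac^\nu H \le \d_t H + \sup_{\nu'}\Ac^{\nu'} H = 0$, and the admissibility condition $\Eb[\int_0^T \nu_t^2\,dt]<\infty$ together with the linear-in-$(x,S)$ and quadratic-in-$(q,V)$ growth of $H$ ensures the stochastic integral is a true martingale (so its expectation vanishes), giving $\Eb_{t,\ldots}[H(T,\cdots)] \le H(t,x,q,S,V)$, i.e. $J$-type expectation $\le H$. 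Equality holds along $\nu^*$ since then the drift is exactly zero; one must check $\nu^*$ itself is admissible, which follows because $\nu^*$ is affine in $(Q^\nu,V^\nu)$ and these processes have finite second moments (they solve linear SDEs with the square-integrable control). The main obstacle I anticipate is not the coefficient-matching algebra, which is routine if tedious, but the verification's integrability bookkeeping — in particular confirming the martingale property of the It\^o correction term given that $H$ contains the unbounded factor $S$ multiplying $q$, and confirming the closed-loop SDE for $(Q^{\nu^*},V^{\nu^*})$ is well-posed with square-integrable solution so that $\nu^*\in\Nc$.
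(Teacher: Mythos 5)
Your proposal is correct and follows essentially the same route as the paper, whose entire proof is the phrase ``shown by direct substitution into the HJB PDE'': your separation $H = x + qS + h(t,q,V)$, the pointwise maximization giving $\nu^* = (\d_q h + bq - \gamma\,\d_V h)/(2k)$, and the coefficient matching reproduce the six ODEs and terminal conditions exactly. The verification-theorem layer you add (It\^o's formula, martingale/integrability checks, admissibility of the closed-loop $\nu^*$) goes beyond what the paper records but is consistent with, and strengthens, its argument.
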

\begin{proof}
	This is shown by direct substitution into \eqref{eq:hjb-pde}.
\end{proof}
\begin{proposition}\label{prop:symmetry1}
	If $\mu = 0$ then $c_2 = c_3 \equiv 0$.
\end{proposition}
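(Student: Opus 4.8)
The plan is to exploit the triangular (cascade) structure of the ODE system together with a uniqueness argument for linear ODEs. First I would observe that the equations for $c_4, c_5, c_6$ form a closed subsystem: the right-hand sides of the ODEs for $c_4'$, $c_5'$, $c_6'$ depend only on $c_4, c_5, c_6$ (and the fixed constants $b, k, \gamma, \beta$), not on $c_1, c_2, c_3$. Hence, under the standing hypothesis of the proposition that a solution $c_1,\dots,c_6$ exists on $[0,T]$, the functions $c_4, c_5, c_6$ are determined independently of $\mu$ and are, in particular, continuous on $[0,T]$.

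Next I would rewrite the pair of equations \eqref{eqn:prop-c2}--\eqref{eqn:prop-c3} as an affine linear system for the vector $(c_2, c_3)^\top$. Writing $p(t) := \tfrac{b + 2c_4(t) - \gamma c_6(t)}{2k}$ and $r(t) := \tfrac{c_6(t) - 2\gamma c_5(t)}{2k}$, which are continuous on $[0,T]$ by the previous step, the system becomes
\begin{align}
	\begin{pmatrix} c_2 \\ c_3 \end{pmatrix}'
	= \begin{pmatrix} -p(t) & \gamma p(t) \\ -r(t) & \beta + \gamma r(t) \end{pmatrix}\begin{pmatrix} c_2 \\ c_3 \end{pmatrix} - \begin{pmatrix} \mu \\ 0 \end{pmatrix}\,,
	\qquad \begin{pmatrix} c_2(T) \\ c_3(T) \end{pmatrix} = \begin{pmatrix} 0 \\ 0 \end{pmatrix}\,.
\end{align}
When $\mu = 0$ the inhomogeneous term vanishes, so this is a homogeneous linear ODE with continuous (hence locally Lipschitz in the state, uniformly on $[0,T]$) coefficients and zero terminal data. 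The constant function $(c_2, c_3) \equiv (0,0)$ solves it, and by the Picard--Lindel\"of uniqueness theorem (or a direct Gr\"onwall estimate on $\|(c_2,c_3)\|$) it is the only solution. Therefore $c_2 \equiv c_3 \equiv 0$ on $[0,T]$, which is the claim.

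The only point requiring care is the assertion that $p$ and $r$ are genuinely continuous on the whole interval $[0,T]$, i.e.\ that the Riccati-type subsystem for $c_4, c_5, c_6$ does not blow up before $T$; this is exactly the existence hypothesis already imposed in the statement of the preceding proposition, so no further work is needed here. I expect no real obstacle beyond making this cascade structure explicit; the argument is essentially a one-line uniqueness observation once the system is organized correctly.
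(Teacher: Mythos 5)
Your argument is correct and is exactly the reasoning the paper leaves implicit: the paper's proof simply says the claim is ``immediate from equations \eqref{eqn:prop-c2} and \eqref{eqn:prop-c3}'', i.e.\ that when $\mu=0$ the pair $(c_2,c_3)$ solves a homogeneous linear ODE system with zero terminal data and hence vanishes identically. You have just spelled out that one-line observation (closed subsystem for $c_4,c_5,c_6$, then linear uniqueness/Gr\"onwall), so the approach is essentially the same.
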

\begin{proof}
	This is immediate from equations \eqref{eqn:prop-c2} and \eqref{eqn:prop-c3}.
\end{proof}

Proposition \ref{prop:symmetry1} is a result of symmetry in the model when the unaffected market value of the asset is a martingale. In this circumstance, the agent's value function remains unchanged if the underlying state variables are transformed according to $(q,S,V)\mapsto(-q,-S,-V)$. In addition, we also see $\nu^*(t,-q,-V) = -\nu^*(t,q,V)$. This is an expected result because if the dynamics possess enough symmetry, then the agent should place equal value on a long position in the asset as they would on a short position of equal magnitude, as long as his future projection of the total value of his current holdings is the same.

\subsection{Numerical Experiments}\label{sec:numerical-experiments-individual}

Of particular interest is how the trading strategy depends on the values of $Q_t$ and $V_t$. The effect of these underlying processes on the trading strategy can be directly quantified by the corresponding loadings. 
To this end, observe that $\nu^*$ can be written as
\begin{align}
		\nu^*(t,q,V) = \frac{c_2(t)-\gamma c_3(t)}{2k} + \nu^*_q(t) q + \nu^*_V(t)  V \, ,
\end{align}
where $\nu^*_q$ and $\nu^*_V$ are defined by
\begin{align}
	\nu^*_q(t) &= \frac{b + 2 c_4(t) - \gamma c_6(t)}{2k}\,, & \nu^*_V(t) &= \frac{c_6(t) - 2bc_5(t)}{2k}\,.
\end{align}

\begin{figure}
	\begin{center}
		{\includegraphics[trim=140 240 140 240, scale=0.48]{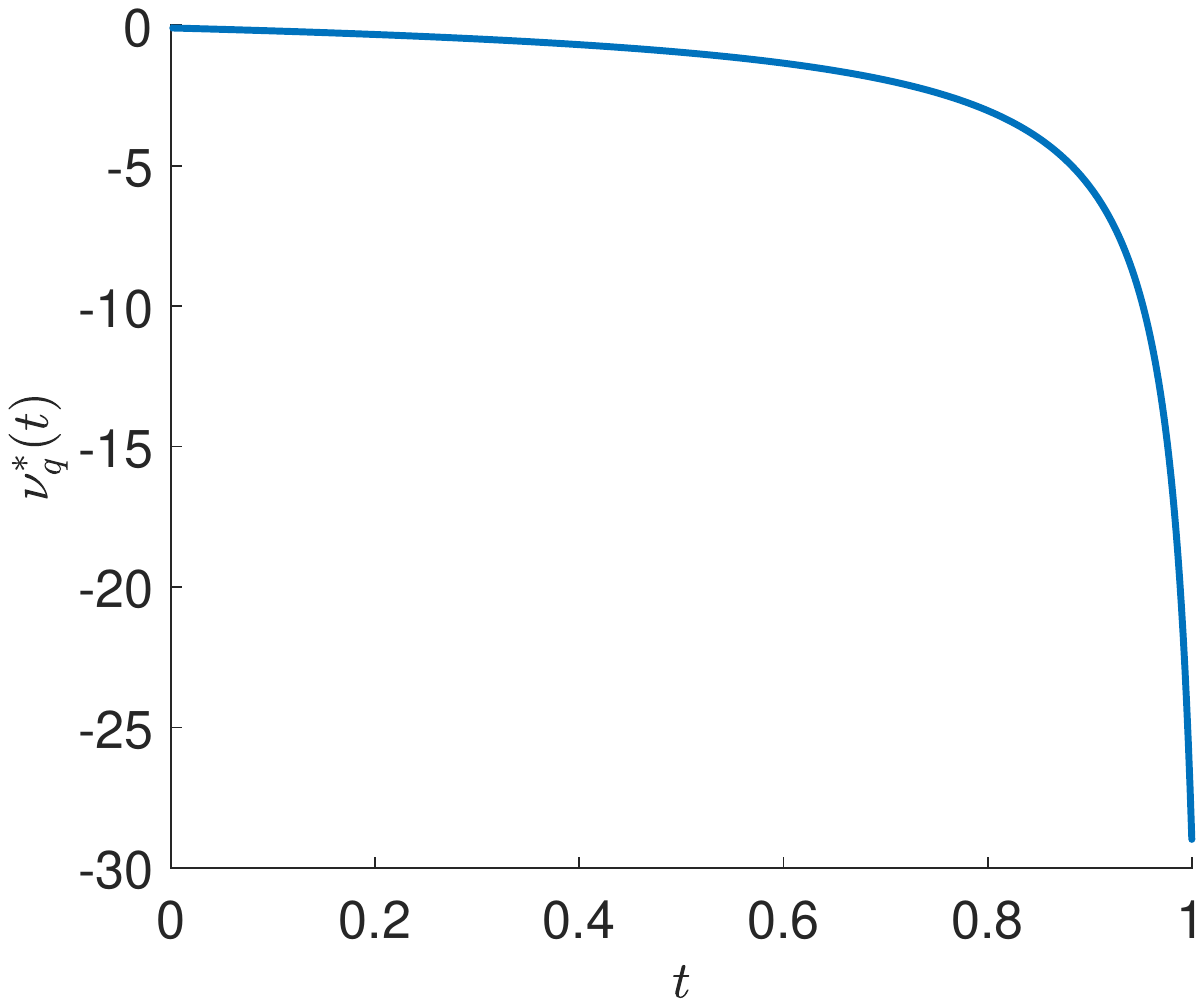}}\hspace{10mm}
		{\includegraphics[trim=140 240 140 240, scale=0.48]{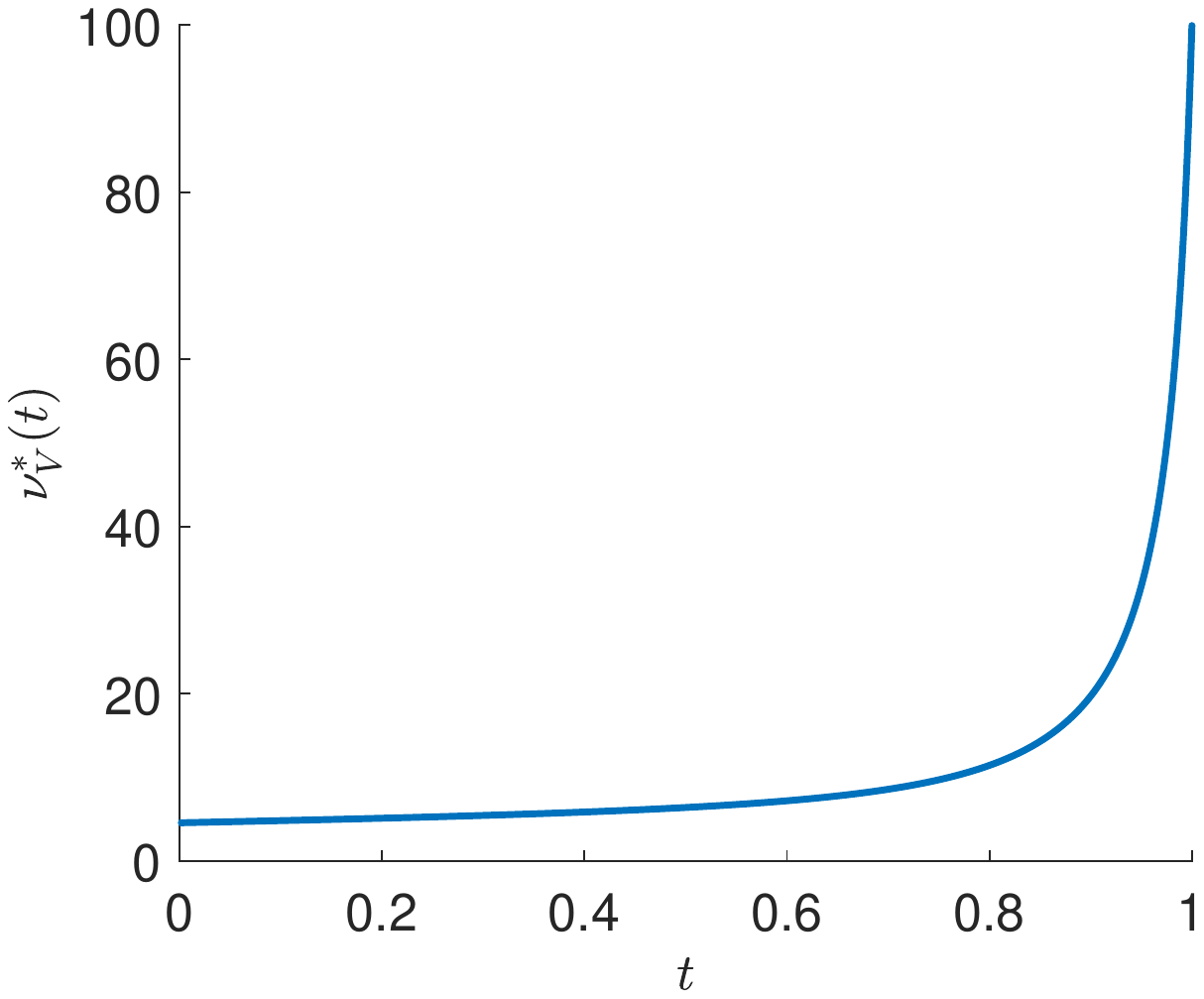}}
	\end{center}
	\vspace{-1em}
	\caption{Optimal loadings on $Q_t$ and $V_t$. Parameters used are $\mu = 0$, $\sigma = 1$, $\eta = 0.5$, $\beta = 1$, $\gamma = 0.1$, $\rho = 0.3$, $b = 10^{-2}$, $k = 5\cdot10^{-3}$, $\alpha = 0.1$, and $T = 1$. \label{fig:trade-loadings-individual}}
\end{figure}

In Figure \ref{fig:trade-loadings-individual} we plot the loadings on $Q_t$ and $V_t$ for the optimal trading strategy $\nu^*$. The behaviour we see in the left panel is typical of this type of model for optimal execution. In particular, the loading on $Q_t$ is seen to grow to a large negative value as $t\rightarrow T$. This is because of the agent's terminal risk control represented by $\alpha Q_T^2$ and the relevance of this term becomes stronger as time approaches the horizon of the trading period. 
Although the loading shown above is typical, we will see below that the contribution of inventory towards trading speed, and indeed the total trading speed, exhibits behaviour which is not typically seen in this style of optimal execution.

There is also an intuitive explanation for why the loading on $V_t$ is most significant close to time $T$, and it is because $V_t$ is more likely to experience a change in sign if there is a longer time to the horizon of the trading period. If the agent is overeager in his attempts to extract profits early in the trading period due to exploiting the trade signal $V_t$, then he risks this quantity changing sign in which case his prior trades are in fact working against his goals. If this occurs then the agent would wish to reverse his trades, but the round trip involved in this task accumulates needless costs due to temporary price impact. By waiting until a time closer to $T$, the probability and magnitude of this type of sign change is significantly lowered, thus the agent prefers to wait before extracting profits.

In Figure \ref{fig:MC-individual} we show simulated paths of the optimal trading strategy broken down into the two main contributing components, as well as the total trading speed. The top left panel shows the graph of $\nu_q^*(t)Q^{\nu^*}_t$, the top center panel shows $\nu_V^*(t)V^{\nu^*}_t$, and the top left shows $\nu^*(t,Q^{\nu^*}_t,V^{\nu^*}_t)$. Here we see what may be considered atypical behaviour in an optimal execution program. Namely that all of the individual contributions to the trading speed, as well as the total trading speed, are concentrated towards the end of the trading period. In other optimal execution models, for example \cite{almgren2001optimal} and \cite{cartea2018hedging}, the loading on $Q_t$ becomes large as $t$ approaches $T$, but the magnitude of the contribution to the trading speed from inventory does not change significantly over the course of the trading period. The main contributing factor to this difference is related to the discussion of the previous paragraph. The agent's main source of profits is due to the value of the trade signal $V_t$, but this is only taken advantage of through trades that are submitted when the sign of $V_t$ is the same as the sign of $V_T$. This gives incentive for the agent to delay trades and perform most of his action at times close to $T$.

\begin{figure}
	\begin{center}
		{\includegraphics[trim=140 240 140 240, scale=0.4]{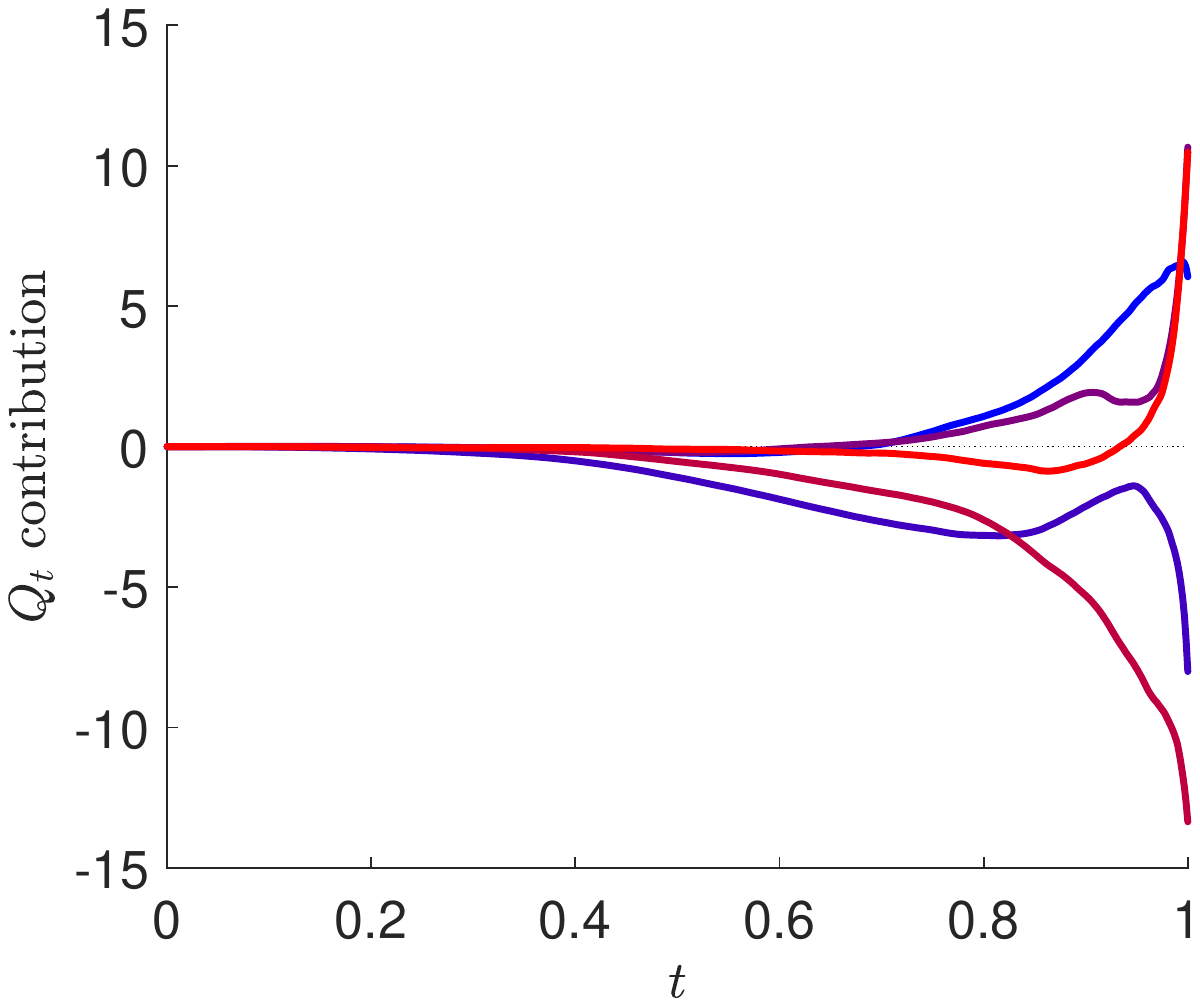}}\hspace{10mm}
		{\includegraphics[trim=140 240 140 240, scale=0.4]{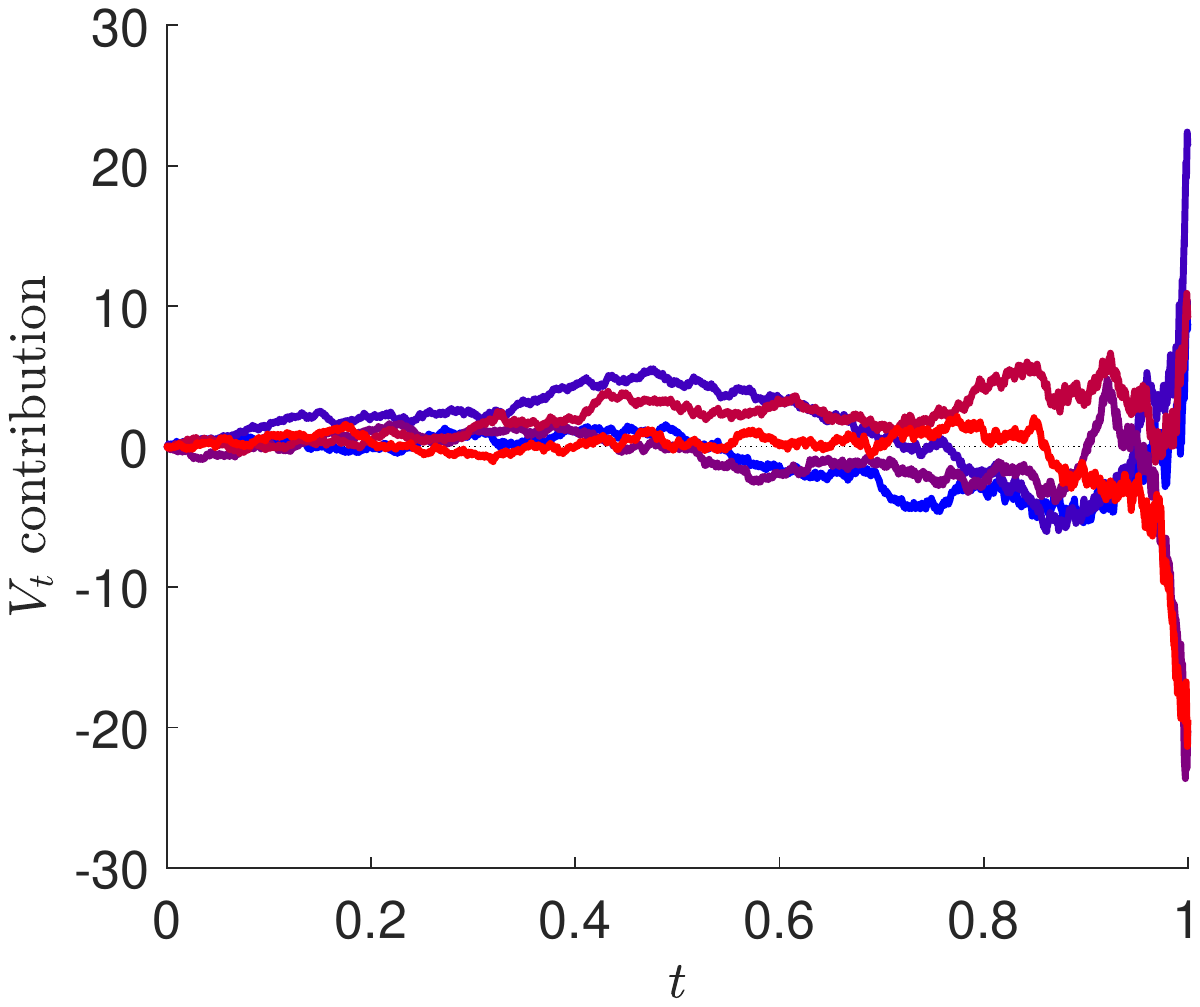}}\hspace{10mm}
		{\includegraphics[trim=140 240 140 240, scale=0.4]{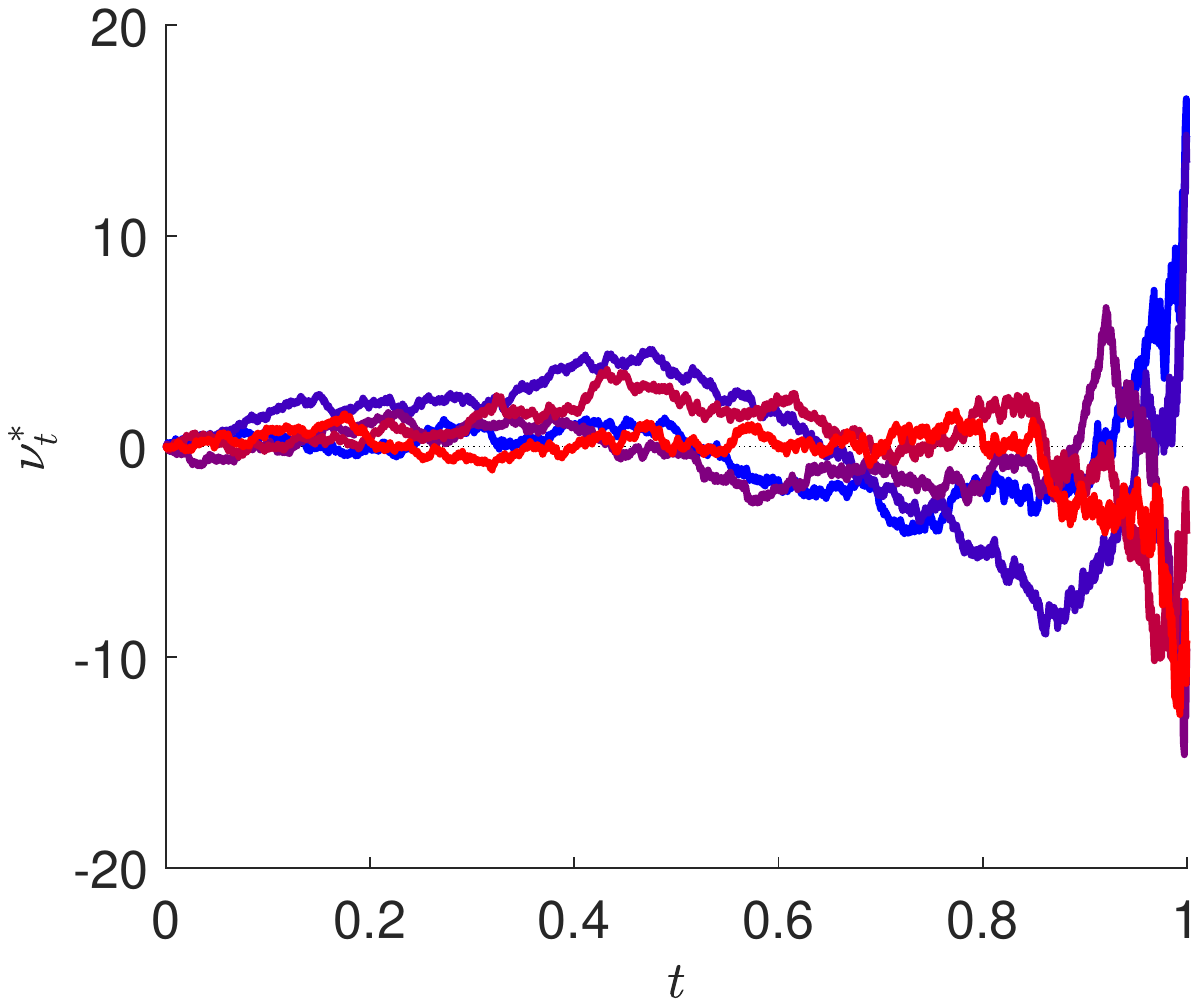}}\\
		{\includegraphics[trim=140 240 140 240, scale=0.4]{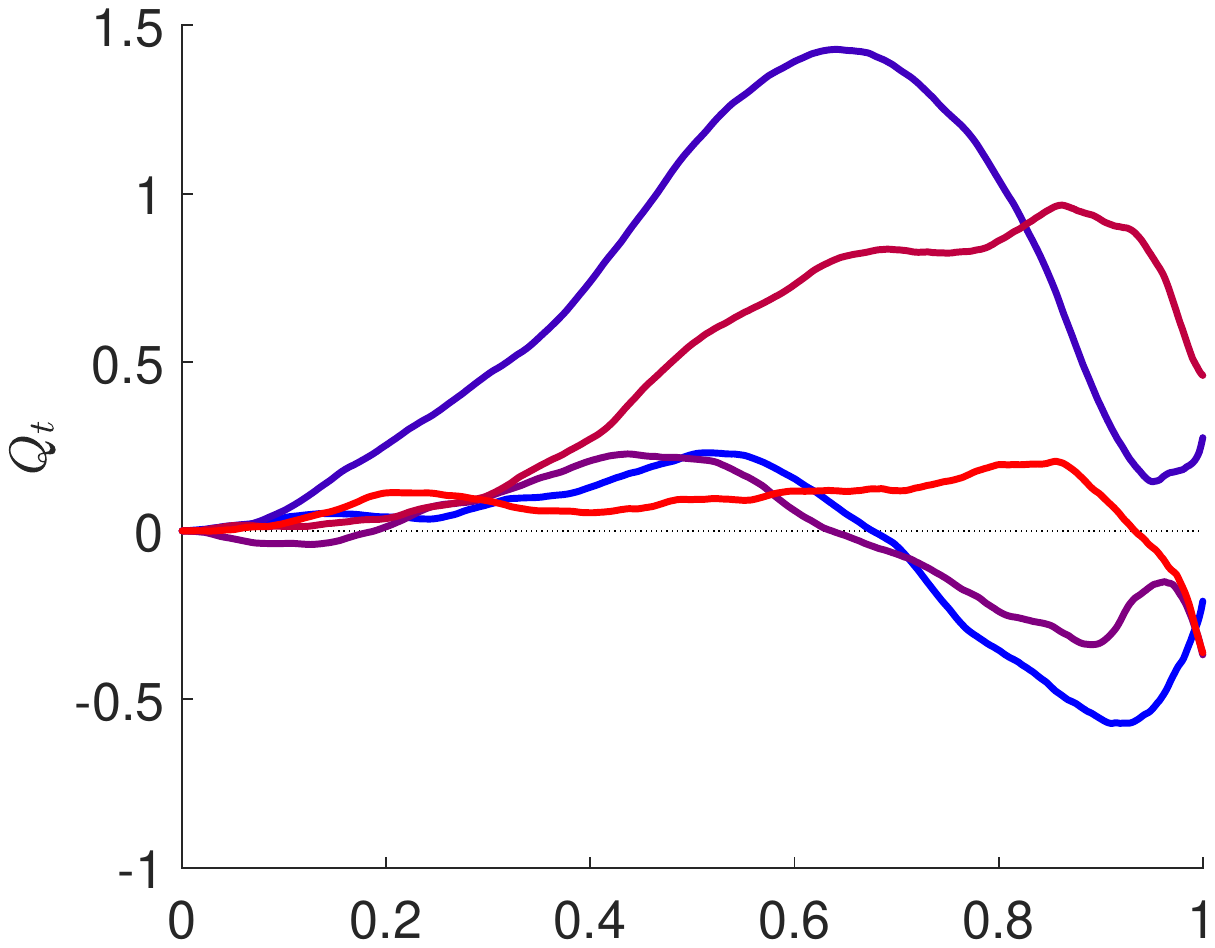}}\hspace{10mm}
		{\includegraphics[trim=140 240 140 240, scale=0.4]{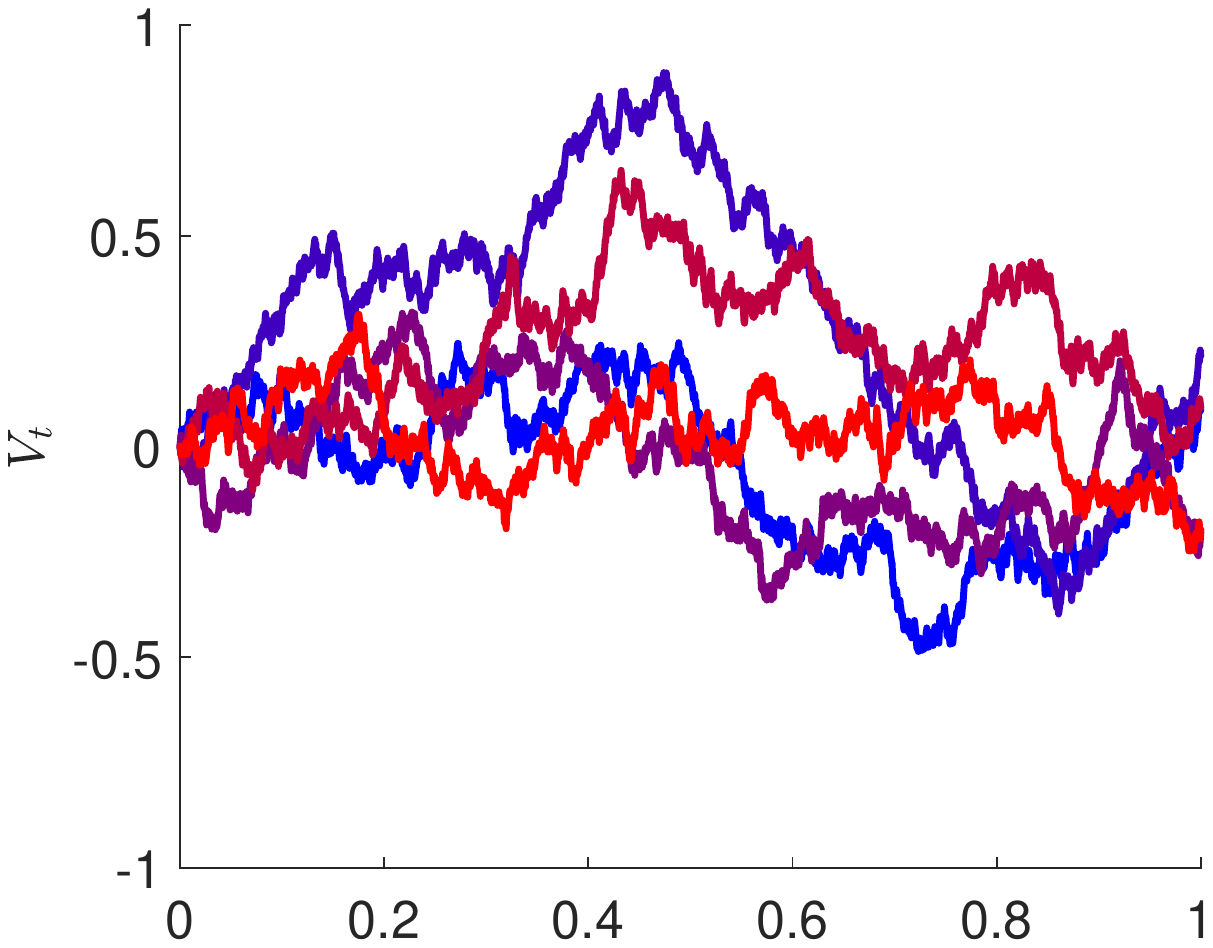}}\hspace{10mm}
	\end{center}
	\vspace{-1em}
	\caption{Contribution to trading speed of $Q_t$ and $V_t$, and total optimal trading speed $\nu_t$. Parameters used are $\mu = 0$, $\sigma = 1$, $\eta = 0.5$, $\beta = 1$, $\gamma = 0.1$, $\rho = 0.3$, $b = 10^{-2}$, $k = 5\cdot10^{-3}$, $\alpha = 0.1$, $T = 1$, $S_0 = 100$, $V_0 = 0$, and $Q_0 = 0$. \label{fig:MC-individual}}
\end{figure}

\section{Multiple Agents}

\subsection{Model with Interaction}\label{sec:multiple-agents-model}

Here we consider a model in which multiple agents trade with interaction. The interaction stems from the fact that price impact will account for the trading of all agents, not just an individual. Agents are indexed by $n\in\{1, \dots, N\}$ and each agent has his own control process $\nu^n = (\nu^n_t)_{0 \leq t \leq T}$. As before the control process represents the rate of trading for agent $n$, thus the inventory holdings of agent $n$ denoted by $Q^{n,\nu^n} = (Q^{n,\nu^n}_t)_{0 \leq t \leq T}$
changes according to
\begin{align}
	\dd Q^{n,\nu^n}_t &= \nu^n_t \dd t\,, & Q^{n,\nu^n}_0 &= Q^n_0\,,
\end{align}
where we assume each $Q_0^n$ is independent from all other variables with finite expectation and variance. 
The market view of the asset value is denoted $S^{\bar{\nu}} = (S^{\bar{\nu}}_t)_{0 \leq t \leq T}$ and changes according to
\begin{align}
	\dd S^{\bar{\nu}}_t &= (\mu + b\bar{\nu}_t) \dd t + \sigma \dd W_t\,, & S^{\bar{\nu}}_0 &= S_0\,,\label{eqn:S_shared}
\end{align}
where $\bar{\nu}_t$ is the average trading rate of all agents at time $t$
\begin{align}
	\bar{\nu}_t &:= \frac{1}{N}\sum_{n=1}^N \nu_t^n \, .
\end{align}
Temporary price impact incurred by agent $n$ depends on his own rate of trading as well as that of the other agents such that the transaction price for agent $n$ is
\begin{align}
	\widehat{S}^{\nu^n,\bar{\nu}}_t &= S^{\bar{\nu}}_t + k\nu^n_t + \bar{k}\bar{\nu}_t\,.
\end{align}
Thus, the cash process of agent $n$, denoted $X^{n,\nu^n,\bar{\nu}} = (X^{n,\nu^n,\bar{\nu}}_t)_{0 \leq t \leq T}$ changes according to
\begin{align}
		\dd X^{n,\nu^n,\bar{\nu}}_t &= -\widehat{S}^{\nu^n,\bar{\nu}}_t\nu^n_t \dd t\,, & X^{n,\nu^n,\bar{\nu}}_0 &= X^n_0\,.
\end{align}
Lastly, it will be useful to define the average inventory holdings of all agents $\bar{Q}^{\bar{\nu}} = (\bar{Q}^{\bar{\nu}}_t)_{0 \leq t \leq T}$ which is given by
\begin{align}
	\bar{Q}^{\bar{\nu}}_t &:= \frac{1}{N}\sum_{n=1}^N Q^{n,\nu^n}_t\,.\label{eqn:Q_bar-definition}
\end{align}
It appears from the definition that $\bar{Q}^{\bar{\nu}}$ depends on each individual $\nu^n$, but given $\bar{Q}_0$ we have
\begin{align}
	\dd \bar{Q}^{\bar{\nu}}_t 
	& = \frac{1}{N}\sum_{n=1}^N \dd Q_t^{n,\nu^n} \\
	& = \frac{1}{N}\sum_{n=1}^N \nu^n_t \dd t\\
	& =\bar{\nu}_t \dd t\,,
\end{align}
which justifies the dependence on only $\bar{\nu}$. We remark here that based on the definition of $\bar{Q}^{\bar{\nu}}_t$ in \eqref{eqn:Q_bar-definition}, if we assume in addition that all $Q_0^n$ are independent and identically distributed then when we directly handle the limiting case $N\rightarrow\infty$ we have $\bar{Q}^{\bar{\nu}}_0 = \Eb[Q_0^{n,\nu^n}]$. From now on we will make this assumption on the collection $(Q_0^n)_{n\in\mathbb{N}}$.

\subsection{Shared Subjective View of Asset Value}\label{sec:shared}
First we consider a model in which each agent receives the same trade signal. When trading takes place in an order book market, some examples of possible shared signals are the order book imbalance or micro-price, quantities which are known to convey information about the future dynamics of the asset. The common trade signal is denoted by $\bar{V}^{\bar{\nu}} = (\bar{V}^{\bar{\nu}}_t)_{0 \leq t \leq T}$ and changes according to
\begin{align}
	\dd \bar{V}^{\bar{\nu}}_t &= -(\beta \bar{V}_t^{\bar{\nu}} + \bar{\gamma}\bar{\nu}_t) \dd t + \eta \dd Z_t\,, & \bar{V}^{\bar{\nu}}_0 &= \bar{V}_0\,, \label{eqn:V_shared}
\end{align}
where $Z = (Z_t)_{0\leq t \leq T}$ is a Brownian motion correlated with $W$ with constant correlation parameter $\rho$. The $N$ agents under consideration do not represent all participants in the market, only the ones which are acting based on the trade signal $\bar{V}^{\bar{\nu}}$. This is why there is a prevailing market view of the value $S^{\bar{\nu}}$ which is different from the subjective valuation due to the trade signal of $S^{\bar{\nu}}+\bar{V}^{\bar{\nu}}$. We assume that the agents have a method of distinguishing between order flow which is based on the information content contained in the trade signal versus other order flow. High-frequency traders often have methods of distinguishing between informed and uninformed order flow, or trades that are part of a large meta-order. The ability to make this identification can be interpreted as a filtering problem (see for example \cite{kyle1985continuous} and its numerous extensions), but incorporating that directly into this model is beyond the scope the paper.

Each agent attempts to maximize his own expected future wealth given that the trading strategies of all other agents are fixed. We let $\nu^{-n}$ denote the collection of trading strategies for all agents except agent $n$. Then for a fixed $\nu^{-n}$, agent $n$ wishes to maximize the functional
\begin{align}
	J(\nu^n;\nu^{-n}) &= \Eb \Big( X_T^{n,\nu^n,\bar{\nu}} + Q_T^{n,\nu^n} (S_T^{\bar{\nu}} + \bar{V}_T^{\bar{\nu}})  - \alpha (Q_T^{n,\nu^n})^2\Big) \,.
\end{align}
Throughout Section \ref{sec:shared} we work with a complete and filtered probability space $(\Omega, (\mathcal{F}_t)_{0\leq t \leq T},\mathbb{P})$ where $(\mathcal{F}_t)_{0\leq t \leq T}$ is the standard augmentation of the natural filtration generated by $(W_t,Z_t)_{0\leq t\leq T}$ and the initial state $(S_0,(Q^n_0)_{n\in\mathbb{N}},(X^n_0)_{n\in\mathbb{N}},\bar{V}_0)$.

\subsubsection{HJB Equation and Consistency Condition}\label{sec:HJB-shared}

We do not attempt to solve the finite player game, rather we consider the limiting case $N\rightarrow\infty$ directly. Under this condition the average trading speed $\bar{\nu}$ is not affected by any one individual control $\nu^n$. Thus, fixing $\nu^{-n}$ is equivalent to fixing $\bar{\nu}$. In addition, we assume $\bar{\nu}_t = \bar{\nu}(t,\bar{Q}^{\bar{\nu}}_t,\bar{V}^{\bar{\nu}}_t)$ so that we remain within a Markovian framework. With a fixed function $\bar{\nu}$ we may define the value function for agent $n$ as
\begin{align}
	H^n(t,x,q,\bar{q},S,\bar{V};\bar{\nu}) &:= \sup_{\nu^n \in \Nc} \Eb_{t,x,q,\bar{q},S,\bar{V}} \Big( X_T^{n,\nu^n,\bar{\nu}} + Q_T^{n,\nu^n} (S_T^{\bar{\nu}} + \bar{V}_T^{\bar{\nu}}) - \alpha (Q_T^{n,\nu^n})^2\Big) , \label{def:Hn}
\end{align}
where the collection of admissible strategies $\mathcal{N}$ consists of $\mathcal{F}$-predicable processes such that $\mathbb{E}[\int_0^T (\nu^n_t)^2\,dt]<\infty$.

The value function in \eqref{def:Hn} has an associated HJB equation of the form
\begin{align}
	\d_t H^n + \sup_{\nu^n \in \Rb} ( \Ac^{\nu^n,\bar{\nu}} H^n  )
		&=	0 , &
	H^n(T,x,q,\bar{q},S,\bar{V};\bar{\nu})
	&=	x + q\bar{V} - \alpha q^2 , \label{eq:hjb-pde-shared}
\end{align}
where the operator $\Ac^{\nu^n,\bar{\nu}}$ is given by
\begin{align}
\Ac^{\nu^n,\bar{\nu}}
	&=	-(S+k\nu^n + \bar{k}\bar{\nu})\nu^n\d_x + \nu^n\d_q + \bar{\nu}\d_{\bar{q}} + (\mu + b\bar{\nu})\d_S - (\beta \bar{V} + \bar{\gamma} \bar{\nu})\d_{\bar{V}} + \frac{1}{2}\sigma^2\d_{SS} + \frac{1}{2}\eta^2\d_{\bar{V}\bar{V}} + \rho\sigma\eta\d_{S\bar{V}}\,.
\end{align}
Based on the form of the feedback control in the previous section, we make the ansatz
\begin{align}
	\bar{\nu}(t,\bar{q},S,\bar{V}) &= f_1(t) + f_2(t)\bar{q} + f_3(t)\bar{V}\,. \label{eq:nubar-form}
\end{align}
This ansatz will allow the problem to retain a linear-quadratic structure, but as a consequence the trading strategies in equilibrium will be linear. We do not claim that the equilibrium in general is unique, and there may exist non-linear trading strategies which form a mean-field Nash equilibrium, but we do not consider them in this work. With this ansatz the solution to the HJB equation \eqref{eq:hjb-pde-shared} along with the optimal control in feedback form can be characterized by a solution to a system of ODE's.

\begin{proposition}\label{prop:ODE_shared}
	Given $\bar{\nu}$ in \eqref{eq:nubar-form}, suppose $c_1,\dots,c_{10}:[0,T]\rightarrow\mathbb{R}$ satisfy the following system of ODEs with terminal conditions:
	\begin{align}
		c'_1 + f_1(c_3 - \bar{\gamma} c_4) + \eta^2 c_7 + \frac{(c_2 - \bar{k}f_1)^2}{4k} &= 0\,,										& c_1(T) &= 0\,,\label{eqn:shared-c1}\\
		c'_2 + \mu + f_1(b + c_8-\bar{\gamma} c_9) + \frac{(c_2-\bar{k}f_1)c_5}{k} &= 0\,,												& c_2(T) &= 0\,,\\
		c'_3 + f_1(2c_6 - \bar{\gamma} c_{10}) + f_2(c_3 - \bar{\gamma} c_4) + \frac{(c_2-\bar{k}f_1)(c_8-\bar{k}f_2)}{2k} &= 0\,,						& c_3(T) &= 0\,,\\
		c'_4 + f_1(c_{10} - 2\bar{\gamma} c_7) + f_3(c_3 - \bar{\gamma} c_4) - \beta c_4 + \frac{(c_2-\bar{k}f_1)(c_9-\bar{k}f_3)}{2k} &= 0\,,			& c_4(T) &= 0\,,\\
		c'_5 + \frac{c_5^2}{k} &= 0\,,																				& c_5(T) &= -\alpha\,,\\
		c'_6 + f_2(2c_6 - \bar{\gamma} c_{10}) + \frac{(c_8-\bar{k}f_2)^2}{4k} &= 0\,,													& c_6(T) &= 0\,,\\
		c'_7 + f_3(c_{10} - 2\bar{\gamma} c_7) - 2\beta c_7 + \frac{(c_9-\bar{k}f_3)^2}{4k} &= 0\,,									& c_7(T) &= 0\,,\\
		c'_8 + f_2(b+c_8-\bar{\gamma} c_9) + \frac{c_5(c_8-\bar{k}f_2)}{k} &= 0\,,														& c_8(T) &= 0\,,\\
		c'_9 + f_3(b+c_8-\bar{\gamma} c_9) - \beta c_9 + \frac{c_5(c_9-\bar{k}f_3)}{k} &= 0\,,											& c_9(T) &= 1\,,\\
		c'_{10} + f_2(c_{10}-2\bar{\gamma} c_7) + f_3(2c_6-\bar{\gamma} c_{10}) - \beta c_{10} + \frac{(c_8-\bar{k}f_2)(c_9-\bar{k}f_3)}{2k} &= 0\,,		& c_{10}(T) &= 0\,,\label{eqn:shared-c10}
	\end{align}
	Then the value function $H^n$ is given by
	\begin{align}
		H^n(t,x,q,\bar{q},S,\bar{V}) &= x + qS + h^n(t,q,\bar{q},\bar{V})\,,\\
		h^n(t,q,\bar{q},\bar{V}) &= c_1(t) + c_2(t)q + c_3(t)\bar{q} + c_4(t)\bar{V} \\ &\quad
			+ c_5(t)q^2 + c_6(t)\bar{q}^2 + c_7(t)\bar{V}^2 + c_8(t)q\bar{q} + c_9(t)q\bar{V} + c_{10}(t)\bar{q}\bar{V}\,,
	\end{align}
	and the optimal trading strategy in feedback form is
	\begin{align}
		\nu^{n*}(t,q,\bar{q},V) = \frac{c_2(t)-\bar{k}f_1(t)}{2k} + \frac{c_5(t)}{k}q + \frac{c_8(t) - \bar{k}f_2(t)}{2k}\bar{q} + \frac{c_9(t) - \bar{k}f_3(t)}{2k}\bar{V}\,. \label{eq:nustar-shared}
	\end{align}
\end{proposition}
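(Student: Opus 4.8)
The plan is to proceed exactly as in the single-agent result of Section~\ref{sec:hjb-pde}: postulate the stated separable form for $H^n$, reduce the HJB equation \eqref{eq:hjb-pde-shared} to a scalar equation for $h^n$, carry out the inner maximization over $\nu^n$ explicitly, substitute the quadratic ansatz for $h^n$, and match the coefficients of the monomials in $(q,\bar q,\bar V)$ to recover the ODE system \eqref{eqn:shared-c1}--\eqref{eqn:shared-c10}. A verification argument then upgrades this candidate from a solution of the HJB PDE to the value function \eqref{def:Hn}.

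First I would insert $H^n(t,x,q,\bar q,S,\bar V) = x + qS + h^n(t,q,\bar q,\bar V)$ into the operator $\Ac^{\nu^n,\bar\nu}$. Since $\d_x H^n = 1$, $\d_S H^n = q$, and $\d_{SS}H^n = \d_{S\bar V}H^n = 0$, the $S$-dependent contributions cancel pairwise; in particular $-(S+k\nu^n+\bar k\bar\nu)\nu^n\d_x H^n + \nu^n\d_q H^n$ reduces to $-(k\nu^n+\bar k\bar\nu)\nu^n + \nu^n\d_q h^n$, and $(\mu+b\bar\nu)\d_S H^n = (\mu+b\bar\nu)q$. Using $\d_t H^n = \d_t h^n$, the HJB equation becomes
\begin{align*}
	\d_t h^n &+ \sup_{\nu^n\in\Rb}\Big( -k(\nu^n)^2 + \nu^n\big(\d_q h^n - \bar k\bar\nu\big)\Big) + \bar\nu\,\d_{\bar q}h^n \\
	&\quad + (\mu+b\bar\nu)q - (\beta\bar V + \bar\gamma\bar\nu)\d_{\bar V}h^n + \thalf\eta^2\d_{\bar V\bar V}h^n = 0 .
\end{align*}
The bracketed expression is a concave quadratic in $\nu^n$, maximized at $\nu^{n*} = (\d_q h^n - \bar k\bar\nu)/(2k)$ with value $(\d_q h^n - \bar k\bar\nu)^2/(4k)$; inserting the quadratic form of $h^n$ together with the affine ansatz \eqref{eq:nubar-form} gives the feedback control \eqref{eq:nustar-shared} directly, since $\d_q h^n - \bar k\bar\nu = (c_2-\bar k f_1) + 2c_5 q + (c_8-\bar k f_2)\bar q + (c_9-\bar k f_3)\bar V$. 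Squaring this expression and collecting the coefficients of $1,q,\bar q,\bar V,q^2,\bar q^2,\bar V^2,q\bar q,q\bar V,\bar q\bar V$ in the resulting polynomial identity produces precisely the ten scalar ODEs; e.g. the pure-$q^2$ terms give $c_5' + c_5^2/k = 0$ and the constant terms give \eqref{eqn:shared-c1}. Finally, the terminal condition $H^n(T,\cdot) = x + q\bar V - \alpha q^2$ forces $h^n(T,q,\bar q,\bar V) = q\bar V - \alpha q^2$, i.e. $c_9(T)=1$, $c_5(T)=-\alpha$, and $c_i(T)=0$ otherwise, matching the stated terminal data. This shows the stated $H^n$ solves \eqref{eq:hjb-pde-shared}.

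It then remains to verify that this classical solution coincides with the value function. I would (i) check admissibility of $\nu^{n*}$: as the $c_i$ and $f_i$ are continuous on $[0,T]$, the processes $Q^{n,\nu^{n*}}$, $\bar Q^{\bar\nu}$, $\bar V^{\bar\nu}$ solve linear SDEs with bounded coefficients and hence have finite second moments uniformly on $[0,T]$, so $\Eb\int_0^T(\nu^{n*}_t)^2\,\dd t < \infty$; (ii) apply It\^o's formula to $s\mapsto H^n(s,X^{n,\nu^n,\bar\nu}_s,Q^{n,\nu^n}_s,\bar Q^{\bar\nu}_s,S^{\bar\nu}_s,\bar V^{\bar\nu}_s)$ for an arbitrary admissible $\nu^n$, noting that by the HJB equation the resulting drift is $\le 0$, with equality when $\nu^n = \nu^{n*}$; and (iii) show the stochastic-integral terms are genuine martingales using the polynomial growth of $H^n$ together with the second-moment bounds on the state (for a general admissible $\nu^n$ one first localizes and then passes to the limit). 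Taking expectations gives $J(\nu^n;\nu^{-n}) \le H^n(t,x,q,\bar q,S,\bar V)$ for every admissible $\nu^n$, with equality for $\nu^{n*}$, which is the assertion.

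I expect the only genuine obstacle to be the verification step, specifically the integrability bookkeeping needed to conclude that the local martingale arising from It\^o's formula has zero expectation for \emph{every} admissible control, not merely for the linear-feedback optimizer. The coefficient matching is lengthy but routine and entirely parallel to the single-agent case, and --- since Proposition~\ref{prop:ODE_shared} takes the $c_i$ as given --- no existence or no-blow-up statement for the ODE system is required.
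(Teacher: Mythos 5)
Your proposal is correct and follows essentially the same route as the paper, whose proof is exactly the direct substitution and coefficient matching you describe (the paper's argument is literally ``shown by direct substitution into \eqref{eq:hjb-pde-shared}''). The verification step you append (It\^o's formula plus the martingale/integrability bookkeeping to identify the HJB solution with the value function \eqref{def:Hn}) goes beyond what the paper records and is a sound addition rather than a deviation.
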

\begin{proof}
	This is shown by direct substitution into \eqref{eq:hjb-pde-shared}.
\end{proof}
In order for the trading strategy in \eqref{eq:nustar-shared} to yield a mean-field Nash equilibrium it is necessary that a consistency condition is satisfied. Because \eqref{eq:nustar-shared} is based on the ansatz that the average trading speed is given by \eqref{eq:nubar-form}, we must impose that when each agent uses the strategy \eqref{eq:nustar-shared} the resulting average trading speed is \eqref{eq:nubar-form}. Thus, we require
\begin{align}
	\lim_{N\rightarrow\infty}\frac{1}{N}\sum_{n=1}^N \nu^{n*}(t,q^n,\bar{q},\bar{V}) = \bar{\nu}(t,\bar{q},\bar{V})\,.
\end{align}
Substituting \eqref{eq:nubar-form} and \eqref{eq:nustar-shared} into this equation yields
\begin{align}
	f_1 &= \frac{c_2}{2k + \bar{k}}\,, &	f_2 &= \frac{2c_5+c_8}{2k + \bar{k}}\,, & f_3 &= \frac{c_9}{2k + \bar{k}}\,.\label{eqn:shared-f}
\end{align}
When solving equations \eqref{eqn:shared-c1} to \eqref{eqn:shared-c10} we shall always substitute \eqref{eqn:shared-f} first to guarantee that the optimal strategy in \eqref{eq:nustar-shared} represents an equilibrium. In a mean-field Nash equilibrium, the optimal strategy of agent $n$ can be written in a particular form demonstrated in the next proposition.
\begin{proposition}
	In equilibrium, the trading strategy of agent $n$ and the average trading rate of all agents are related by
	\begin{align}
		\nu^{n*}(t,q^n,\bar{q},\bar{V}) &= \frac{c_5(t)}{k}(q^n-\bar{q}) + \bar{\nu}(t,\bar{q},\bar{V})\,.
	\end{align}
\end{proposition}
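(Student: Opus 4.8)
The plan is to prove the identity by direct algebraic manipulation of the feedback control \eqref{eq:nustar-shared}, after imposing the consistency relations \eqref{eqn:shared-f} that characterize the mean-field equilibrium. The key observation is that, once \eqref{eqn:shared-f} holds, the constant term and the loadings on $\bar q$ and $\bar V$ in \eqref{eq:nustar-shared} can each be re-expressed in terms of the coefficients $f_1,f_2,f_3$ of $\bar\nu$ in \eqref{eq:nubar-form}, while the loading on $q^n$ is simply $c_5/k$; splitting $\frac{c_5}{k}q^n = \frac{c_5}{k}(q^n-\bar q) + \frac{c_5}{k}\bar q$ and regrouping then produces the claimed form.

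Concretely, I would first substitute $f_1 = c_2/(2k+\bar k)$ into the constant term of \eqref{eq:nustar-shared} and check that $\frac{c_2 - \bar k f_1}{2k} = f_1$, and likewise substitute $f_3 = c_9/(2k+\bar k)$ to obtain $\frac{c_9 - \bar k f_3}{2k} = f_3$. Next I would handle the inventory terms together: writing the $q^n$-loading as $\frac{c_5}{k}(q^n-\bar q) + \frac{c_5}{k}\bar q$ and absorbing the stray $\frac{c_5}{k}\bar q$ into the $\bar q$-loading, one checks, using $f_2 = (2c_5 + c_8)/(2k+\bar k)$, that $\frac{c_8 - \bar k f_2}{2k} + \frac{c_5}{k} = f_2$. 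Collecting the four pieces gives exactly $\frac{c_5}{k}(q^n-\bar q) + f_1 + f_2\bar q + f_3\bar V$, and the last three terms are $\bar\nu(t,\bar q,\bar V)$ by \eqref{eq:nubar-form}.

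Since the whole argument reduces to the three scalar identities above, there is no genuine obstacle beyond bookkeeping; the only point that requires minor care is the $\bar q$-coefficient, where the relevant combination is $2c_5 + c_8$ rather than $c_8$ alone, which is precisely why the excess-inventory term carries the full factor $c_5/k$. Alternatively, the result can be seen conceptually: the difference $\nu^{n*} - \bar\nu$ is linear in $(q^n,\bar q,\bar V)$ with the $S$, $\bar V$ and constant terms cancelling by construction of the equilibrium, it has slope $c_5/k$ in $q^n$, and it must vanish at $q^n = \bar q$ because $\bar\nu$ is the $N\to\infty$ average of $\nu^{n*}$ across agents; these facts together force $\nu^{n*} - \bar\nu = \frac{c_5}{k}(q^n - \bar q)$.
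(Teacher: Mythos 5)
Your proof is correct and is exactly the paper's argument: the paper's proof simply states that the proposition follows by combining \eqref{eq:nubar-form}, \eqref{eq:nustar-shared}, and \eqref{eqn:shared-f}, which is precisely the substitution-and-regrouping you carry out (including the only slightly nontrivial check that $\frac{c_8-\bar{k}f_2}{2k}+\frac{c_5}{k}=f_2$). No gaps; your version just spells out the bookkeeping the paper leaves implicit.
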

\begin{proof}
	This is a consequence of combining equations \eqref{eq:nubar-form}, \eqref{eq:nustar-shared}, and \eqref{eqn:shared-f}.
\end{proof}
\begin{proposition}
	If $\alpha = 0$ then $c_3 = c_5 = c_6 = c_8 = c_{10} \equiv 0$ in equilibrium. If $\mu = 0$ then $c_2 = c_3 = c_4 \equiv 0$ in equilibrium. If $\alpha = \mu = 0$ then the non-zero $c_i$ in equilibrium are given by
	\begin{align}
		c_1(t) &= \frac{k\eta^2}{2b\kappa} \int_t^T (1 - e^{-\frac{2b}{\kappa}(T-s)}) \dd s\,,\\
		c_7(t) &= \frac{k}{2b\kappa}c_9^2(t)(1 - e^{-\frac{2b}{\kappa}(T-t)})\,,\\
		c_9(t) &= \frac{2z}{(1 + 2z)e^{\omega(T-t)} - 1}\,,
	\end{align}
	and the optimal strategy in feedback form is
	\begin{align*}
		\nu^{n*}(t,q^n,\bar{q},\bar{V}) &= \frac{c_9(t)}{\kappa}\bar{V} = \frac{1}{\kappa}\frac{2z}{(1 + 2z)e^{\omega(T-t)} - 1}\bar{V}\,,
	\end{align*}
	where
	\begin{align*}
		\kappa &= 2k + \bar{k}\,, & z &= \frac{\kappa\beta - b}{2\bar{\gamma}}\,, & \omega &= \frac{\kappa\beta - b}{\kappa}\,.
	\end{align*}
\end{proposition}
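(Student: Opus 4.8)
The plan is to exploit the uniqueness of solutions to the ODE system of Proposition~\ref{prop:ODE_shared}. That proposition identifies the equilibrium value functions $H^n$ with the solution of \eqref{eqn:shared-c1}--\eqref{eqn:shared-c10} once the consistency relations \eqref{eqn:shared-f} have been inserted, so in each case it suffices to exhibit one solution with the asserted structure; since the right-hand sides are polynomial, hence locally Lipschitz, this must then be \emph{the} solution, granting --- as the statement does implicitly --- that it is defined on all of $[0,T]$. Throughout write $\kappa = 2k+\bar{k}$, so that \eqref{eqn:shared-f} reads $f_1 = c_2/\kappa$, $f_2 = (2c_5+c_8)/\kappa$, $f_3 = c_9/\kappa$.

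For the claim when $\alpha = 0$: the equation for $c_5$ decouples, $c_5' + c_5^2/k = 0$ with $c_5(T) = -\alpha = 0$, so $c_5 \equiv 0$ and $f_2 = c_8/\kappa$. Substituting $c_3 = c_5 = c_6 = c_8 = c_{10} \equiv 0$ into the system, one checks term by term that the five ODEs carrying these coefficients reduce to $0=0$ --- each surviving term contains either the factor $f_2$ (which now vanishes, as $c_8 = 0$), or $c_8$ itself, or a product of two of the vanishing coefficients --- while the equations for $c_1,c_2,c_4,c_7,c_9$ that remain form a closed, well-posed terminal-value problem; uniqueness gives the claim. For the claim when $\mu = 0$: now $f_1 = c_2/\kappa$, and substituting $c_2 = c_3 = c_4 \equiv 0$ makes the forcing term $\mu$ drop from the $c_2$-equation while every other term there, and every term in the $c_3$- and $c_4$-equations, carries a factor $f_1$, $c_2$, or one of the vanishing coefficients; the equations for $c_1,c_5,c_6,c_7,c_8,c_9,c_{10}$ involve none of $c_2,c_3,c_4$, so uniqueness again closes the argument.

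When $\alpha = \mu = 0$ the two previous cases leave only $c_1,c_7,c_9$ possibly nonzero, with $f_1 = f_2 = 0$ and $f_3 = c_9/\kappa$. The $c_9$-equation then becomes the scalar Riccati equation
\begin{equation*}
c_9' = \Bigl(\beta - \tfrac{b}{\kappa}\Bigr)c_9 + \tfrac{\bar{\gamma}}{\kappa}c_9^2\,, \qquad c_9(T) = 1\,,
\end{equation*}
which $u = 1/c_9$ linearizes to $u' + \omega u = -\bar{\gamma}/\kappa$, $u(T)=1$, with $\omega = (\kappa\beta-b)/\kappa$; solving and inverting, with $z = (\kappa\beta-b)/(2\bar{\gamma})$, yields the stated formula for $c_9$. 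For $c_7$, using $c_{10}=0$, $f_3 = c_9/\kappa$ and $c_9 - \bar{k}f_3 = (2k/\kappa)c_9$ reduces its equation to the linear ODE $c_7' = (2\beta + 2\bar{\gamma}c_9/\kappa)c_7 - (k/\kappa^2)c_9^2$, $c_7(T)=0$; since $c_9'/c_9 = \omega + (\bar{\gamma}/\kappa)c_9$, the substitution $c_7 = c_9^2 g$ collapses this to the constant-coefficient equation $g' = (2b/\kappa)g - k/\kappa^2$, $g(T)=0$, giving $g(t) = \tfrac{k}{2b\kappa}\bigl(1 - e^{-\frac{2b}{\kappa}(T-t)}\bigr)$ and hence $c_7 = c_9^2 g$. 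Finally $c_1$ follows by a single quadrature from $c_1' = -\eta^2 c_7$, $c_1(T) = 0$. The reduced form of $\nu^{n*}$ is then read off from \eqref{eq:nustar-shared} after setting $c_2 = c_5 = c_8 = 0$ and $f_2 = 0$, since $c_9 - \bar{k}f_3 = (2k/\kappa)c_9$ leaves the coefficient $c_9/\kappa$ on $\bar{V}$.

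There is no conceptual obstacle here; the real labour is the term-by-term verification that the claimed vanishing coefficients annihilate the relevant ODEs in \eqref{eqn:shared-c1}--\eqref{eqn:shared-c10}, which involve many cross terms, and the one point needing care is the appeal to uniqueness, which presupposes that the solution of the full system is defined on the entire interval $[0,T]$ --- a fact that should be recorded explicitly but is in any case built into the notion of equilibrium used here.
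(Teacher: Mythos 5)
Your argument is correct and is essentially the paper's own proof made explicit: the paper simply substitutes \eqref{eqn:shared-f} into \eqref{eqn:shared-c1}--\eqref{eqn:shared-c10} and verifies by direct substitution, while you additionally solve the reduced Riccati/linear equations constructively and invoke local-Lipschitz uniqueness of the terminal-value problem to justify the word ``the'' in the vanishing claims --- a reasonable formalization of the same computation. One concrete point you should have recorded: your quadrature $c_1' = -\eta^2 c_7$, $c_1(T)=0$ yields $c_1(t) = \eta^2\int_t^T c_7(s)\,\dd s = \frac{k\eta^2}{2b\kappa}\int_t^T c_9^2(s)\bigl(1-e^{-\frac{2b}{\kappa}(T-s)}\bigr)\dd s$, which carries the factor $c_9^2(s)$ and therefore does not coincide with the formula for $c_1$ displayed in the proposition; the stated $c_1$ fails the ODE $c_1'+\eta^2 c_7=0$ unless $c_9\equiv 1$, so the displayed expression appears to omit $c_9^2(s)$ (compare the analogous $c_1$ in Proposition \ref{prop:separate_closed}, which keeps the full integrand). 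Your derivation is the right one, but since the $c_1$ formula is part of the statement, you should either write out the result of the quadrature or explicitly note the discrepancy rather than leave that step implicit.
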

\begin{proof}
	In equations \eqref{eqn:shared-c1} to \eqref{eqn:shared-c10} we substitute \eqref{eqn:shared-f}. The result can then be seen by direct substitution.
\end{proof}

\subsubsection{Numerical Experiments}

In a similar fashion to Section \ref{sec:numerical-experiments-individual} we are interested in the loadings of the optimal strategy on the underlying processes and the resulting pathwise behaviour. By substituting equations \eqref{eqn:shared-f} into equations \eqref{eqn:shared-c1} to \eqref{eqn:shared-c10} we arrive at a system of ODE's which define a mean-field Nash equilibrium, and this system easily lends itself to numerical methods. From \eqref{eq:nustar-shared} we have
	\begin{align}
		\nu^{n*}(t,q,\bar{q},\bar{V}) = \frac{c_2(t)}{2k+\bar{k}} + \nu^*_q(t) q + \nu^*_{\bar{q}}  \bar{q} + \nu^*_{\bar{V}}(t) \bar{V}\, ,
	\end{align}
where we have defined
\begin{align}
	\nu^*_q(t) &= \frac{c_5(t)}{k}\,, & \nu^*_{\bar{q}} &= \frac{2kc_8(t)-2\bar{k}c_5(t)}{2k(2k+\bar{k})} & \nu^*_{\bar{V}}(t) &= \frac{c_9(t)}{2k+\bar{k}}\, .
\end{align}
We plot the functions $\nu^*_q$, $\nu^*_{\bar{q}}$ and $\nu^*_{\bar{V}}$ in Figure \ref{fig:trade-loadings-shared}.  We see that $\nu^*_q$ and $\nu^*_{\bar{V}}$ have qualitatively similar behaviour in the multi-agent setting as when there is an individual agent, and the reasoning is the same.

The additional loading $\nu^*_{\bar{q}}$ is seen to change sign about half way through the trading period, and shortly after reaching its maximum positive value it quickly drops to zero. The behaviour of this loading can be explained by considering the actions of the entire population of agents combined with the resulting dynamics of the midprice and trade signal, and the results are more easily understood by seeing the effect that the price impact and trade signal impact parameters, $b$ and $\bar{\gamma}$, have on this loading.

In Figure \ref{fig:nu_shared_qbar_gammab} we show the loading $\nu^*_{\bar{q}}$ as a function of time for several values of the parameters $b$ and $\bar{\gamma}$. In the left panel, the parameter $\bar{\gamma}$ is fixed and each curve represents a different value of $b$. As $b$ is increased, the loading decreases, and the sensitivity is greatest at earlier times. The direction and magnitude of these changes as well as the sign of the loading are understood by considering what the population will do on average based on the average inventory of the population and based on the time remaining in the trading period. If the average inventory is high, then the agent expects the order flow to be negative. This will have two effects on quantities relevant to the agent. First, it will decrease the price of the asset in the future, and second, it will increase the value of the trade signal to the agent. When the remaining time is short, this increase in the value of the trade signal gives incentive to buy shares of the asset to benefit from this perceived increase in subjective value. However, the impact on the trade signal is short lived due to the mean-reverting dynamics, so when the remaining time is long this incentive to buy shares does not arise because it will have disappeared before the advantage can be gained. For longer remaining times the effect of permanent price impact dominates, and the price decrease caused by negative order flow incentivizes the agent to sell shares. A larger value of price impact $b$ makes the decision to trade based on price impact dominate the decision to trade on a change in the trade signal, hence the loading $\nu_{\bar q}^*(t)$ decreases with $b$. The fact that the sensitivity to $b$ is greatest at time $t = 0$ is also explained due to the permanent nature of order flow on price impact along with the transient nature of the effect on the trade signal.

In the right panel, the parameter $b$ is fixed and each curve represents a different value of $\bar{\gamma}$. As $\bar{\gamma}$ is increased, the loading increases, and this effect is most pronounced close to the end of the trading period. The reasoning for the general shape of each curve is the same as in the discussion of the left panel in the previous paragraph. Also based on the same discussion is the reason that the sensitivity towards $\bar{\gamma}$ is greatest close to the end of the trading period. This comes from the transient nature of the impact effects on the trade signal, so when the remaining time is large the agent knows these effects will gradually disappear before they can offer their advantage.

\begin{figure}
	\begin{center}
		{\includegraphics[trim=140 240 140 240, scale=0.4]{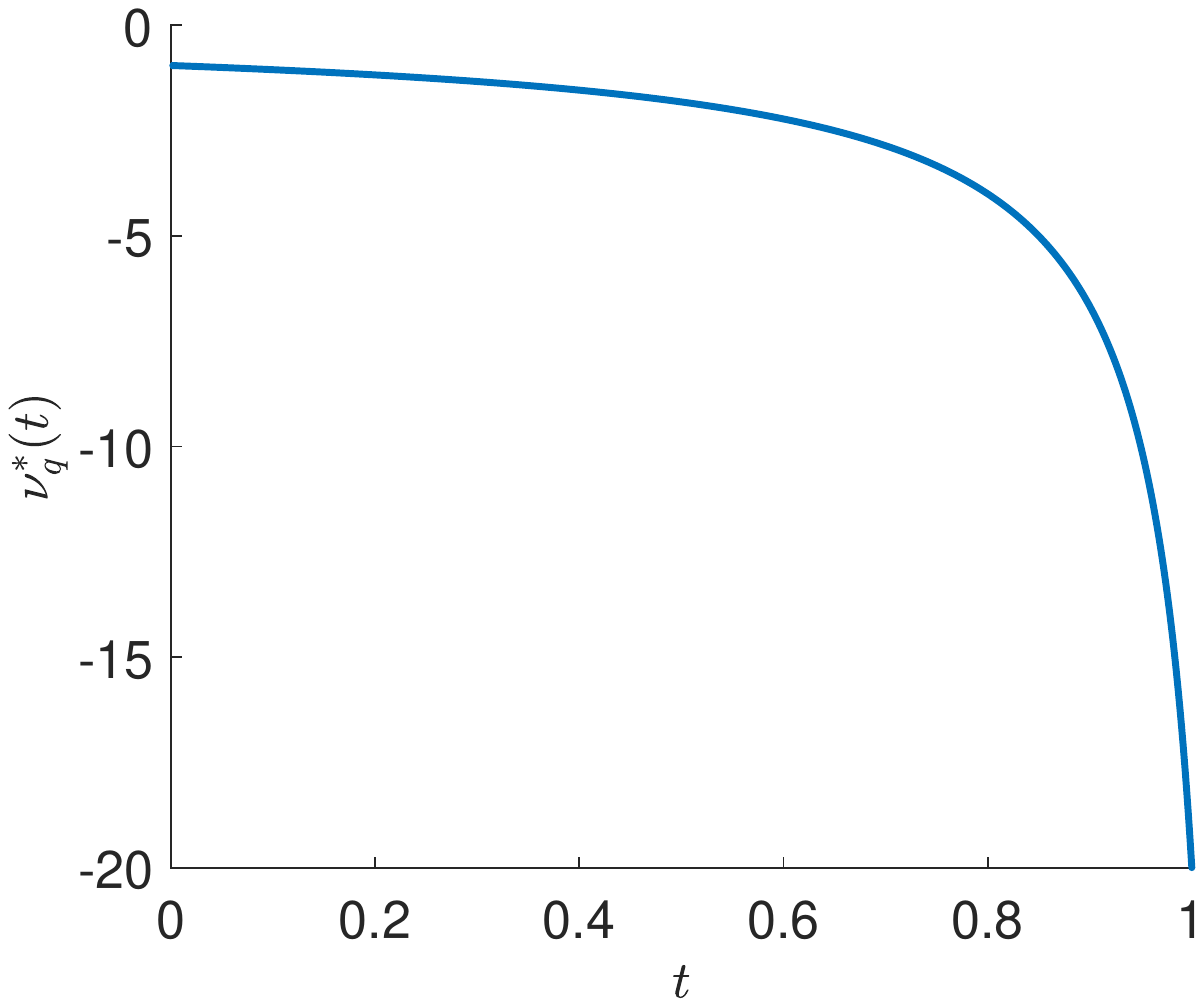}}\hspace{10mm}
		{\includegraphics[trim=140 240 140 240, scale=0.4]{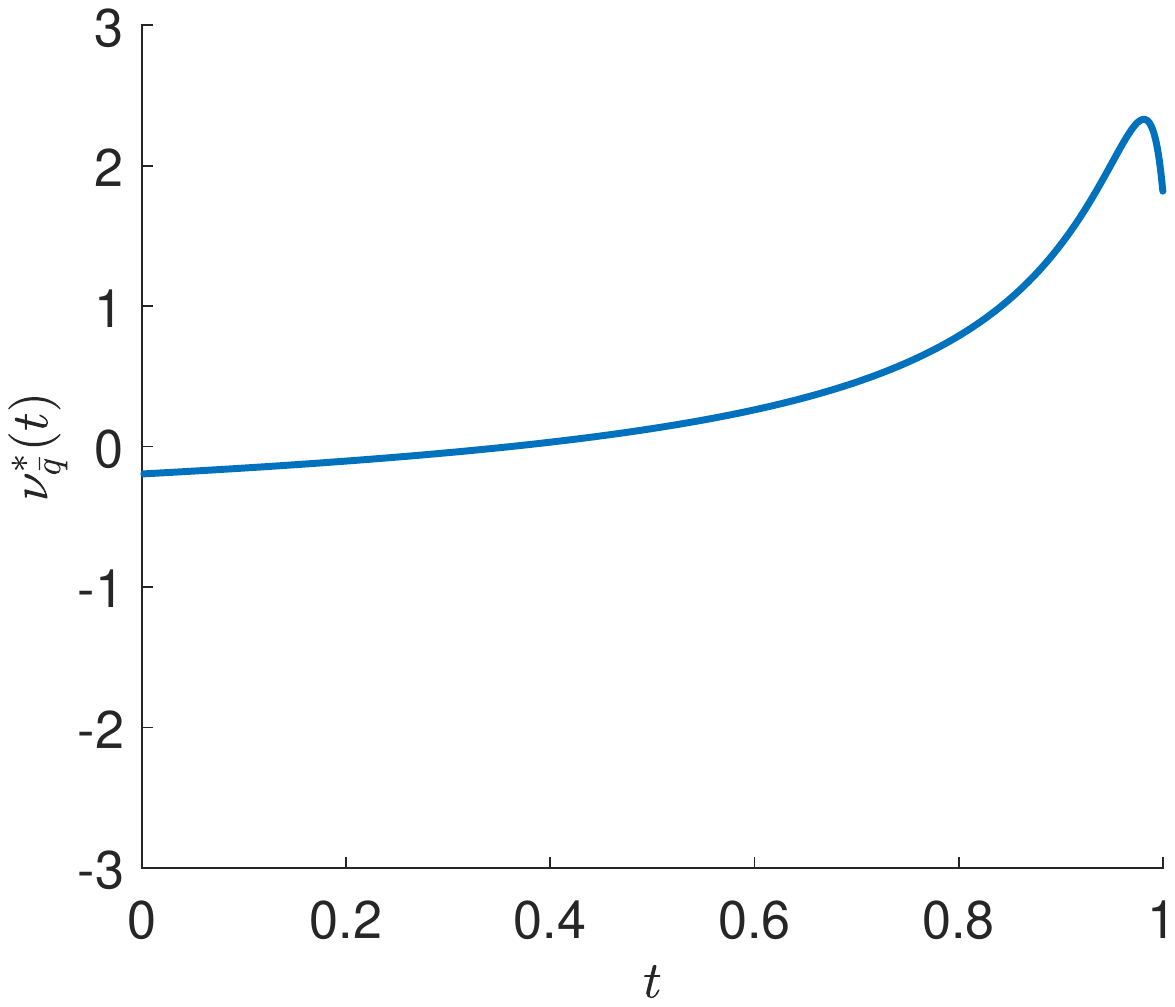}}\hspace{10mm}
		{\includegraphics[trim=140 240 140 240, scale=0.4]{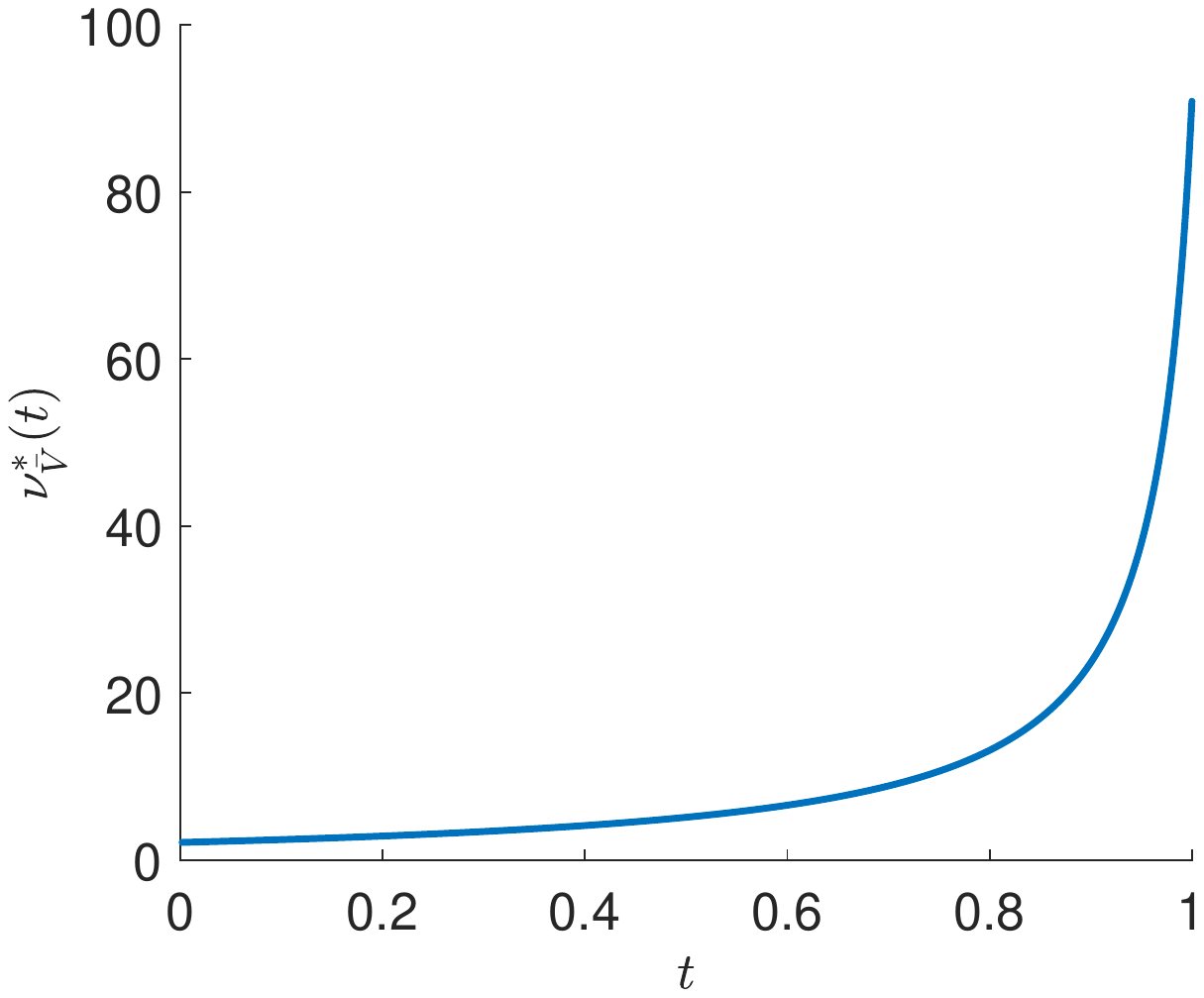}}
	\end{center}
	\vspace{-1em}
	\caption{Optimal loadings on $Q_t$, $\bar{Q}_t$, and $V_t$. Parameters used are $\mu = 0$, $\sigma = 1$, $\eta = 0.5$, $\beta = 1$, $\bar{\gamma} = 0.1$, $\rho = 0.3$, $b = 10^{-2}$, $k = 5\cdot10^{-3}$, $\bar{k} = 10^{-3}$, $\alpha = 0.1$, and $T = 1$. \label{fig:trade-loadings-shared}}
\end{figure}

\begin{figure}
	\begin{center}
		{\includegraphics[trim=140 240 140 240, scale=0.4]{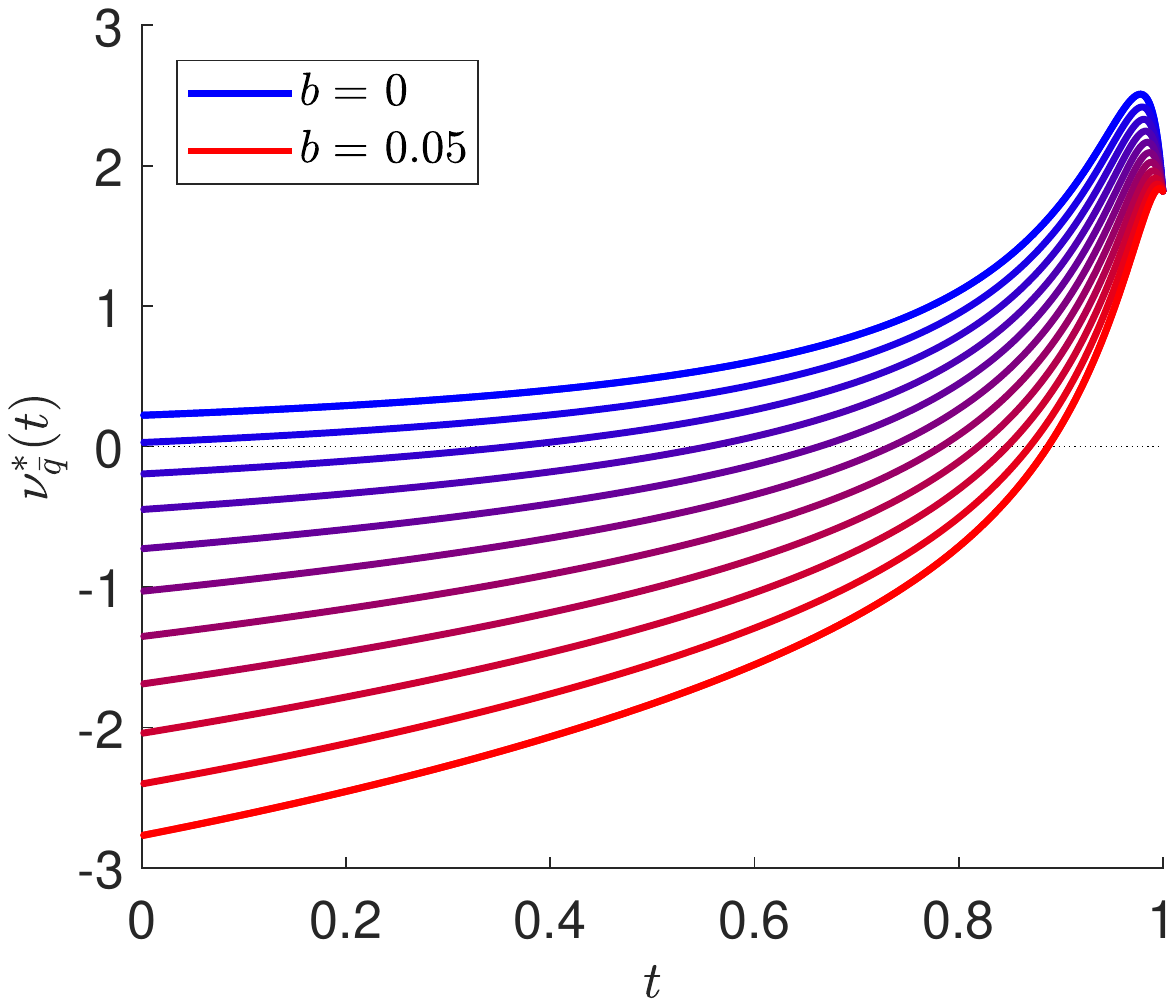}}\hspace{10mm}
		{\includegraphics[trim=140 240 140 240, scale=0.4]{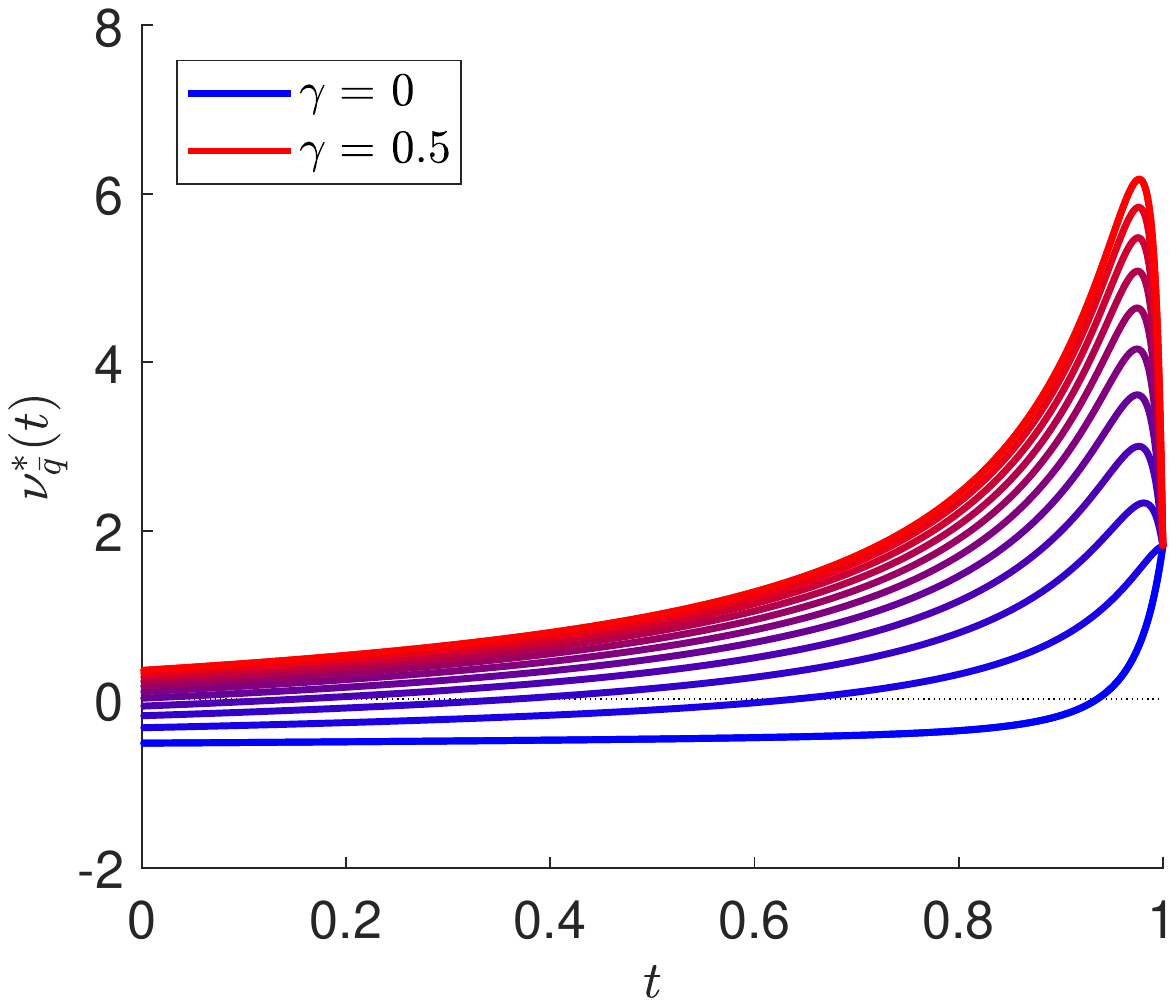}}
	\end{center}
	\vspace{-1em}
	\caption{Optimal loading on $\bar{Q}_t$. Left panel: $\bar{\gamma} = 0.1$ and $b$ ranging from $0$ (blue curve) to $5\cdot10^{-2}$ (red curve). Right panel: $b = 10^{-2}$ and $\bar{\gamma}$ ranging from $0$ (blue curve) to $0.5$ (red curve). Other parameters are $\mu = 0$, $\sigma = 1$, $\eta = 0.5$, $\beta = 1$, $\rho = 0.3$, $k = 5\cdot10^{-3}$, $\bar{k} = 10^{-3}$, $\alpha = 0.1$, and $T = 1$.  \label{fig:nu_shared_qbar_gammab}}
\end{figure}

In Figure \ref{fig:MC-shared} we show the result of a simulation when each agent acts according to the mean-field optimal strategy described by the loadings plotted in Figure \ref{fig:trade-loadings-shared}. The most striking feature of this simulation is that all agents appear to approach very similar terminal inventory holdings even though the initial positions are wide spread. This stems from the fact that they each share the same view of the asset's value. At time $T$, there will be a tradeoff to holding non-zero inventory between the terminal liquidation penalty and the additional value imparted by the signal $\bar{V}_T$. Since each agent assigns the same value to this signal, they are willing to accept the same magnitude of terminal penalty in order to benefit from the signal. In Section \ref{sec:distribution} we show that the minimal cross-sectional variance of inventories is indeed achieved when the agents share the same trade signal.

%
%
%

\begin{figure}
	\begin{center}
		{\includegraphics[trim=140 240 140 240, scale=0.4]{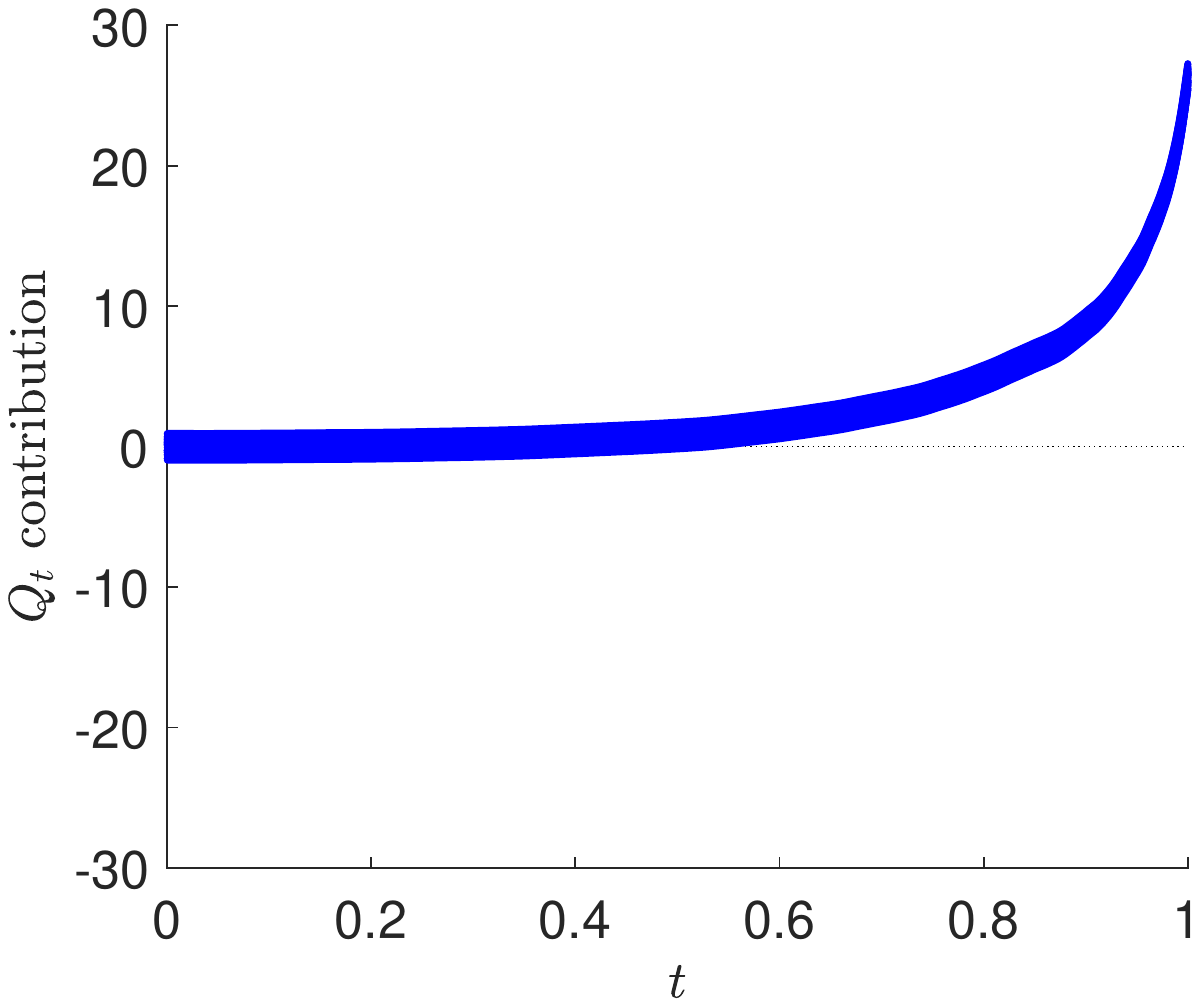}}\hspace{10mm}
		{\includegraphics[trim=140 240 140 240, scale=0.4]{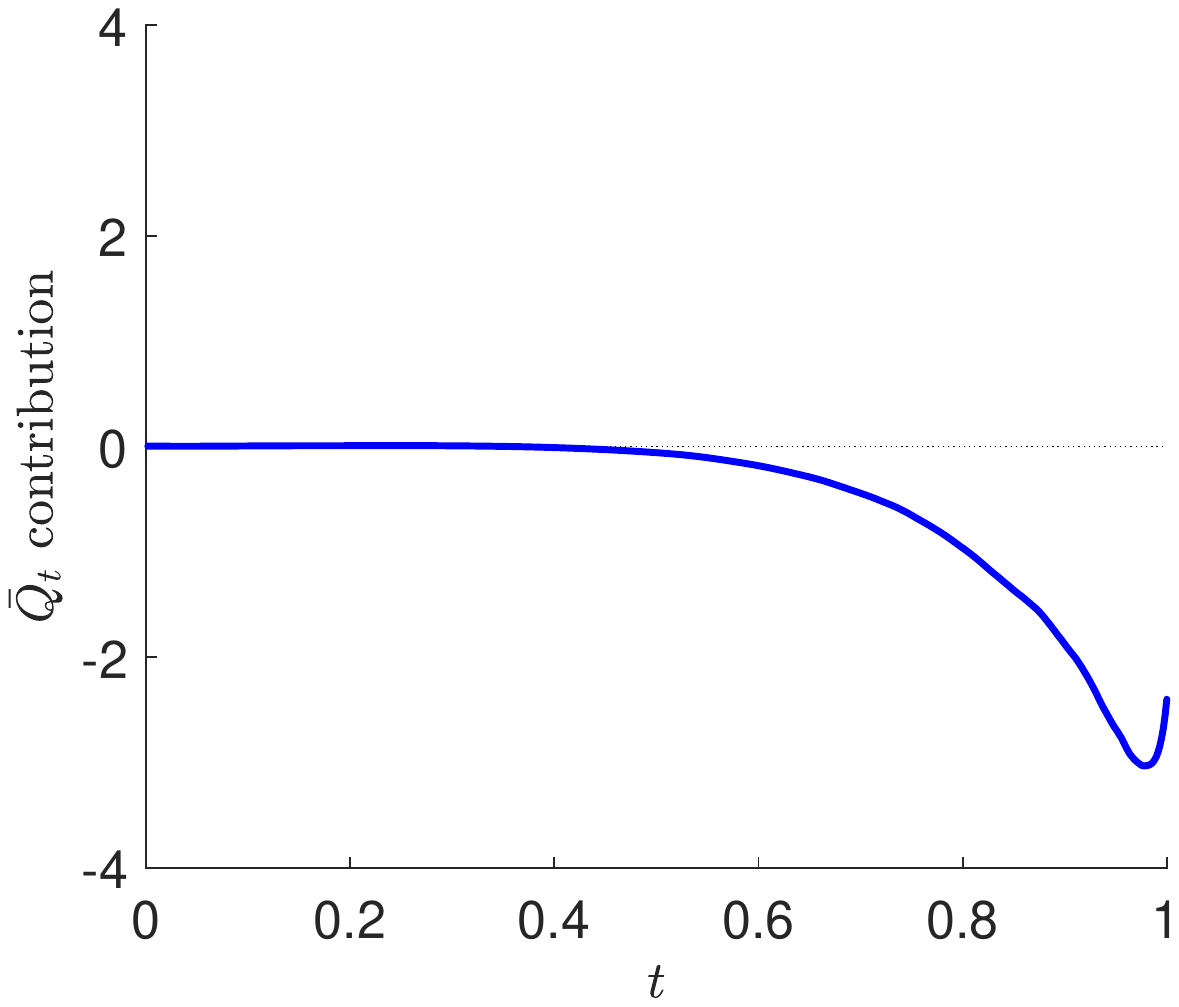}}\hspace{10mm}
		{\includegraphics[trim=140 240 140 240, scale=0.4]{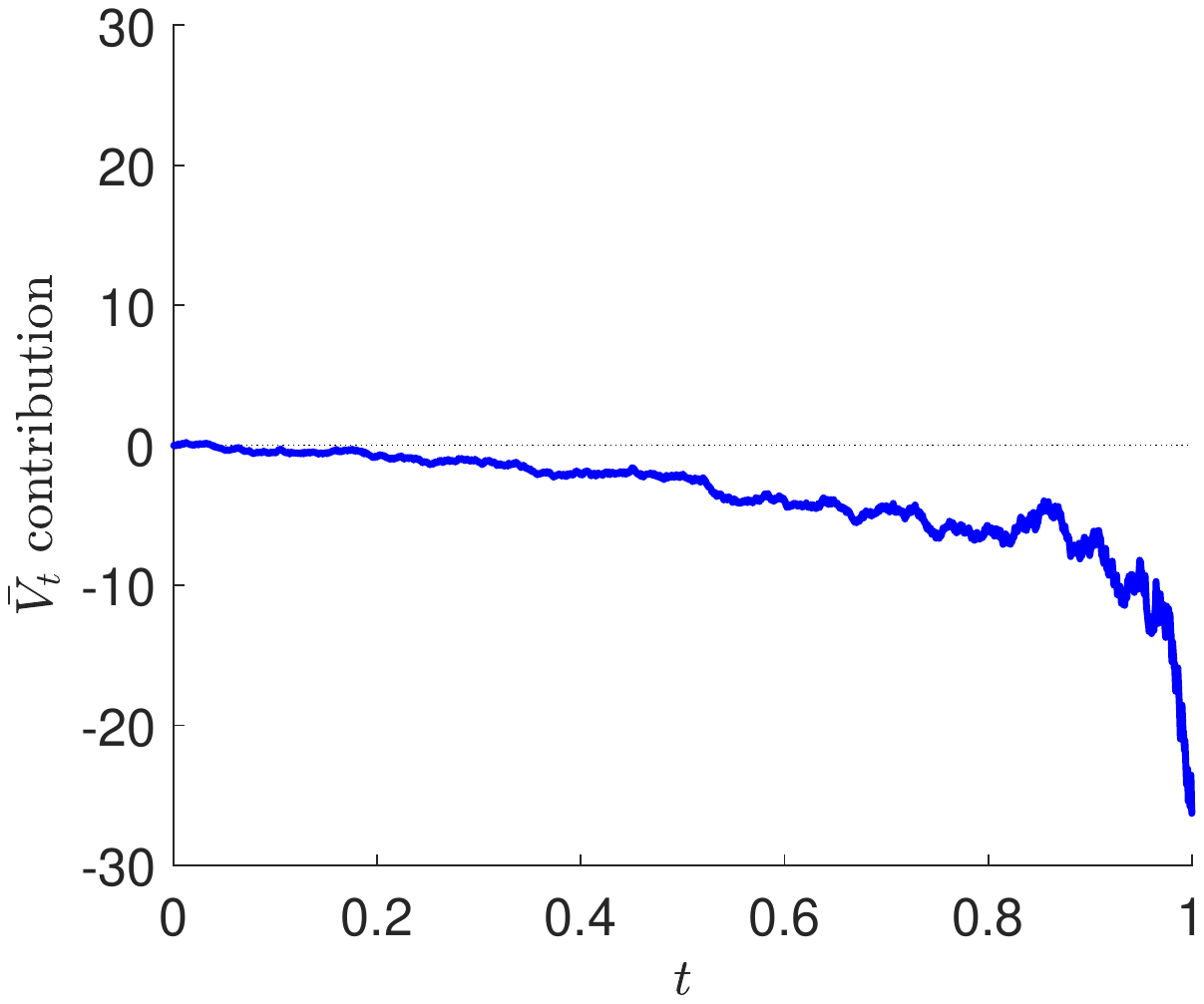}}\hspace{10mm}
		{\includegraphics[trim=140 240 140 240, scale=0.4]{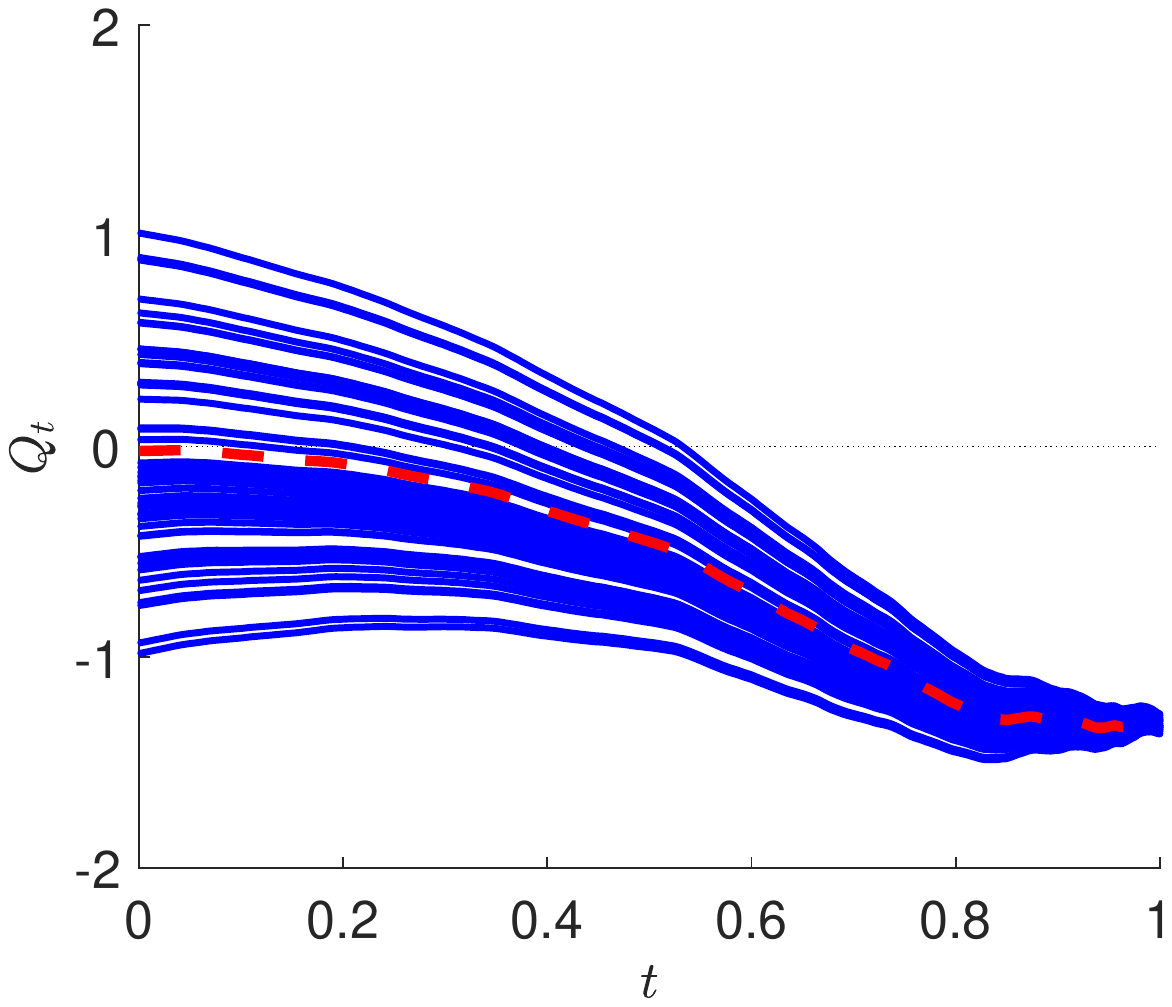}}\hspace{10mm}
		{\includegraphics[trim=140 240 140 240, scale=0.4]{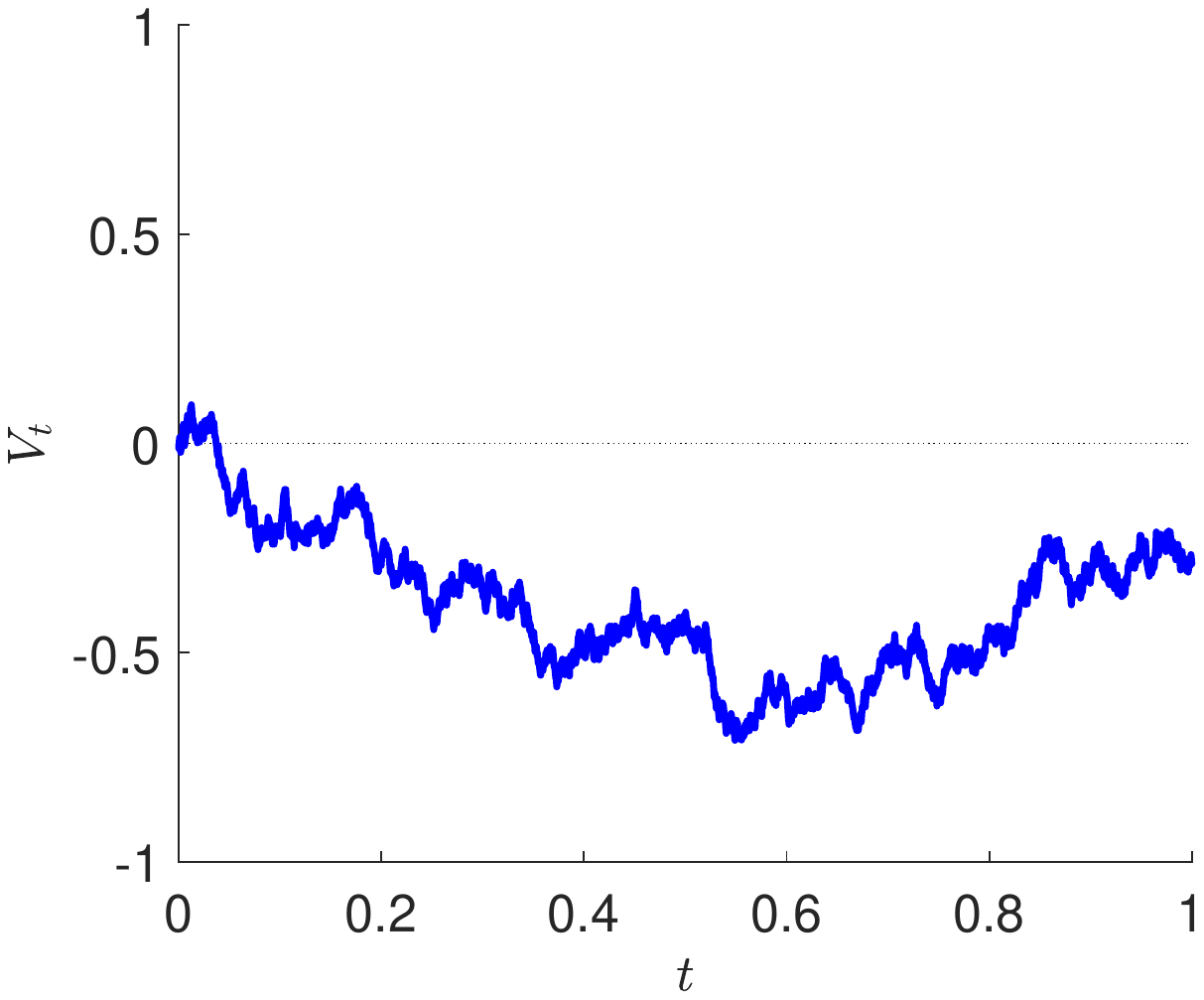}}\hspace{10mm}
		{\includegraphics[trim=140 240 140 240, scale=0.4]{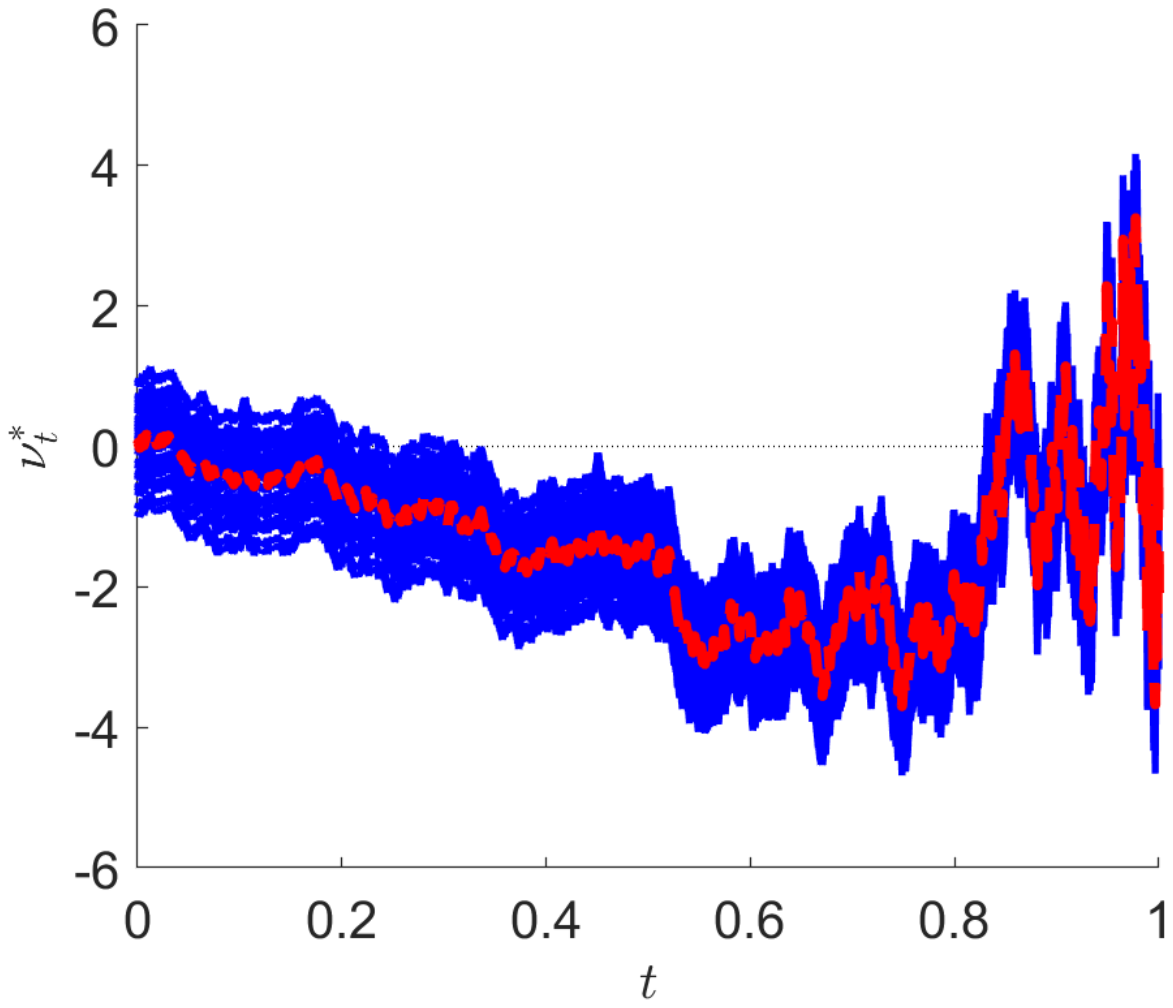}}
	\end{center}
	\vspace{-1em}
	\caption{The top row shows the contributions to trading speed from $Q_t$, $\bar{Q}_t$, and $\bar{V}_t$. The left panel of the second row shows each agent's inventory path $Q_t^{n,\nu^n}$ (blue curves) as well as the average inventory of all agents $\bar{Q}^{\bar{\nu}}_t$ (red dotted curve). The right panel of the second row shows the optimal trading speed $\nu^n_t$ (blue curves) and the average trading speed $\bar{\nu}_t$ (red dotted curve). Parameters used are $\mu = 0$, $\sigma = 1$, $\eta = 0.5$, $\beta = 1$, $\bar{\gamma} = 0.1$, $\rho = 0.3$, $b = 10^{-2}$, $k = 5\cdot10^{-3}$, $\bar{k} = 10^{-3}$, $\alpha = 0.1$, $T = 1$, $S_0 = 100$, $\bar{V}_0 = 0$, $Q_0^n\sim\mathcal{N}(0,0.5^2)$, and $N = 50$. 
\label{fig:MC-shared}}
\end{figure}

\newpage

\subsection{Separate Subjective View of Asset Value}\label{sec:separate}

Here we consider a model in which each agent has his own individual trading signal, each of which changes according to a different stochastic process. For agent $n$ we denote his trading signal by $V^{n,\nu^n,\bar{\nu}} = (V^{n,\nu^n,\bar{\nu}}_t)_{0 \leq t \leq T}$ which changes according to
\begin{align}
	\dd V^{n,\nu^n,\bar{\nu}}_t &= -(\beta V_t^{n,\nu^n,\bar{\nu}} + \gamma \nu_t^n + \bar{\gamma}\bar{\nu}_t) \dd t + \eta \dd Z^n_t\,, & V_0^{n,\nu^n,\bar{\nu}} &= V_0^n\,,\label{eqn:V_separate}\\
	Z_t^n &= \rho W_t + \sqrt{1-\rho^2}W_t^{n,\perp}\,,
\end{align}
where each $W^{n,\perp} = (W^{n,\perp}_t)_{0\leq t \leq T}$ is a Brownian motion, independent of one another for different $n$, and independent of $W$. In addition we assume that all $V_0^{n}$ are i.i.d. with finite expectation and variance, and independent from all other variables. 
Inventory and price dynamics are equivalent to those of Section \ref{sec:multiple-agents-model}. In this section it will be useful to consider the average value of the trading signals over all agents which will be denoted $\bar{V}^{\bar{\nu}} = (\bar{V}^{\bar{\nu}}_t)_{0 \leq t \leq T}$ and is defined by
\begin{align}
	\bar{V}^{\bar{\nu}}_t &:= \frac{1}{N}\sum_{n=1}^N V^{n,\nu^n,\bar{\nu}}_t\,.\label{eqn:V_bar-definition}
\end{align}
Similar to the section in which the agents share the same trade signal, the average signal $\bar{V}^{\bar{\nu}}_t$ may be thought of as a commonly identified quantity which conveys information about order flow or price dynamics, such as order book imbalance or micro-price. Then the individual signals $V^{n,\nu^n,\bar{\nu}}_t$ may be interpreted as quantities used by each individual agent believing they have a modification to the common trade signal which represents an improvement.

Based on the definition \eqref{eqn:V_bar-definition} we may also compute the dynamics to be
\begin{align}
	\dd \bar{V}^{\bar{\nu}}_t &= \frac{1}{N}\sum_{n=1}^N \dd V^{n,\nu^n,\bar{\nu}}_t\\
	&= -(\beta \bar{V}_t^{\bar{\nu}} + (\gamma + \bar{\gamma})\bar{\nu}_t)\dd t + \frac{\eta}{N}\sum_{n=1}^N \dd Z_t^n\\
	&= -(\beta \bar{V}_t^{\bar{\nu}} + (\gamma + \bar{\gamma})\bar{\nu}_t)\dd t + \eta\rho \dd W_t + \frac{\eta\sqrt{1-\rho^2}}{N}\sum_{n=1}^N \dd W^{n,\perp}_t\, .
\end{align}
Due to the independence of each $W^{n,\perp}$, when we consider the limit $N\rightarrow\infty$ the last term above becomes zero due to the law of large numbers. It is worth making the brief remark that this model of separate trade signals can be reduced to the shared signal of Section \ref{sec:shared} by choosing some parameter values in a particular way. Specifically, if each $V_0^n$ in \ref{eqn:V_separate} and $V_0$ of \ref{eqn:V_shared} are equal to a constant (the same constant for each $n$), and if $\gamma=0$ and $\rho=\pm1$ (in both models), then we have $V^{n,\nu^n,\bar{\nu}}_t = \bar{V}^{\bar{\nu}}_t$ and every agent observes the same trade signal, which is the setting considered in Section \ref{sec:shared}.

Each agent attempts to maximize his own expected future wealth given that the trading strategies of all other agents are fixed. That is, if $\nu^{-n}$ is fixed, agent $n$ wishes to maximize the functional
\begin{align}
	J(\nu^n;\nu^{-n}) &= \Eb \Big( X_T^{n,\nu^n,\bar{\nu}} + Q_T^{n,\nu^n} (S_T^{\bar{\nu}} + V_T^{n,\nu^n,\bar{\nu}})  - \alpha (Q_T^{n,\nu^n})^2\Big) \,.
\end{align}

For the remainder of Section \ref{sec:separate} we work with a complete and filtered probability space $(\Omega, (\mathcal{F}_t)_{0\leq t \leq T},\mathbb{P})$ where $(\mathcal{F}_t)_{0\leq t \leq T}$ is the standard augmentation of the natural filtration generated by $(W_t,Z^n_t)_{0\leq t\leq T,n\in\mathbb{N}}$ and the initial state $(S_0,(Q^n_0)_{n\in\mathbb{N}},(X^n_0)_{n\in\mathbb{N}},(V^n_0)_{n\in\mathbb{N}})$.

\subsubsection{HJB Equation and Consistency Condition with Separate Subjective Views}

With a similar approach to Section \ref{sec:HJB-shared} we consider a solution in the limiting case $N\rightarrow\infty$. We assume that the average trading speed is of the form $\bar{\nu}_t = \bar{\nu}(t,\bar{Q}^{\bar{\nu}}_t,\bar{V}^{\bar{\nu}}_t)$ to remain within a Markovian framework, similar to the assumption made in Section \ref{sec:HJB-shared}. With this function fixed we define the value function for agent $n$ as

\begin{align}
	H^n(t,x,q,\bar{q},S,V,\bar{V};\bar{\nu}) &:= \sup_{\nu^n \in \Nc} \Eb_{t,x,q,\bar{q},S,V,\bar{V}} \Big( X_T^{n,\nu^n,\bar{\nu}} + Q_T^{n,\nu^n} (S_T^{\bar{\nu}} + V_T^{n,\nu^n,\bar{\nu}}) - \alpha (Q_T^{n,\nu^n})^2\Big) , \label{def:Hn-separate}
\end{align}
where the set of admissible strategies $\mathcal{N}$ consists of all $\mathcal{F}$-predictable processes such that $\mathbb{E}[\int_0^T (\nu^n_t)^2\,dt]<\infty$.
The value function in \eqref{def:Hn-separate} has an associated HJB equation of the form
\begin{align}
	\d_t H^n + \sup_{\nu^n \in \Rb} ( \Ac^{\nu^n,\bar{\nu}} H^n  )
		&=	0 , &
	H^n(T,x,q,\bar{q},S,V,\bar{V};\bar{\nu})
	&=	x + q(S+V) - \alpha q^2, \label{eq:hjb-pde-separate}
\end{align}
where  the operator $\Ac^{\nu^n,\bar{\nu}}$ is given by
\begin{align}
\Ac^{\nu^n,\bar{\nu}}
	&=	-(S+k\nu^n+\bar{k}\bar{\nu})\nu^n\d_x + \nu^n\d_q + \bar{\nu}\d_{\bar{q}} + (\mu + b\bar{\nu})\d_S - (\beta V + \gamma \nu^n + \bar{\gamma}\bar{\nu})\d_V - (\beta \bar{V} + (\gamma+\bar{\gamma})\bar{\nu})\d_{\bar{V}}\\
	& \hspace{10mm} +  \frac{1}{2}\sigma^2\d_{SS} + \frac{1}{2}\eta^2\d_{VV} + \frac{1}{2}\rho^2\eta^2 \d_{\bar{V}\bar{V}} + \rho\sigma\eta\d_{SV} + \rho\sigma\eta\d_{S\bar{V}} + \rho^2\eta^2\d_{V\bar{V}}\,.
\end{align}
Based on the form of the feedback control in the previous sections, we make the ansatz
\begin{align}
	\bar{\nu}(t,\bar{q},\bar{V}) &= f_1(t) + f_2(t)\bar{q} + f_3(t)\bar{V}\,. \label{eq:nubar-form-separate}
\end{align}
With this ansatz the solution to the HJB equation \eqref{eq:hjb-pde-separate} along with the optimal control in feedback form can be characterized by a solution to a system of ODE's.
\begin{proposition}
	Given $\bar{\nu}$ in \eqref{eq:nubar-form-separate}, suppose $c_1,\dots,c_{15}:[0,T]\rightarrow\mathbb{R}$ satisfy the following system of ODEs with terminal conditions:
	\begin{align}
				c'_1+f_1(c_3-\bar{\gamma}c_4-(\gamma+\bar{\gamma})c_5)+\eta^2 c_8+\rho^2\eta^2(c_9+c_{15})+\frac{(c_2-\gamma c_4 - \bar{k}f_1)^2}{4k}&=0\,,																	& c_1(T) &= 0\,, \label{eqn:separate-c1}\\
				c'_2+\mu+f_1(b+c_{10}-\bar{\gamma}c_{11}-(\gamma+\bar{\gamma})c_{12})+\frac{(2c_6-\gamma c_{11})(c_2-\gamma c_4 - \bar{k}f_1)}{2k}&=0\,,																	& c_2(T) &= 0\,,\\
				c'_3+f_1(2 c_7-\bar{\gamma}c_{13}-(\gamma+\bar{\gamma})c_{14})+f_2(c_3-\bar{\gamma}c_4-(\gamma+\bar{\gamma})c_5)+\frac{(c_2-\gamma c_4 - \bar{k}f_1)(c_{10}-\gamma c_{13} - \bar{k}f_2)}{2k}&=0\,,			& c_3(T) &= 0\,,\\
				c'_4-\beta c_4+f_1(c_{13}-2\bar{\gamma}c_8-(\gamma+\bar{\gamma})c_{15})+\frac{(c_2-\gamma c_4 - \bar{k}f_1)(c_{11}-2\gamma c_8)}{2k}&=0\,,																	& c_4(T) &= 0\,,\\
				c'_5-\beta c_5+f_1(c_{14}-\bar{\gamma}c_{15}-2(\gamma+\bar{\gamma})c_9)+f_3(c_3-\bar{\gamma}c_4-(\gamma+\bar{\gamma})c_5)+\frac{(c_2-\gamma c_4 - \bar{k}f_1)(c_{12}-\gamma c_{15}-\bar{k}f_3)}{2k}&=0\,,	& c_5(T) &= 0\,,\\
				c'_6+\frac{(2c_6-\gamma c_{11})^2}{4k}&=0\,,																																								& c_6(T) &= -\alpha\,,\\
				c'_7+f_2(2c_7-\bar{\gamma}c_{13}-(\gamma+\bar{\gamma})c_{14})+\frac{(c_{10}-\gamma c_{13} - \bar{k}f_2)^2}{4k}&=0\,,																						& c_7(T) &= 0\,,\\
				c'_8-2\beta c_8+\frac{(c_{11}-2\gamma c_8)^2}{4k}&=0\,,																																						& c_8(T) &= 0\,,\\
				c'_9-2\beta c_9+f_3(c_{14}-\bar{\gamma}c_{15}-2(\gamma+\bar{\gamma})c_9)+\frac{(c_{12}-\gamma c_{15} - \bar{k}f_3)^2}{4k}&=0\,,																				& c_9(T) &= 0\,,\\
				c'_{10}+f_2(b+c_{10}-\bar{\gamma}c_{11}-(\gamma+\bar{\gamma})c_{12})+\frac{(2c_6-\gamma c_{11})(c_{10}-\gamma c_{13} - \bar{k}f_2)}{2k}&=0\,,																& c_{10}(T) &= 0\,,\\
				c'_{11}-\beta c_{11}+\frac{(2c_6-\gamma c_{11})(c_{11}-2\gamma c_8)}{2k}&=0\,,																																& c_{11}(T) &= 1\,,\\
				c'_{12}-\beta c_{12}+f_3(b+c_{10}-\bar{\gamma}c_{11}-(\gamma+\bar{\gamma})c_{12})+\frac{(2c_6-\gamma c_{11})(c_{12}-\gamma c_{15} - \bar{k}f_3)}{2k}&=0\,,													& c_{12}(T) &= 0\,,\\
				c'_{13}-\beta c_{13}+f_2(c_{13}-2\bar{\gamma}c_8-(\gamma+\bar{\gamma})c_{15}) + \frac{(c_{11}-2\gamma c_8)(c_{10}-\gamma c_{13} - \bar{k}f_2)}{2k}&=0\,,													& c_{13}(T) &= 0\,,\\
\hspace{-10mm}	c'_{14}-\beta c_{14}+f_2(c_{14}-\bar{\gamma}c_{15}-2(\gamma+\bar{\gamma})c_9)+f_3(2c_7-\bar{\gamma}c_{13}-(\gamma+\bar{\gamma})c_{14})+\frac{(c_{10}-\gamma c_{13} - \bar{k}f_2)(c_{12}-\gamma c_{15} - \bar{k}f_3)}{2k} &= 0\,,	& c_{14}(T) &= 0\,,\\
				c'_{15}-2\beta c_{15}+f_3(c_{13}-2\bar{\gamma}c_8-(\gamma+\bar{\gamma})c_{15})+\frac{(c_{11}-2\gamma c_8)(c_{12}-\gamma c_{15} - \bar{k}f_3)}{2k}&= 0\,,													& c_{15}(T) &= 0\,.\label{eqn:separate-c15}
	\end{align}
	Then the value function $H^n$ is given by
	\begin{align}
		H^n(t,x,q,\bar{q},S,V,\bar{V}) &= x + qS + h^n(t,q,\bar{q},V,\bar{V})\,,\\
		h^n(t,q,\bar{q},V,\bar{V}) &= c_1(t) + c_2(t)q + c_3(t)\bar{q} + c_4(t)V + c_5(t)\bar{V}  + c_6(t)q^2 + c_7(t)\bar{q}^2 + c_8(t)V^2 \\
		&\quad + c_9(t)\bar{V}^2 + c_{10}(t)q\bar{q} + c_{11}(t)qV + c_{12}(t)q\bar{V} + c_{13}(t)\bar{q}V + c_{14}(t)\bar{q}\bar{V} + c_{15}(t)V\bar{V}\,,
	\end{align}
	and the optimal trading strategy in feedback form is
	\begin{align}
	\nu^{n*}(t,q,\bar{q},V,\bar{V}) = \frac{c_2(t)-\gamma c_4(t) - \bar{k}f_1(t)}{2k} + \frac{2c_6(t)-\gamma c_{11}(t)}{2k}q + \frac{c_{10}(t)-\gamma c_{13}(t) - \bar{k}f_2(t)}{2k}\bar{q}\nonumber\\
	\hspace{10mm} + \frac{c_{11}(t)-2\gamma c_8(t)}{2k}V + \frac{c_{12}-\gamma c_{15}(t) - \bar{k}f_3(t)}{2k}\bar{V}\,. \label{eq:nustar-separate}
	\end{align}
\end{proposition}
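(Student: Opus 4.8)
The plan is to prove this by the standard verification route for linear–quadratic stochastic control: exhibit the quadratic ansatz for $H^n$ as a classical solution of the HJB equation \eqref{eq:hjb-pde-separate}, so that the whole argument reduces to (i) performing the pointwise maximization over $\nu^n$, (ii) matching polynomial coefficients to recover the ODE system \eqref{eqn:separate-c1}–\eqref{eqn:separate-c15}, and (iii) a verification argument that upgrades the candidate to the true value function and the candidate feedback to an optimal control.

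First I would carry out the inner optimization. Inserting $H^n = x + qS + h^n(t,q,\bar q,V,\bar V)$ into $\Ac^{\nu^n,\bar\nu}H^n$ and using $\d_x H^n = 1$, $\d_q H^n = S + \d_q h^n$, $\d_V H^n = \d_V h^n$, the part of $\Ac^{\nu^n,\bar\nu}H^n$ depending on $\nu^n$ is
\begin{equation*}
-(S+k\nu^n+\bar k\bar\nu)\nu^n + \nu^n(S+\d_q h^n) - \gamma\nu^n\,\d_V h^n = -k(\nu^n)^2 + \nu^n\bigl(\d_q h^n - \gamma\,\d_V h^n - \bar k\bar\nu\bigr),
\end{equation*}
where crucially the two terms linear in $S$ cancel. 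Since $k>0$ this is strictly concave in $\nu^n$, so the unique maximizer is $\nu^{n*} = \bigl(\d_q h^n - \gamma\,\d_V h^n - \bar k\bar\nu\bigr)/(2k)$; evaluating $\d_q h^n$, $\d_V h^n$ and $\bar\nu$ for the quadratic $h^n$ and affine \eqref{eq:nubar-form-separate} reproduces exactly the feedback \eqref{eq:nustar-separate}, and the maximal value of this $\nu^n$-dependent part equals $\bigl(\d_q h^n - \gamma\,\d_V h^n - \bar k\bar\nu\bigr)^2/(4k)$.

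Next I would substitute the maximizer and the ansatz for $\bar\nu$ back into \eqref{eq:hjb-pde-separate}. Since $h^n$ is independent of $x$ and $S$, all $x$- and $S$-derivatives beyond $\d_x H^n = 1$ and $\d_S H^n = q$ vanish, and the HJB equation collapses to the scalar PDE
\begin{multline*}
\d_t h^n + \bar\nu\,\d_{\bar q}h^n + (\mu+b\bar\nu)q - (\beta V+\bar\gamma\bar\nu)\,\d_V h^n - (\beta\bar V+(\gamma+\bar\gamma)\bar\nu)\,\d_{\bar V}h^n \\
+ \tfrac12\eta^2\d_{VV}h^n + \tfrac12\rho^2\eta^2\d_{\bar V\bar V}h^n + \rho^2\eta^2\d_{V\bar V}h^n + \frac{\bigl(\d_q h^n-\gamma\,\d_V h^n-\bar k\bar\nu\bigr)^2}{4k} = 0,
\end{multline*}
with terminal data $h^n(T,q,\bar q,V,\bar V) = qV - \alpha q^2$, obtained by reading $H^n(T,\cdot)=x+q(S+V)-\alpha q^2$ against $H^n=x+qS+h^n$. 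Because $\d_q h^n$, $\d_V h^n$ and $\bar\nu$ are affine in $(q,\bar q,V,\bar V)$, every term above is a polynomial of degree at most two, so the equation holds identically if and only if the coefficients of the fifteen monomials $1,q,\bar q,V,\bar V,q^2,\bar q^2,V^2,\bar V^2,q\bar q,qV,q\bar V,\bar q V,\bar q\bar V,V\bar V$ all vanish; this is precisely the system \eqref{eqn:separate-c1}–\eqref{eqn:separate-c15}, and the terminal data forces $c_{11}(T)=1$, $c_6(T)=-\alpha$, and $c_i(T)=0$ otherwise. I expect this coefficient bookkeeping to be the only real work and the main place an error could slip in — in particular keeping straight the $f_i$-contributions coming from $\bar\nu\,\d_{\bar q}h^n$ and from the two transient-impact terms, and the contributions of $\d_{VV}$, $\d_{\bar V\bar V}$, $\d_{V\bar V}$ to the equations for $c_8$, $c_9$, $c_{15}$ — but it is entirely mechanical.

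Finally I would close with the verification step. Granted the $c_i$ are $C^1$ (which holds once $f_1,f_2,f_3$ are continuous, the right-hand sides being polynomial in $(c_i,f_j)$), the candidate $H^n$ is $C^{1,2}$, has at most quadratic growth in the state, and solves \eqref{eq:hjb-pde-separate}. Applying It\^o's formula to $s\mapsto H^n\bigl(s,X_s^{n,\nu^n,\bar\nu},Q_s^{n,\nu^n},\bar Q_s^{\bar\nu},S_s^{\bar\nu},V_s^{n,\nu^n,\bar\nu},\bar V_s^{\bar\nu}\bigr)$ for an arbitrary admissible $\nu^n$, the stochastic-integral term is a true martingale by the admissibility bound $\Eb\int_0^T(\nu^n_t)^2\,dt<\infty$ and the finite second moments of the initial data, while the drift is $\le 0$ for every $\nu^n$ and $\equiv 0$ for the feedback control $\nu^{n*}$; taking expectations and letting $s\uparrow T$ shows $J(\nu^n;\nu^{-n})\le H^n$ with equality at $\nu^{n*}$, which identifies $H^n$ with \eqref{def:Hn-separate} and shows $\nu^{n*}$ is optimal. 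As with the earlier propositions, this last part can be recorded in a single sentence, the substance being the direct substitution.
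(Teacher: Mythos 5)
Your proposal is correct and follows essentially the same route as the paper, whose entire proof is the statement that the result follows by direct substitution into \eqref{eq:hjb-pde-separate}; your inner maximization, coefficient matching (which I spot-checked for the $c_1$ and $c_2$ equations), terminal conditions, and feedback formula all agree with the stated system. The added verification step merely makes explicit what the paper leaves implicit, so no substantive difference.
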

\begin{proof}
	This is shown by direct substitution into equation \eqref{eq:hjb-pde-separate}.
\end{proof}
In a similar fashion to the previous section, we require a consistency condition to be satisfied in order for the trading strategy in \eqref{eq:nustar-separate} to yield a mean-field Nash equilibrium. The strategy \eqref{eq:nustar-separate} is based on the ansatz that the average trading speed is given by \eqref{eq:nubar-form-separate}, therefore we must impose that when each agent uses the strategy \eqref{eq:nustar-separate} the resulting average trading speed is \eqref{eq:nubar-form-separate}.  Thus, we require
\begin{align}
	\lim_{N\rightarrow\infty}\frac{1}{N}\sum_{n=1}^N \nu^{n*}(t,q^n,\bar{q},V^n,\bar{V}) = \bar{\nu}(t,\bar{q},\bar{V})\,.
\end{align}
Substituting \eqref{eq:nubar-form} and \eqref{eq:nustar-shared} into this equation yields
\begin{align}
	f_1 &= \frac{c_2-\gamma c_4}{2k+\bar{k}}\,, &	f_2 &= \frac{2c_6+c_{10} - \gamma(c_{11}+c_{13})}{2k+\bar{k}}\,, & f_3 &= \frac{c_{11}+c_{12}-\gamma(2c_8+c_{15})}{2k+\bar{k}}\,.\label{eqn:separate-f}
\end{align}
As we did in the previous section, we will only consider solutions of \eqref{eqn:separate-c1} to \eqref{eqn:separate-c15} in which \eqref{eqn:separate-f} has been enforced. This means we only consider optimal trading strategies that result in equilibrium. Also as in the previous section, the trading strategies in a mean-field Nash equilibrium can be written in a particular form
\begin{proposition}
	In equilibrium, the trading strategy of agent $n$ and the average trading rate of all agents are related by
	\begin{align}
		\nu^{n*}(t,q^n,\bar{q},V^n,\bar{V}) &= \frac{2c_6(t)-\gamma c_{11}(t)}{k}(q^n-\bar{q}) + \frac{c_{11}(t)-2\gamma c_8(t)}{2k}(V^n-\bar{V}) + \bar{\nu}(t,\bar{q},\bar{V})
	\end{align}
\end{proposition}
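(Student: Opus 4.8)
The plan is to mirror, up to relabelling, the argument used for the corresponding statement in the shared-signal case. I start from the equilibrium feedback control \eqref{eq:nustar-separate} produced by the previous proposition, substitute the consistency relations \eqref{eqn:separate-f} for $f_1,f_2,f_3$, and then reorganise the resulting affine function of $(q^n,\bar q,V^n,\bar V)$ so that the $q^n$-term is paired with the $\bar q$-term and the $V^n$-term with the $\bar V$-term; what is left over is exactly $\bar\nu(t,\bar q,\bar V)=f_1+f_2\bar q+f_3\bar V$.

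Concretely I would carry out four elementary checks. First, the constant term of $\nu^{n*}-\bar\nu$ equals $\tfrac{c_2-\gamma c_4-\bar k f_1}{2k}-f_1=\tfrac{(c_2-\gamma c_4)-(2k+\bar k)f_1}{2k}$, which vanishes because $f_1=\tfrac{c_2-\gamma c_4}{2k+\bar k}$. Second, the coefficient of $\bar q$ in $\nu^{n*}-\bar\nu$ equals $\tfrac{c_{10}-\gamma c_{13}-\bar k f_2}{2k}-f_2=\tfrac{(c_{10}-\gamma c_{13})-(2k+\bar k)f_2}{2k}$, and inserting $f_2=\tfrac{2c_6+c_{10}-\gamma(c_{11}+c_{13})}{2k+\bar k}$ collapses the numerator to $-(2c_6-\gamma c_{11})$, so this coefficient is $-\tfrac{2c_6-\gamma c_{11}}{2k}$, the negative of the $q^n$-loading appearing in \eqref{eq:nustar-separate}. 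Third, the coefficient of $\bar V$ in $\nu^{n*}-\bar\nu$ equals $\tfrac{c_{12}-\gamma c_{15}-\bar k f_3}{2k}-f_3=\tfrac{(c_{12}-\gamma c_{15})-(2k+\bar k)f_3}{2k}$, and inserting $f_3=\tfrac{c_{11}+c_{12}-\gamma(2c_8+c_{15})}{2k+\bar k}$ collapses the numerator to $-(c_{11}-2\gamma c_8)$, so this coefficient is $-\tfrac{c_{11}-2\gamma c_8}{2k}$, the negative of the $V^n$-loading. Fourth, I simply combine these pieces: $\nu^{n*}-\bar\nu=\tfrac{2c_6-\gamma c_{11}}{2k}(q^n-\bar q)+\tfrac{c_{11}-2\gamma c_8}{2k}(V^n-\bar V)$, i.e. the asserted relation, with the inventory-difference gain being precisely the $q^n$-loading $\tfrac{2c_6-\gamma c_{11}}{2k}$ and the signal-difference gain the $V^n$-loading $\tfrac{c_{11}-2\gamma c_8}{2k}$ already recorded in \eqref{eq:nustar-separate}.

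I do not anticipate a genuine obstacle: the computation is linear bookkeeping. The one point requiring care is that one must use exactly the consistency relations \eqref{eqn:separate-f} that were enforced when the ODE system \eqref{eqn:separate-c1}--\eqref{eqn:separate-c15} was solved, so that the residual term really is the equilibrium average rate $\bar\nu$ rather than an unrelated affine function; the algebra above only closes because $f_2$ (resp.\ $f_3$) is the specific combination of $c_i$'s dictated by \eqref{eqn:separate-f}. The structural observation worth emphasising, exactly as in the shared-signal case, is that in equilibrium each agent's deviation from the population's average trading rate depends only on how far his own inventory and signal sit from their cross-sectional means, through common purely time-dependent gains; this is what keeps the cross-sectional analysis of Section \ref{sec:distribution} tractable.
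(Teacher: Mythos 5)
Your method is exactly the one the paper intends: the shared-signal analogue is proved by simply combining the ansatz for $\bar{\nu}$, the feedback control, and the consistency relations, and you do the same here by substituting \eqref{eqn:separate-f} into \eqref{eq:nustar-separate} and checking that the constant and the $\bar{q}$- and $\bar{V}$-coefficients of $\nu^{n*}-\bar{\nu}$ collapse to the negatives of the $q^n$- and $V^n$-loadings. Your algebra is correct: the residual is $\nu^{n*}-\bar{\nu}=\frac{2c_6-\gamma c_{11}}{2k}(q^n-\bar{q})+\frac{c_{11}-2\gamma c_8}{2k}(V^n-\bar{V})$. However, this is \emph{not} literally the displayed statement, which puts $\frac{2c_6-\gamma c_{11}}{k}$ on $(q^n-\bar{q})$; your coefficient is half of that, and your derivation (in particular the $\bar{q}$-coefficient computing to $-\frac{2c_6-\gamma c_{11}}{2k}$) shows the $2k$ denominator is the correct one. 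The printed $k$ appears to be a slip carried over from the shared-signal case, where the inventory gain is $\frac{c_5}{k}=\frac{2c_5}{2k}$ and the correct generalization replaces $2c_5$ by $2c_6-\gamma c_{11}$, keeping the $2k$. So your proof is right, but you should say explicitly that it corrects the stated coefficient by a factor of two rather than asserting agreement with the formula as printed.
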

\begin{proposition}\label{prop:separate_closed}
	If $\alpha = \gamma = 0$ then $c_3 = c_6 = c_7 = c_{10} = c_{13} = c_{14} \equiv 0$ in equilibrium. If $\mu = 0$ then $c_2 = c_3 = c_4 = c_5 \equiv 0$ in equilibrium. If $\alpha=\gamma=\mu=0$ then the non-zero $c_i$ in equilibrium are given by
	\begin{align}
		c_1(t) &= \int_t^T \eta^2 c_8(s) + \rho^2\eta^2 (c_9(s) + c_{15}(s)) \dd s\,,\\
		c_8(t) &= \frac{T-t}{4k}e^{-2\beta(T-t)}\,,\\
		c_9(t) &= \frac{-4z^2e^{-\frac{2b}{\kappa}(T-t)}}{((1+2z)e^{\omega(T-t)}-1)^2}D_9(t)\,,\\
		c_{11}(t) &= e^{-\beta(T-t)}\,,\\
		c_{12}(t) &= \frac{2z}{(1+2z)e^{\omega(T-t)}-1}-e^{-\beta(T-t)}\,,\\
		c_{15}(t) &= -\frac{(T-t)}{2k}e^{-2\beta(T-t)} + \biggl(\frac{2z}{(1+2z)e^{\omega(T-t)}-1}\biggr)\biggl(\frac{1-e^{-\frac{b}{\kappa}(T-t)}}{b}\biggr)e^{-\beta(T-t)}\,,
	\end{align}
	where
	\begin{align}
		D_9(t) &= \frac{1}{16kz^2}\biggl(\frac{1-e^{-2\omega\tau}}{2\omega}-\tau e^{-2\omega\tau}\biggr) - \frac{1+2z}{8kz^2}\biggl(\frac{1-e^{-\omega\tau}}{\omega}-\tau e^{-\omega\tau}\biggr) + \frac{1}{2bz}\biggl(1-e^{(\frac{2b}{\kappa}-\beta)\tau}\biggr) + \frac{k}{2b\kappa}\biggl(1-e^{\frac{2b}{\kappa}\tau}\biggr)\\
		& \hspace{10mm} - \frac{1+2z}{2bz}\biggl(1-e^{\frac{b}{\kappa}\tau}\biggr)-\frac{1}{32kz^2\omega}\biggl(1-e^{-2\omega\tau}\biggr) + \frac{(1+2z)b-4kz\omega}{8bkz^2\omega}\biggl(1-e^{-\omega\tau}\biggr)-\frac{(1+2z)^2}{16kz^2}\tau\,,
	\end{align}
	\begin{align*}
		\kappa &= 2k + \bar{k}\,, & z &= \frac{\kappa\beta - b}{2\bar{\gamma}}\,, & \omega &= \frac{\kappa\beta-b}{\kappa}\,, & \tau &= T-t\,.
	\end{align*}
\end{proposition}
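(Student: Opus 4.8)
The plan is to work entirely inside the \emph{equilibrium} system, i.e.\ the ODEs \eqref{eqn:separate-c1}--\eqref{eqn:separate-c15} after substituting \eqref{eqn:separate-f}, so that $f_1=(c_2-\gamma c_4)/\kappa$, $f_2=(2c_6+c_{10}-\gamma(c_{11}+c_{13}))/\kappa$, $f_3=(c_{11}+c_{12}-\gamma(2c_8+c_{15}))/\kappa$ with $\kappa=2k+\bar{k}$; throughout, the remaining $c_i$ are continuous on $[0,T]$ because they are, by hypothesis, part of a solution of the full system. For the first claim set $\alpha=\gamma=0$. Then $c_6(T)=-\alpha=0$, and when the six functions $c_3,c_6,c_7,c_{10},c_{13},c_{14}$ are all set to zero one has $f_2=0$ and every summand on the right-hand sides of their six equations vanishes; hence the zero function solves this six-dimensional subsystem with zero terminal data, and since its right-hand side is locally Lipschitz in $(c_3,c_6,c_7,c_{10},c_{13},c_{14})$ (being polynomial in all the $c_i$), uniqueness for the terminal-value problem forces $c_3=c_6=c_7=c_{10}=c_{13}=c_{14}\equiv0$. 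For the second claim set $\mu=0$: once the constant $\mu$ is removed from the $c_2$-equation, the quadruple $(c_2,c_3,c_4,c_5)$ --- and with it $f_1=(c_2-\gamma c_4)/\kappa$, which lies in its span --- enters the four equations for $c_2,c_3,c_4,c_5$ only linearly and homogeneously, all other coefficients being continuous in $t$; a linear homogeneous terminal-value problem with zero terminal data has only the zero solution, so $c_2=c_3=c_4=c_5\equiv0$ (heuristically, this is just the sign-flip symmetry of the $\mu=0$ model making $h^n$ even in $(q,\bar{q},V,\bar{V})$).

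For the third claim, combine the two vanishing sets: under $\alpha=\gamma=\mu=0$ only $c_1,c_8,c_9,c_{11},c_{12},c_{15}$ survive, and $f_1=f_2=0$, $f_3=(c_{11}+c_{12})/\kappa$. The surviving equations then form a triangular cascade of at worst first-order linear ODEs, integrated in the order $c_{11}\to c_{12}\to c_8\to c_{15}\to c_9\to c_1$. The $c_{11}$-equation collapses to $c_{11}'-\beta c_{11}=0$, $c_{11}(T)=1$, giving $c_{11}(t)=e^{-\beta(T-t)}$. Adding the $c_{11}$- and $c_{12}$-equations, $g:=c_{11}+c_{12}$ satisfies the Bernoulli equation $g'=\omega g+(\bar{\gamma}/\kappa)g^2$, $g(T)=1$, with $\omega=(\kappa\beta-b)/\kappa$; the substitution $w:=1/g$ linearizes it to $w'+\omega w=-\bar{\gamma}/\kappa$, and using $\kappa\omega=\kappa\beta-b=2\bar{\gamma}z$ one gets $w(t)=\bigl((1+2z)e^{\omega(T-t)}-1\bigr)/(2z)$, hence $c_{12}=g-c_{11}$ as stated. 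With $\gamma=0$ the $c_8$-equation is $c_8'-2\beta c_8=-c_{11}^2/(4k)$; the ansatz $c_8=A(t)e^{-2\beta(T-t)}$ gives $A'=-1/(4k)$, so $c_8(t)=\tfrac{T-t}{4k}e^{-2\beta(T-t)}$. With $c_8,c_{11},c_{12},f_3$ now known, the $c_{15}$-equation is the scalar linear ODE $c_{15}'-(2\beta+\bar{\gamma}f_3)c_{15}=2\bar{\gamma}f_3c_8-c_{11}(c_{12}-\bar{k}f_3)/(2k)$, where $\bar{\gamma}f_3=-(\ln w)'-\omega$ makes the integrating factor explicit; one further quadrature against exponential-polynomial data yields the claimed $c_{15}$. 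The $c_9$-equation is likewise scalar linear, $c_9'-(2\beta+2\bar{\gamma}f_3)c_9=\bar{\gamma}f_3c_{15}-(c_{12}-\bar{k}f_3)^2/(4k)$, and one more quadrature produces the exponential-polynomial combination gathered in $D_9$. Finally, since $f_1=0$ and $c_2=c_4=0$, the $c_1$-equation is simply $c_1'=-\eta^2c_8-\rho^2\eta^2(c_9+c_{15})$, $c_1(T)=0$, giving $c_1(t)=\int_t^T\bigl(\eta^2c_8(s)+\rho^2\eta^2(c_9(s)+c_{15}(s))\bigr)\,\dd s$.

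The conceptual part --- recognizing the two closed subsystems, then the triangular linear cascade with the single Bernoulli reduction for $g$ --- is routine; the only place real effort (and likely error) lies is the explicit evaluation of the quadratures for $c_{15}$ and especially $c_9$, which require integrating products of up to three exponentials with low-degree polynomials in $\tau=T-t$ and collecting many terms into $D_9$. Rather than carry out a clean forward integration of those two, I would verify the stated closed forms a posteriori --- differentiate each candidate $c_i$, substitute into its ODE with \eqref{eqn:separate-f} enforced, and check the terminal value --- so that the proof reduces to a finite, if tedious, algebraic verification; this is presumably what ``direct substitution'' in the authors' proof refers to.
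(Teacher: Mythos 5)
Your proposal is correct and follows essentially the same route as the paper: enforce the consistency condition, obtain the two vanishing claims from the structure of the resulting system (your Lipschitz/linear-uniqueness arguments simply make the paper's ``by inspection'' precise), and obtain the closed forms by integrating the surviving triangular cascade, deferring the tedious $c_{15}$, $c_9$, and $D_9$ expressions to verification by direct substitution---exactly the ``tedious computation \ldots checked by direct substitution'' in the paper's own proof. The pieces you do work out explicitly ($c_{11}$, the Bernoulli reduction for $c_{11}+c_{12}$ giving $c_{12}$, $c_8$, and $c_1$) agree with the stated formulas.
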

\begin{proof}
	In equations \eqref{eqn:separate-c1} to \eqref{eqn:separate-c15} we substitute \eqref{eqn:separate-f}. The first two conclusions can be seen by inspection. The expressions for the non-zero $c_i$ come from a tedious computation, but can be checked by direct substitution.
\end{proof}

\subsubsection{Numerical Experiments}

We consider again the loadings of the optimal strategy on the underlying processes. Note that from \eqref{eq:nustar-separate} we have
\begin{align}
		\nu^{n*}(t,q,\bar{q},V,\bar{V}) 
			&= \frac{c_2(t)-\gamma c_4(t)}{2k+\bar{k}} + \nu^*_q(t) q + \nu^*_{\bar{q}}(t)\bar{q} + \nu^*_V(t)V + \nu^*_{\bar{V}}(t) \bar{V}\, ,
\end{align}
where the loadings on $q$, $\bar{q}$, $V$ and $\bar{V}$ are given by
\begin{align}
	\nu^*_q(t) &= \frac{2c_6(t)-\gamma c_{11}(t)}{2k}\,, & \nu^*_{\bar{q}}(t) &= \frac{2k(c_{10}(t)-\gamma c_{13}(t)) - \bar{k}(2c_6(t) - \gamma c_{11}(t))}{2k(2k+\bar{k})}\,, \\
	\nu^*_V(t) &= \frac{c_{11}(t)-2\gamma c_8(t)}{2k}\,, & \nu^*_{\bar{V}}(t) &= \frac{2k(c_{12}(t)-\gamma c_{15}(t)) - \bar{k}(c_{11}(t) - 2\gamma c_8(t))}{2k(2k+\bar{k})}\, .
\end{align}
We plot the above four loadings in Figure \ref{fig:trade-loadings-separate}. The first three loadings are qualitatively similar to the situation where each agent's subjective valuation is governed by the same process, but the intuition behind understanding these loadings is more effectively shown by considering various values of some of the relevant parameters.

\begin{figure}
	\begin{center}
		{\includegraphics[trim=140 240 140 240, scale=0.4]{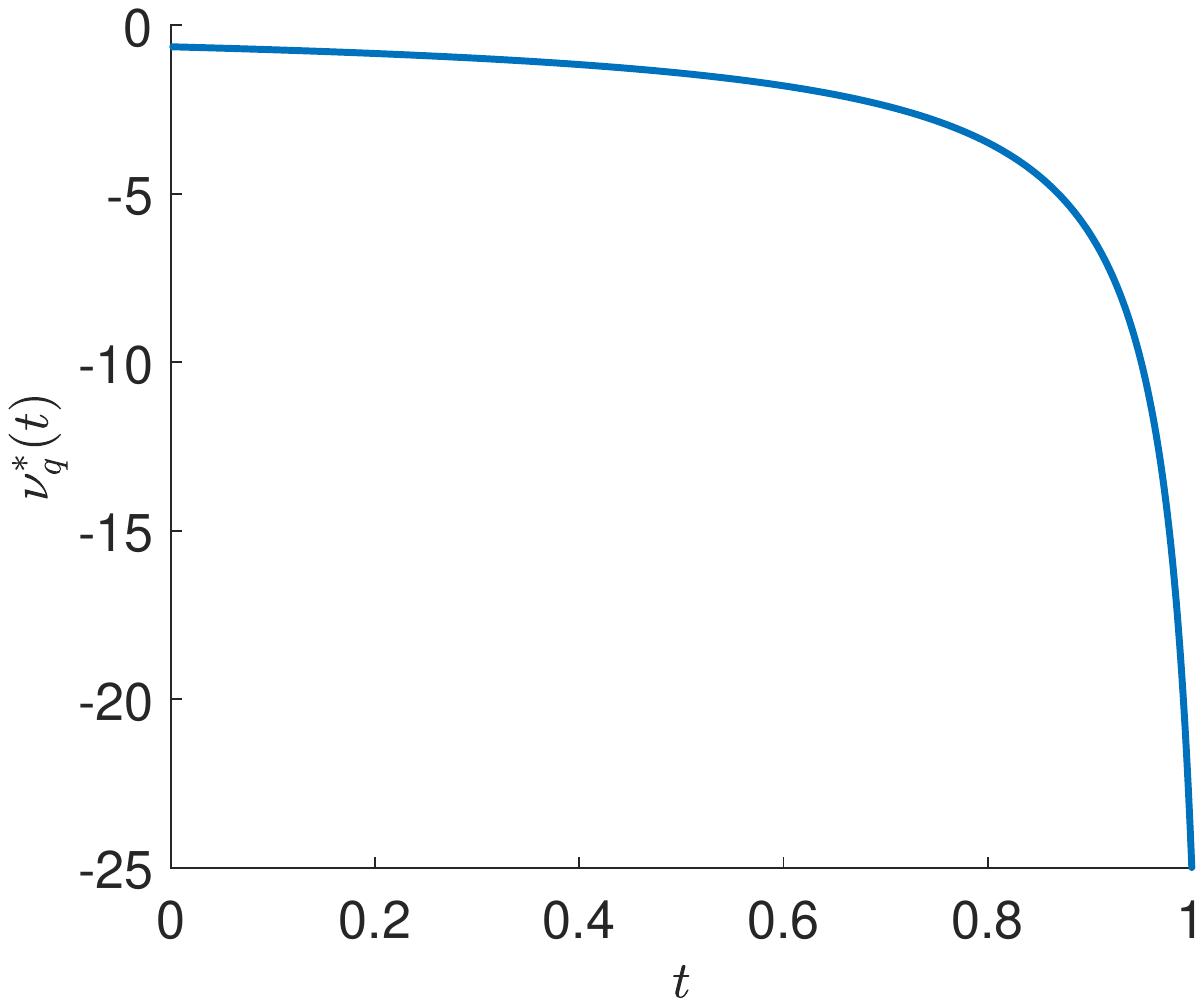}}\hspace{10mm}
		{\includegraphics[trim=140 240 140 240, scale=0.4]{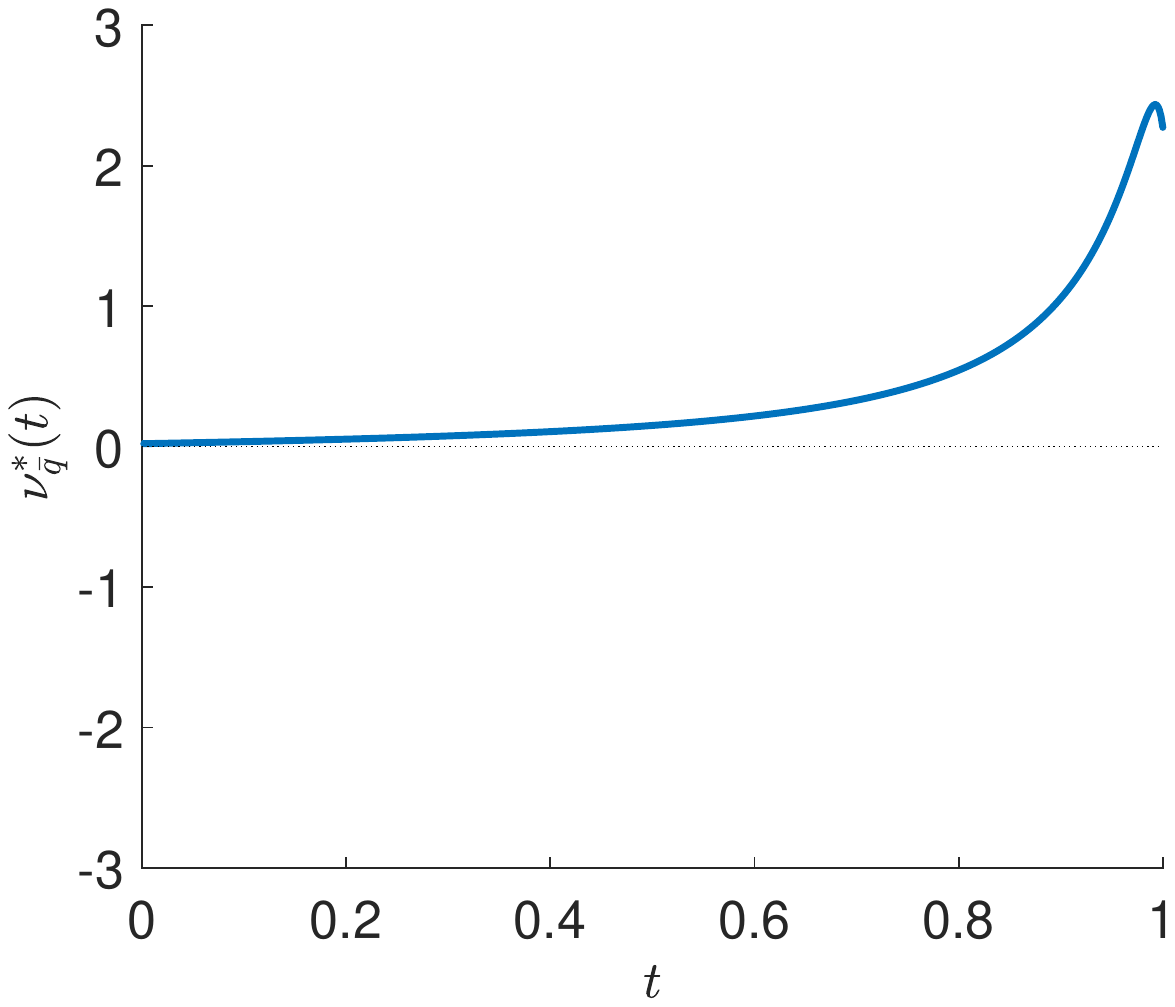}}\\
		{\includegraphics[trim=140 240 140 240, scale=0.4]{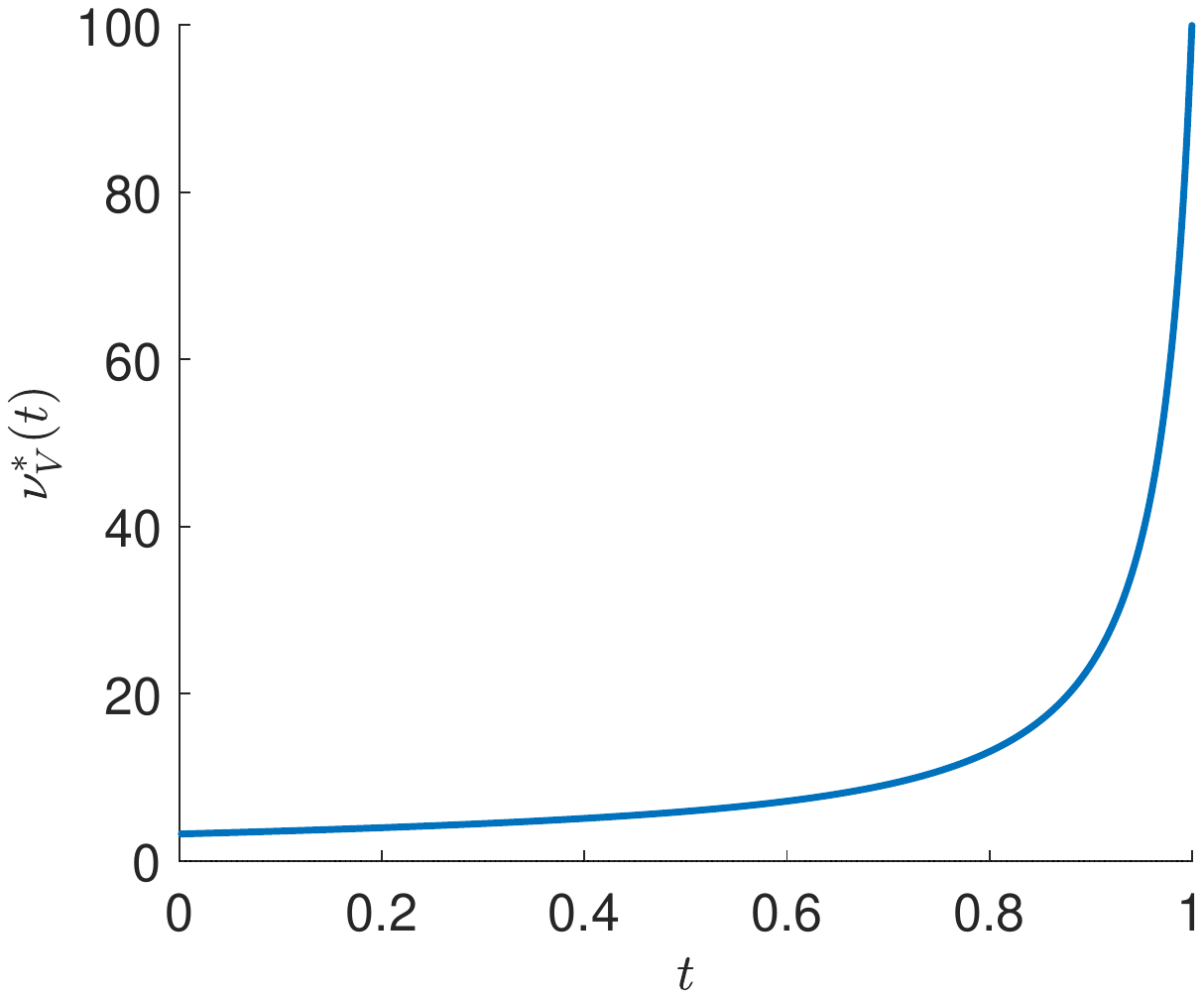}}\hspace{10mm}
		{\includegraphics[trim=140 240 140 240, scale=0.4]{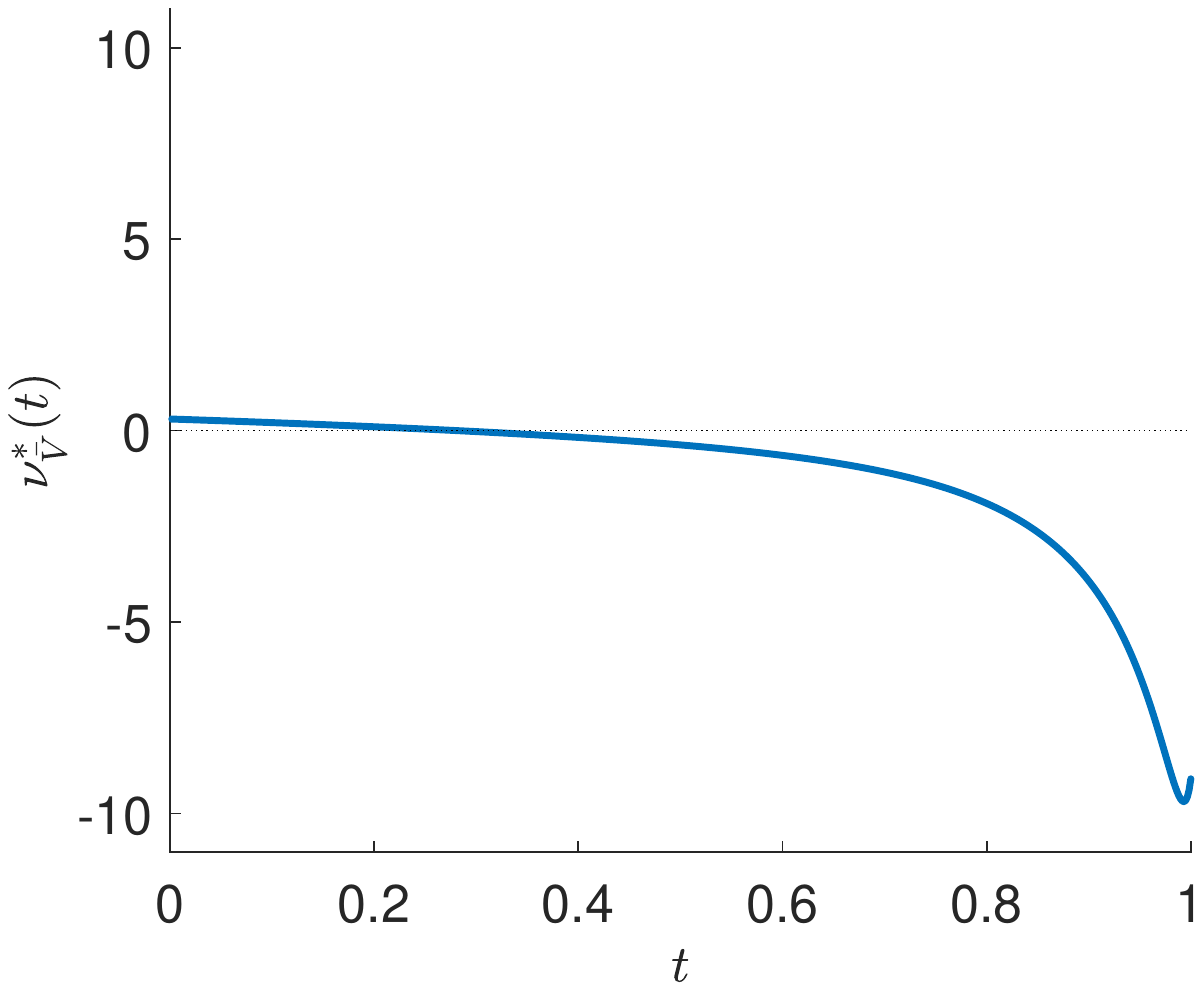}}
	\end{center}
	\vspace{-1em}
	\caption{Optimal loadings on $Q_t$, $\bar{Q}_t$, $V_t$, and $\bar{V}_t$. Parameters used are $\mu = 0$, $\sigma = 1$, $\eta = 0.5$, $\beta = 1$, $\gamma = 0.05$, $\bar{\gamma} = 0.1$, $\rho = 0.3$, $b = 10^{-2}$, $k = 5\cdot10^{-3}$, $\bar{k} = 10^{-3}$, $\alpha = 0.1$, and $T = 1$. \label{fig:trade-loadings-separate}}
\end{figure}

In Figure \ref{fig:nu_separate_qbarVbar_gammab} we show the loadings on $\bar{Q}_t$ and $\bar{V}_t$ as the three impact parameters $b$, $\gamma$, and $\bar{\gamma}$ are varied. Many of the features shown in this figure can be explained with similar reasoning to the discussion around Figure \ref{fig:nu_shared_qbar_gammab}. New features which deserve discussion are the qualitative shape of the loading $\nu^*_{\bar{V}}$ and the ordering of the curves based on the changing parameter.

Typically the loading $\nu_{\bar{V}}^*$ has a minimum value, usually negative, shortly before the end of the trading period. If the average signal viewed by agents is positive shortly before time $T$, then this will tend to increase the average order flow and the agent can expect their own trade signal to decrease, thus giving them reason to sell the asset. However, there is a counteracting effect which is the impact that the average order flow has on the asset price. When the average order flow is positive the asset price will tend to increase, giving incentive for the agent to buy shares shortly before time $T$. This explains why larger values of permanent price impact, $b$, result in higher loading $\nu_{\bar{V}}^*$ (bottom left panel) and why larger values of market impact on trade signal, $\bar{\gamma}$, result in lower loading $\nu_{\bar{V}}^*$ (bottom right panel). 

The permanence of price impact and the transience of trade signal impact also explain the sharp humps seen in this figure. Since any impact on the trade signal will decay over time due to mean reversion, the considerations of market wide order flow on trade signals become more significant shortly before $T$. The effects of market wide order flow on the price are long lasting, so the agent takes into account this effect over the entire trading period.


\begin{figure}
	\begin{center}
		{\includegraphics[trim=140 240 140 240, scale=0.4]{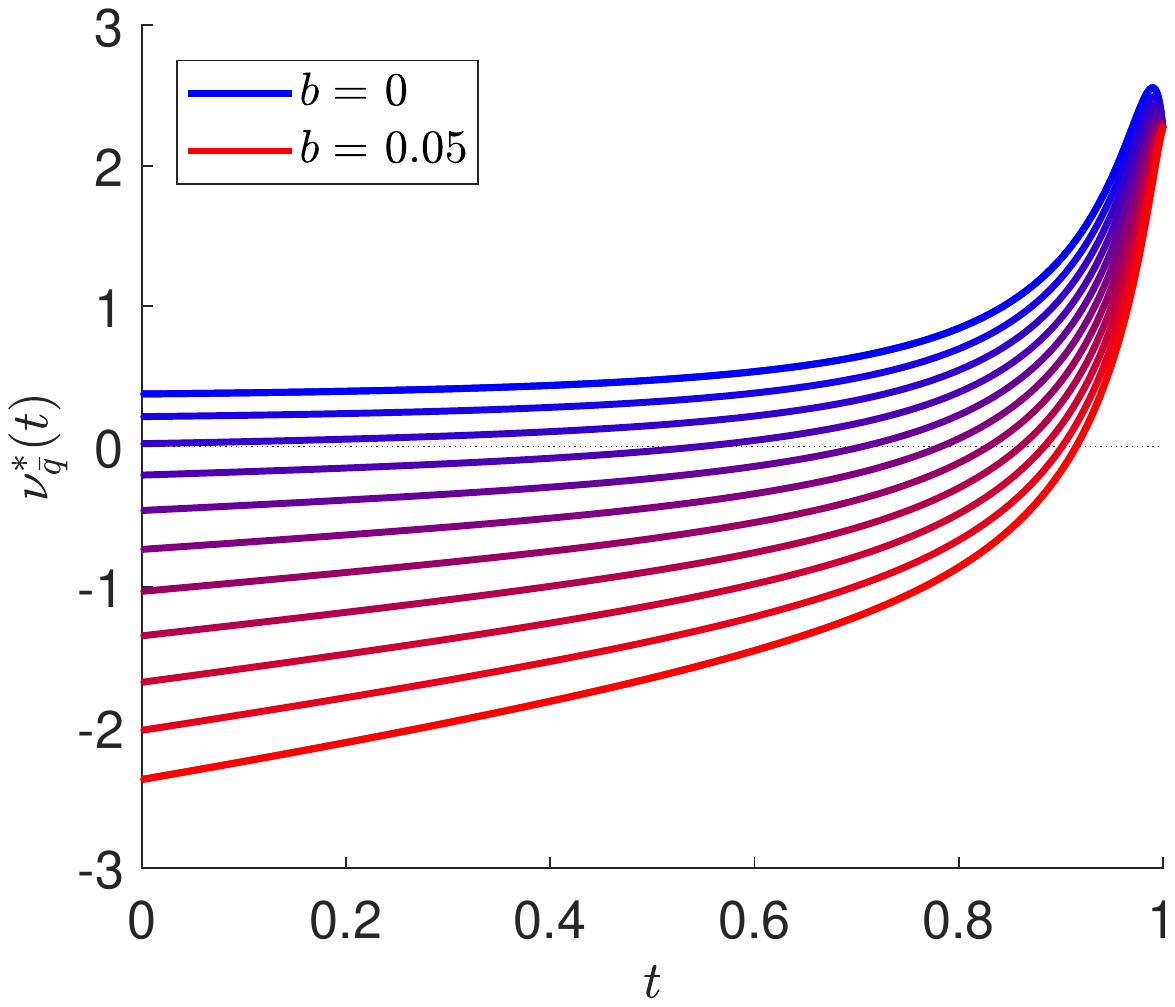}}\hspace{10mm}
		{\includegraphics[trim=140 240 140 240, scale=0.4]{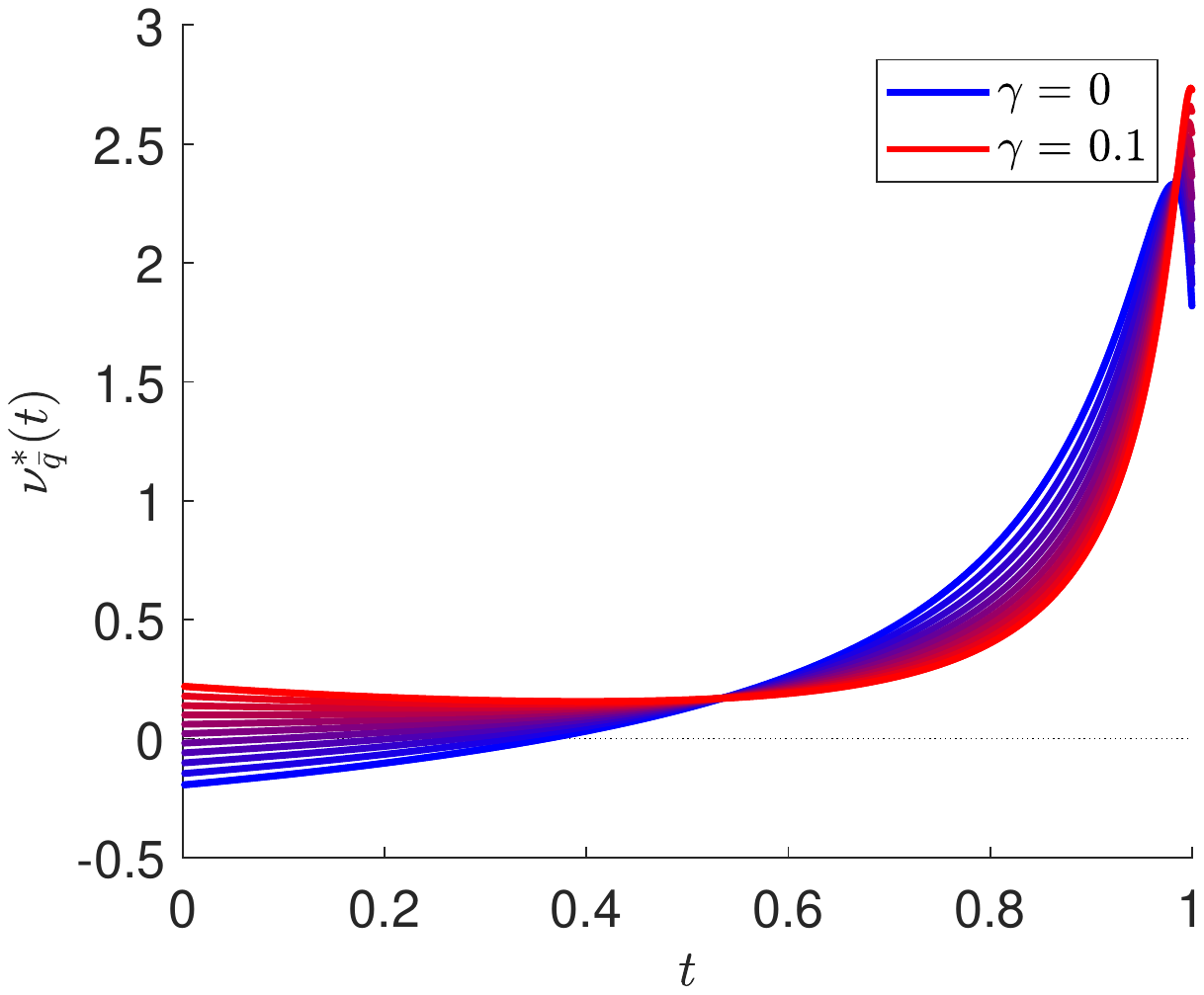}}\hspace{10mm}
		{\includegraphics[trim=140 240 140 240, scale=0.4]{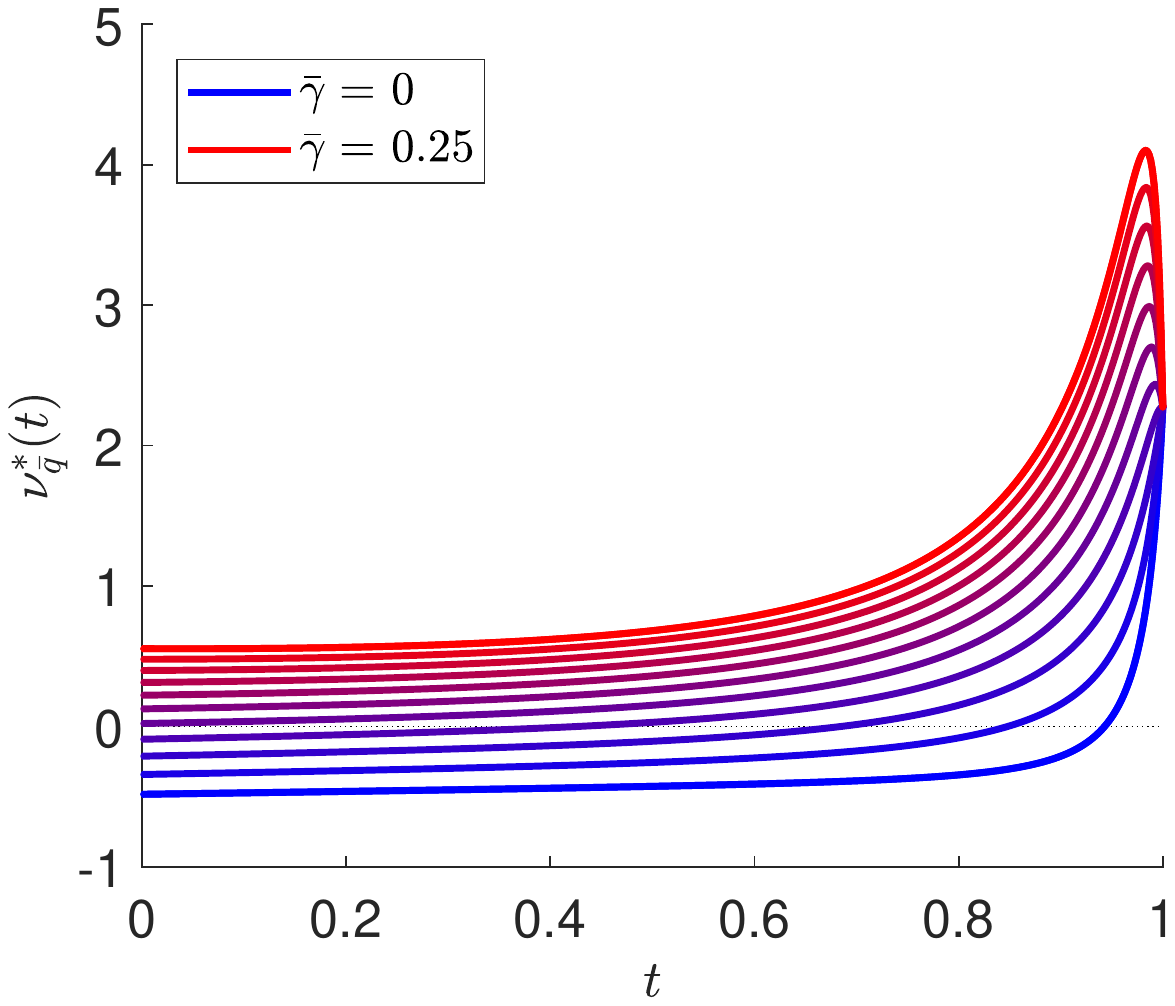}}\hspace{10mm}
		{\includegraphics[trim=140 240 140 240, scale=0.4]{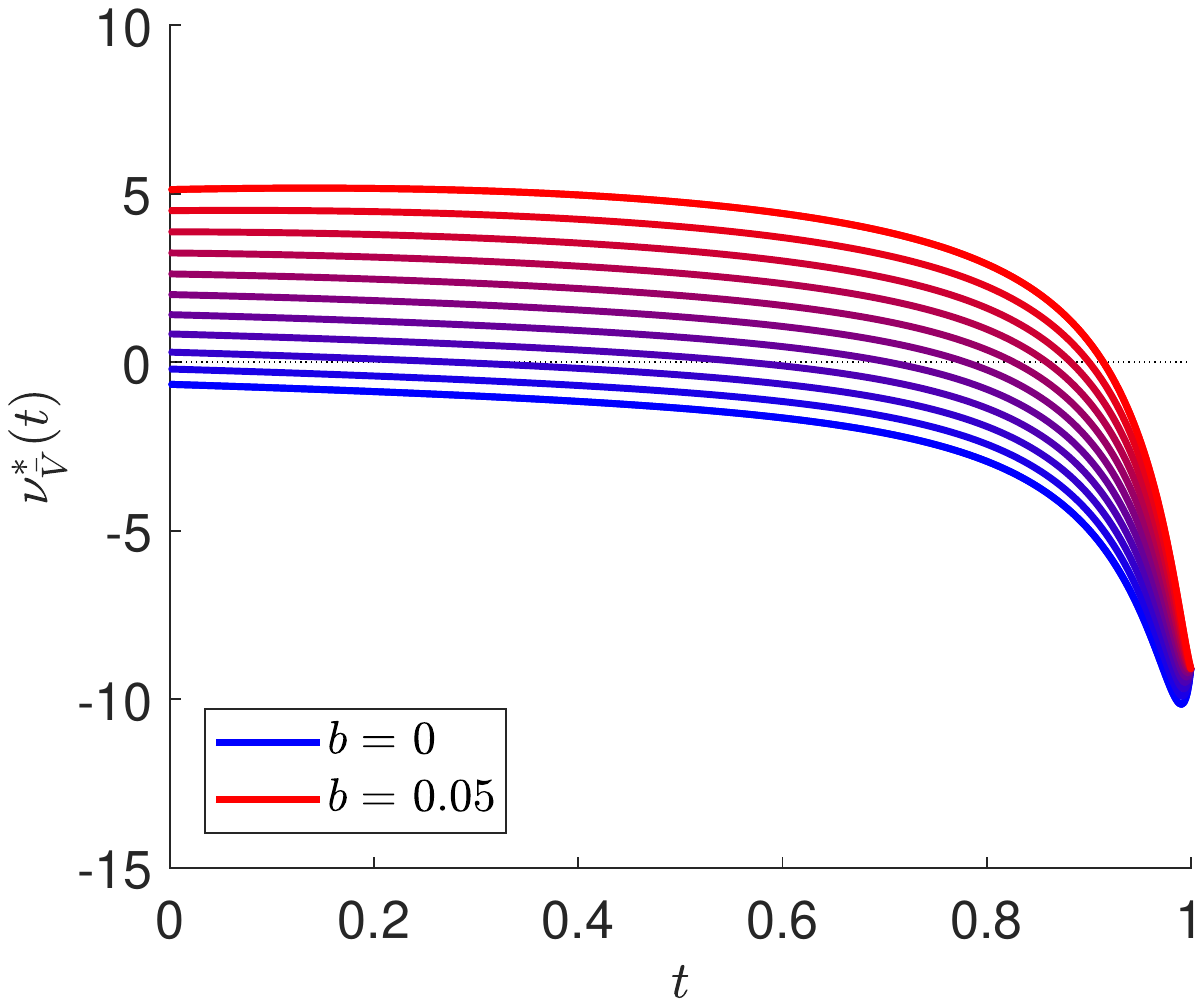}}\hspace{10mm}
		{\includegraphics[trim=140 240 140 240, scale=0.4]{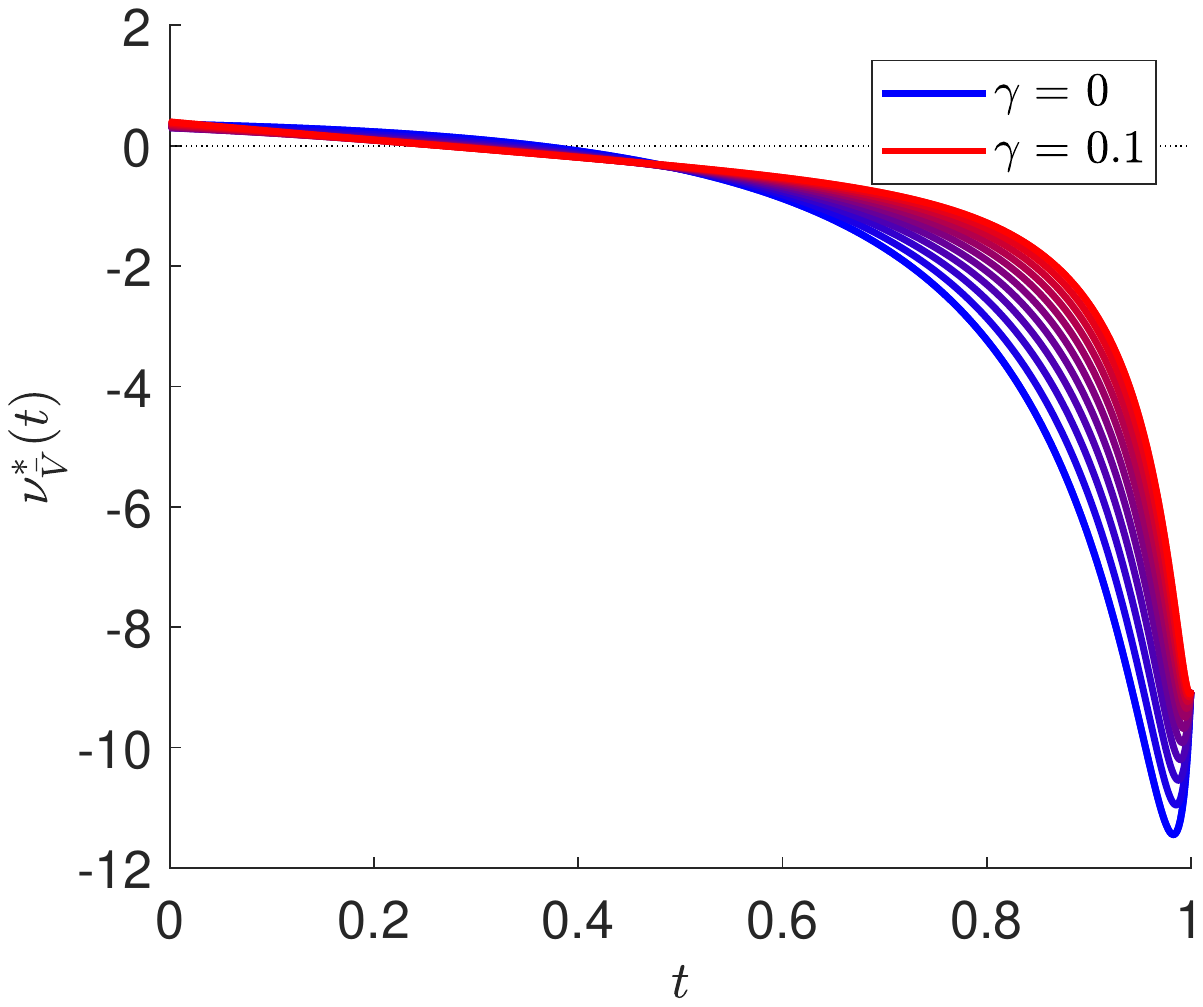}}\hspace{10mm}
		{\includegraphics[trim=140 240 140 240, scale=0.4]{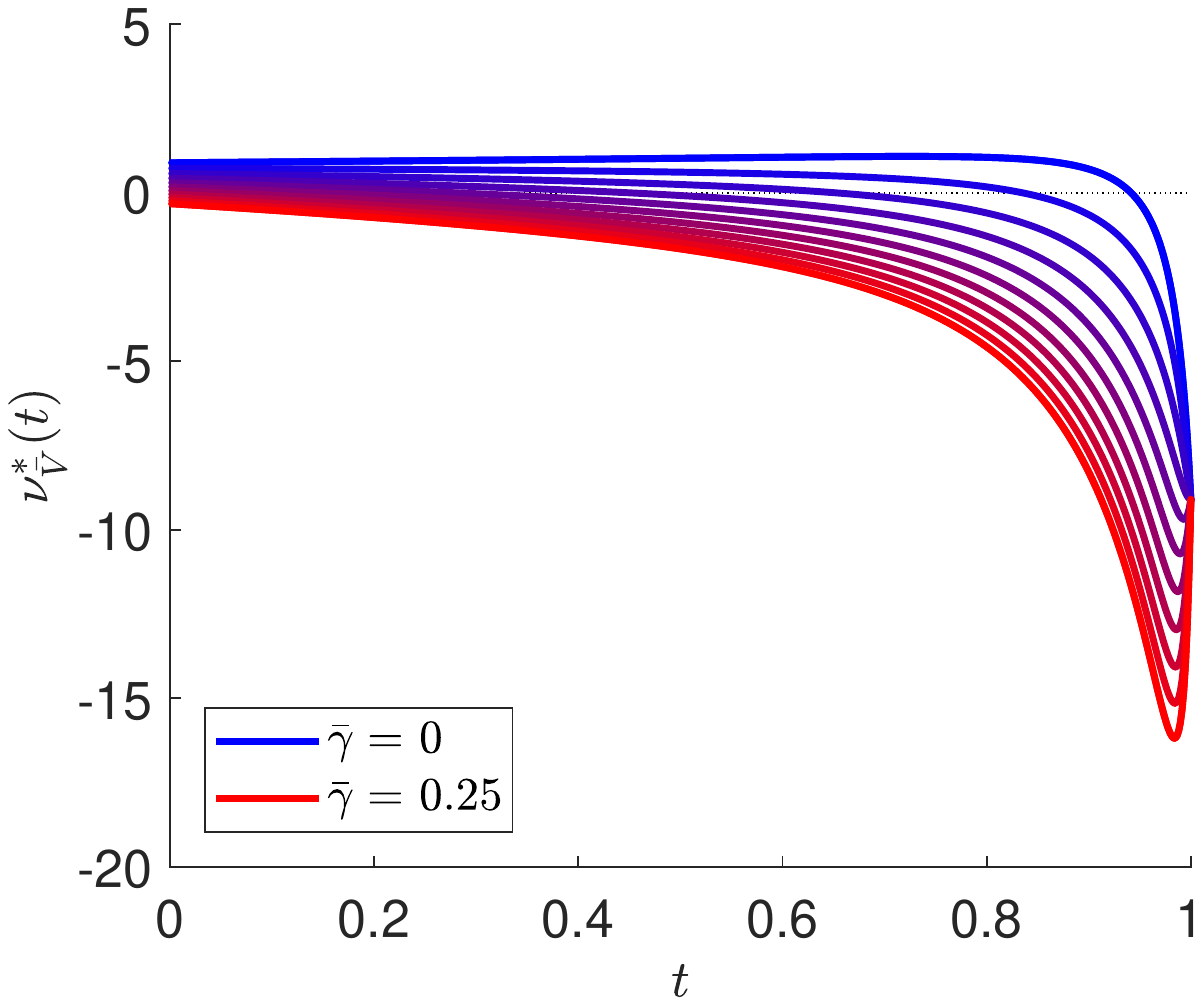}}\hspace{10mm}
	\end{center}
	\vspace{-1em}
	\caption{Top row shows optimal loading on $\bar{Q}_t$ for various parameters, bottom row shows optimal loading on $\bar{V}_t$. Each figure considers a change in only one parameter, indicated in the legend, from a minimum value (blue curve) to a maximum value (red curve). Otherwise the fixed parameters are $\mu = 0$, $\sigma = 1$, $\eta = 0.5$, $\beta = 1$, $\gamma = 0.05$, $\bar{\gamma} = 0.1$, $\rho = 0.3$, $b = 10^{-2}$, $k = 5\cdot10^{-3}$, $\bar{k} = 10^{-3}$, $\alpha = 0.1$, and $T = 1$.  \label{fig:nu_separate_qbarVbar_gammab}}
\end{figure}

In Figure \ref{fig:MC-separate} we show a simulation of relevant processes when each agent adopts the mean-field optimal strategy depicted in Figure \ref{fig:trade-loadings-separate}. The main qualitative difference between this simulation and that shown in Figure \ref{fig:MC-shared} is that the distribution of terminal inventories $(Q_T^n)_{n\leq N}$ does not become concentrated around a particular value based on the average trade signal. In fact, in this particular simulation the terminal inventories have sample variance $1.32$ which is significantly greater than the initial sample variance of $0.24$ (the initial inventories are drawn from a distribution with variance $0.5^2=0.25$).

\begin{figure}
	\begin{center}
		{\includegraphics[trim=140 240 140 240, scale=0.4]{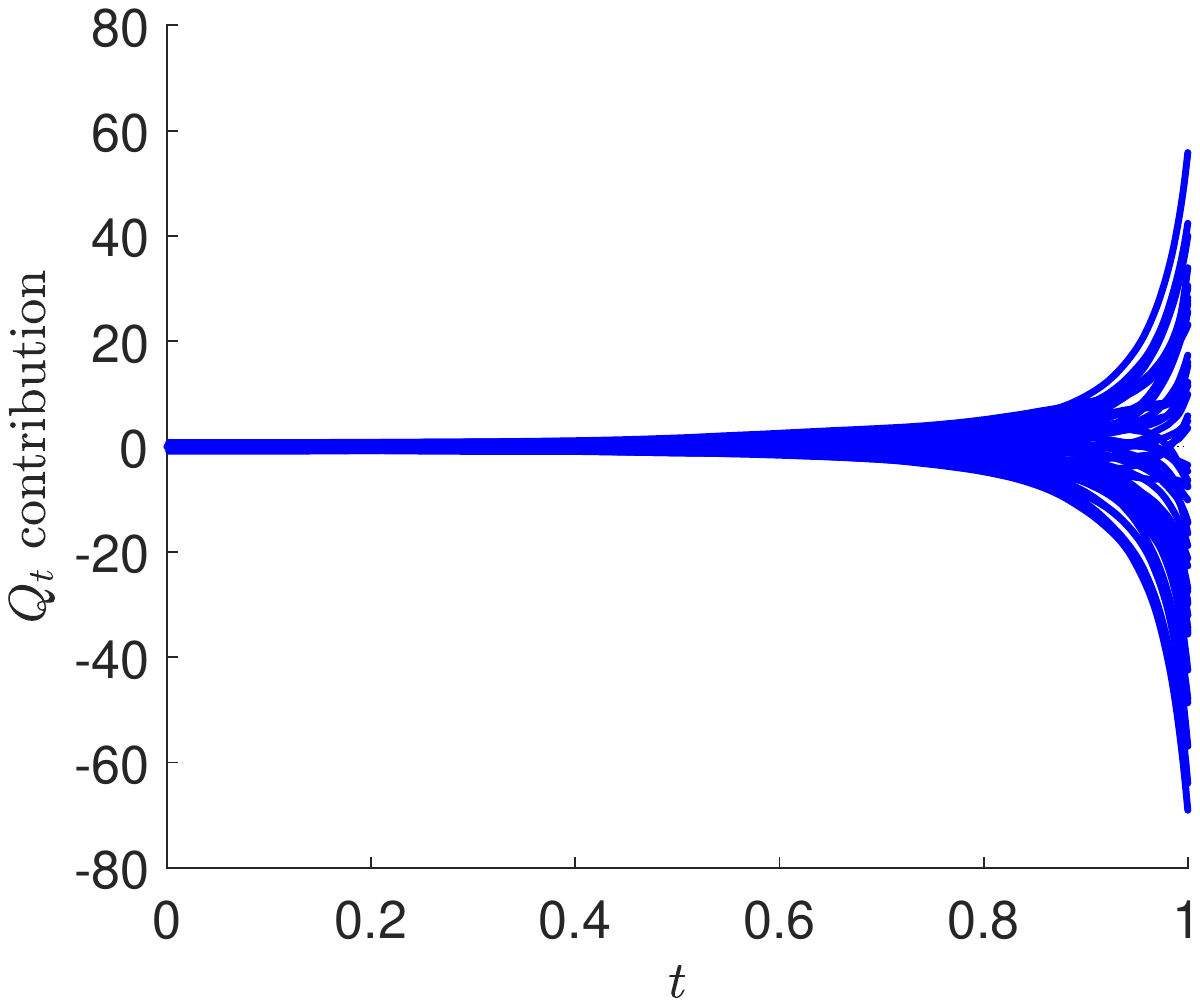}}\hspace{10mm}
		{\includegraphics[trim=140 240 140 240, scale=0.4]{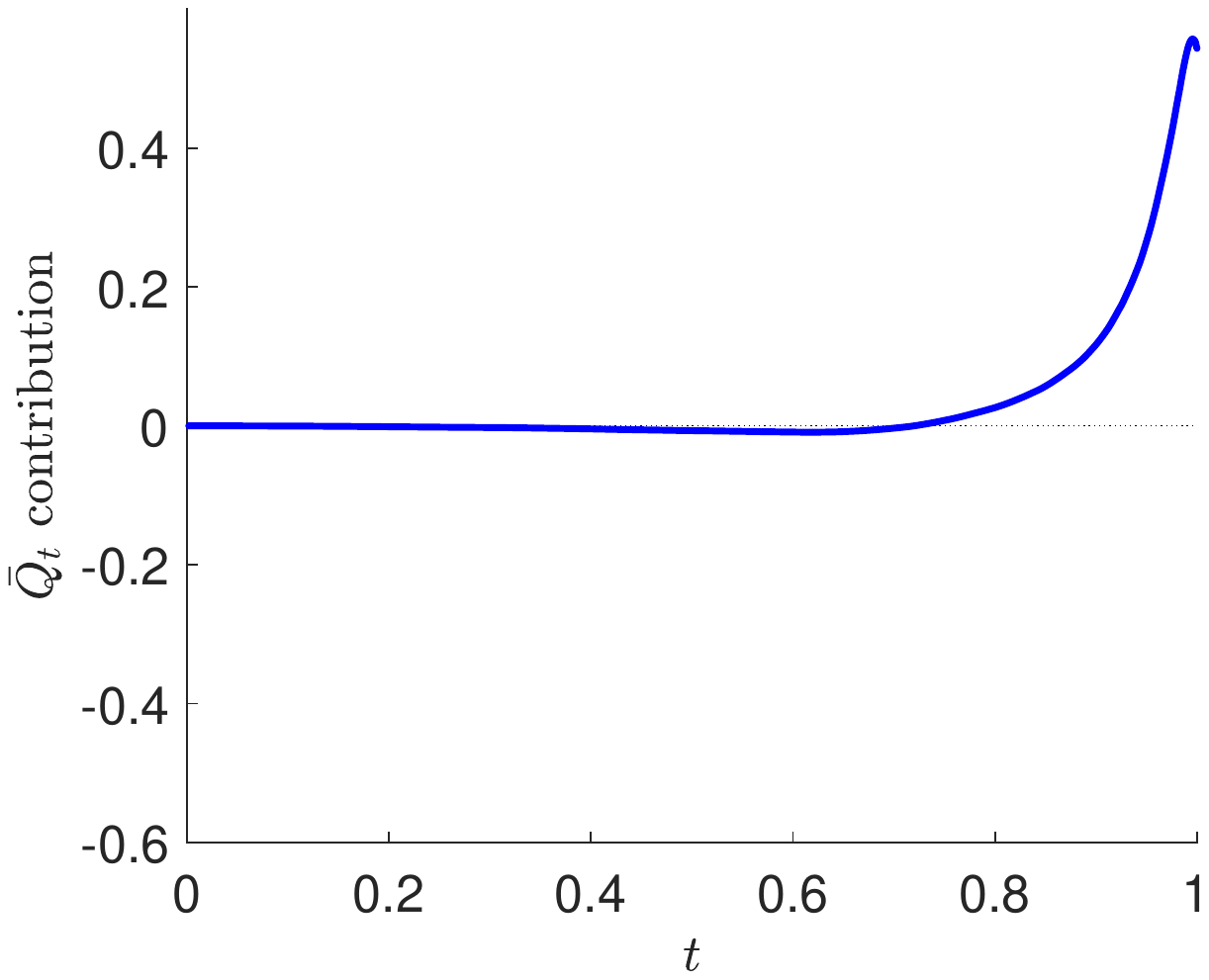}}\hspace{10mm}
		{\includegraphics[trim=140 240 140 240, scale=0.4]{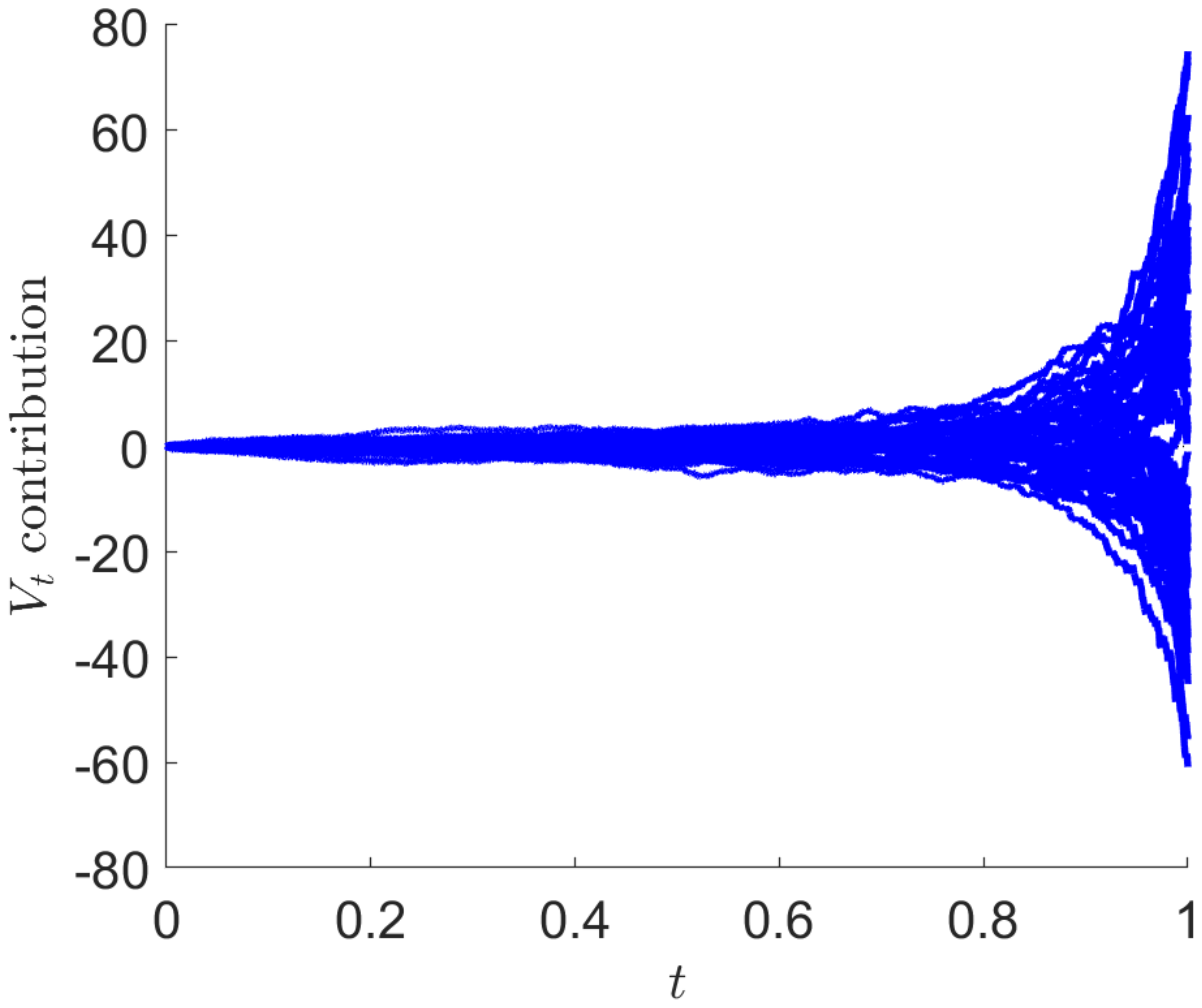}}\hspace{10mm}
		{\includegraphics[trim=140 240 140 240, scale=0.4]{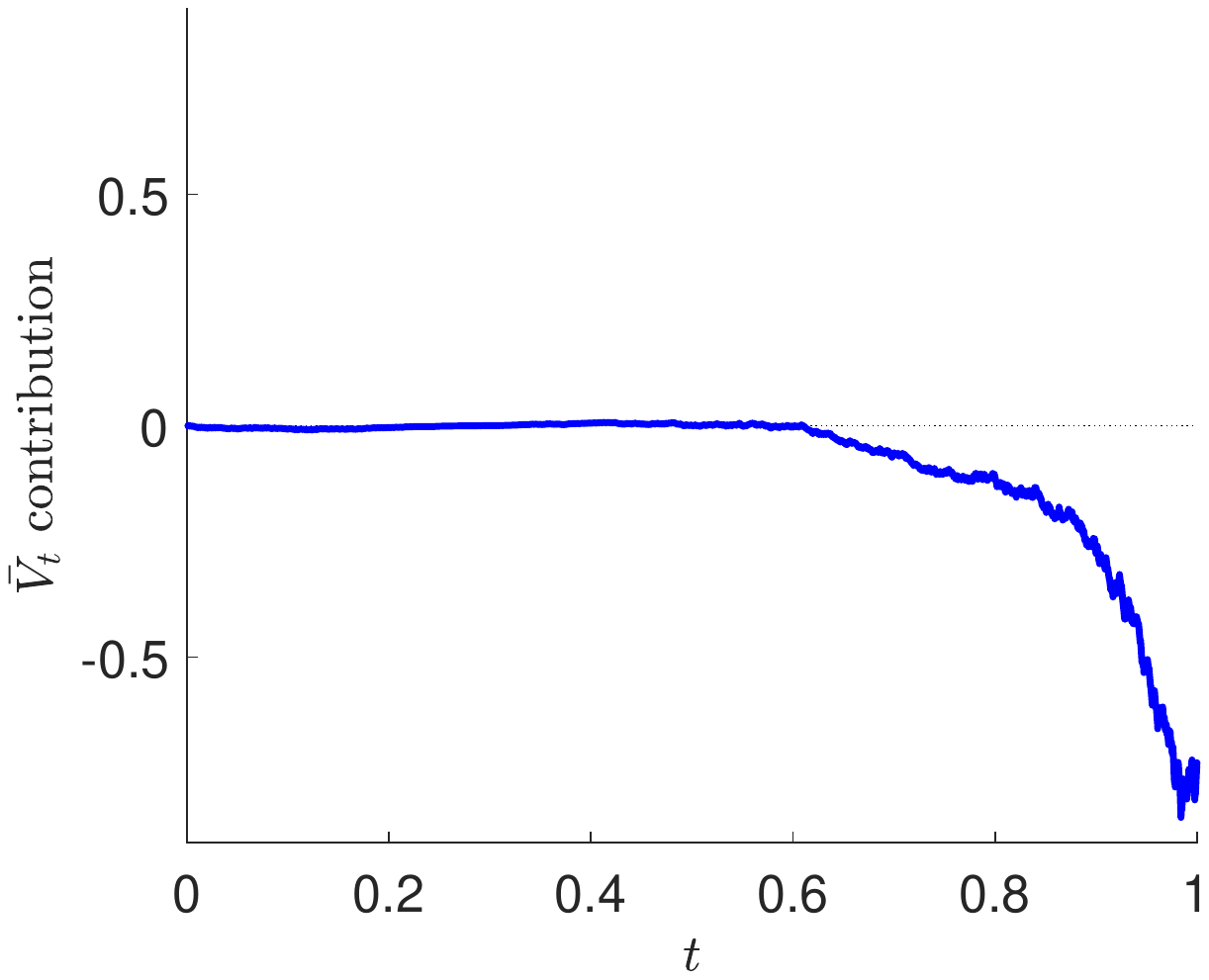}}\hspace{10mm}
		{\includegraphics[trim=140 240 140 240, scale=0.4]{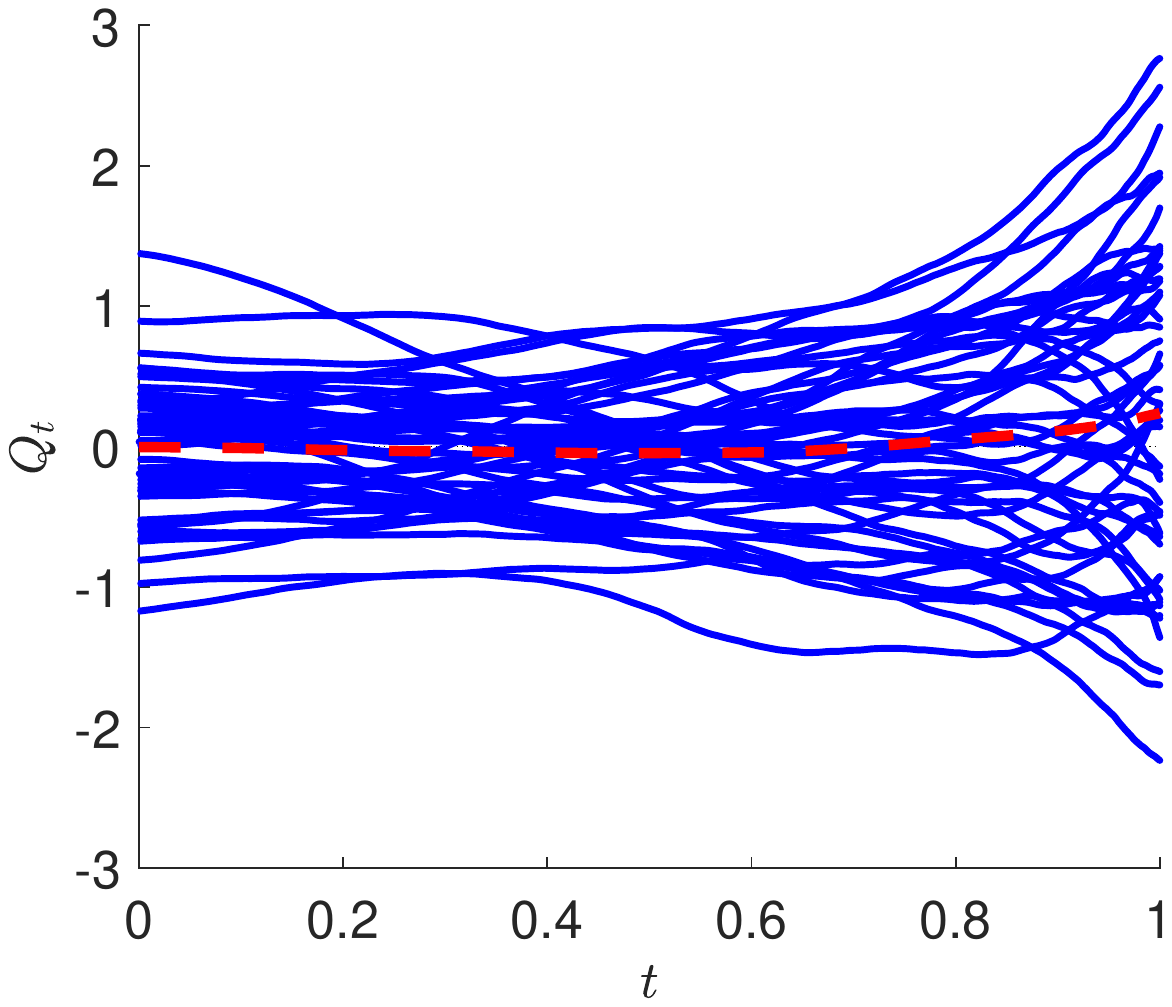}}\hspace{10mm}
		{\includegraphics[trim=140 240 140 240, scale=0.4]{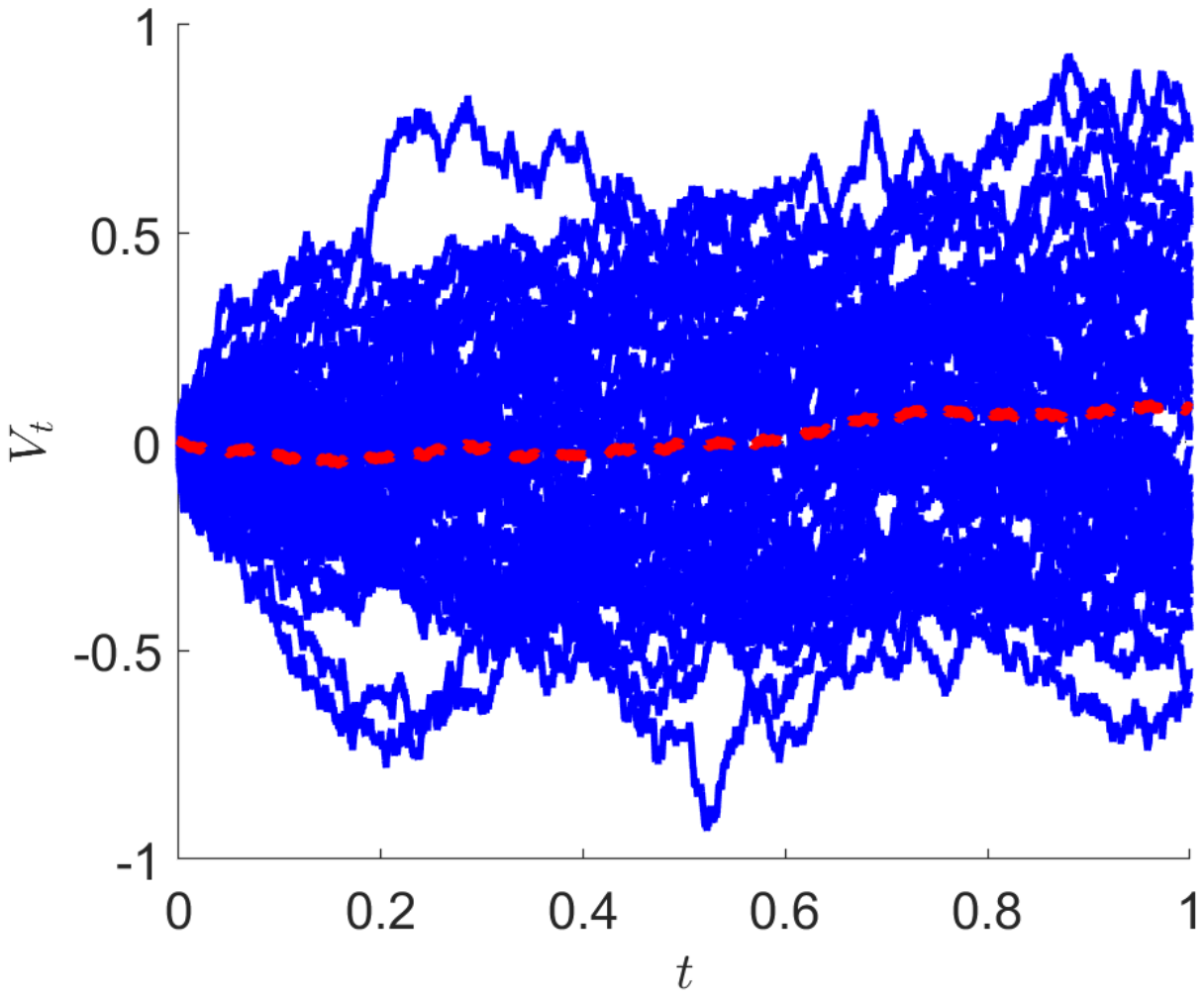}}\hspace{10mm}
		{\includegraphics[trim=140 240 140 240, scale=0.4]{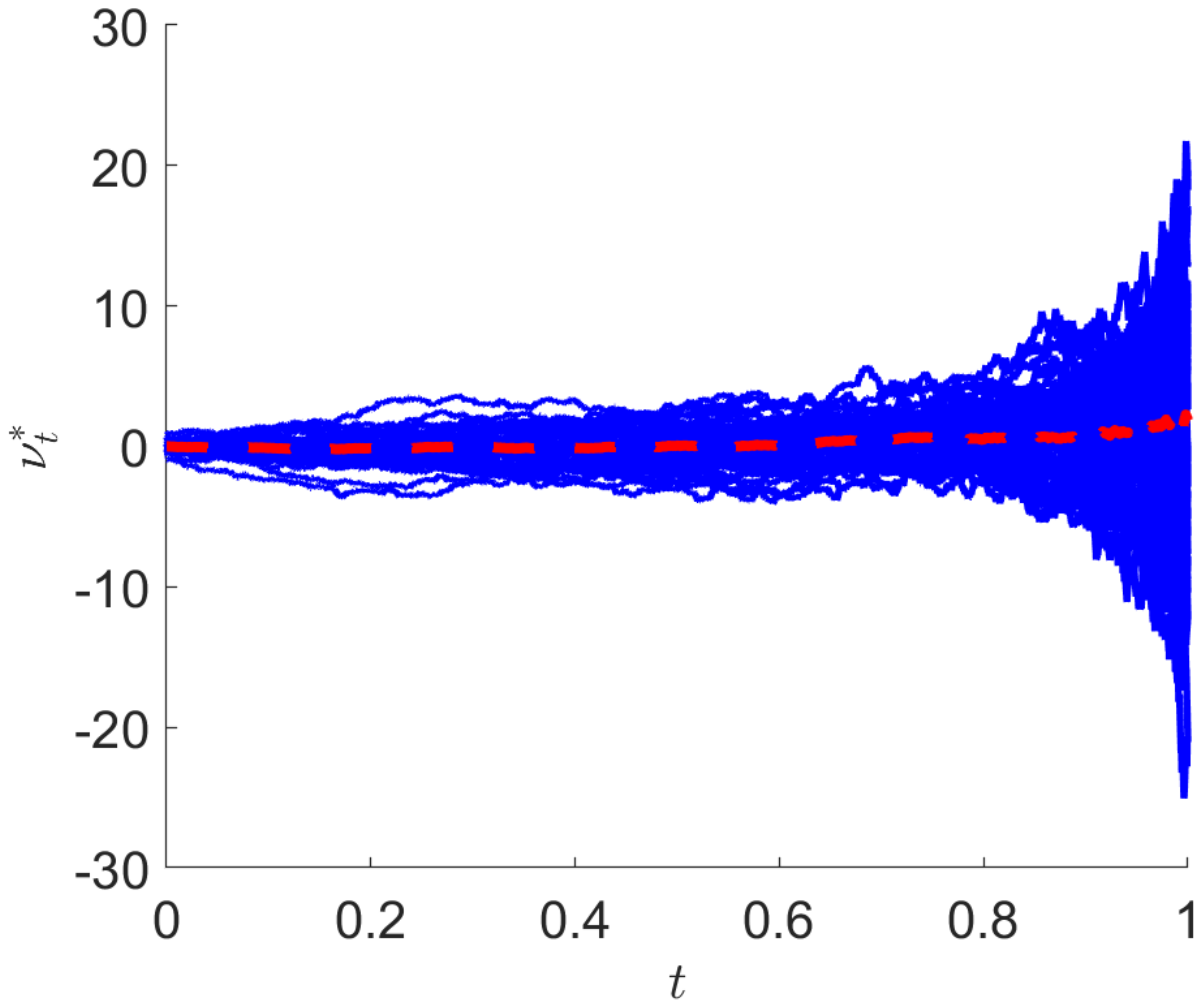}}
	\end{center}
	\vspace{-1em}
	\caption{The top row shows the contributions to trading speed from $Q_t$, $\bar{Q}_t$, and $V_t$. The left panel of the second row shows the contributions from $\bar{V}_t$. The middle panel of the second row shows each agent's inventory path $Q_t^{n,\nu^n}$ (blue curves) as well as the average inventory of all agents $\bar{Q}^{\bar{\nu}}_t$ (red dotted curve). The right panel of the second row shows each agent's trade signal path $V_t^{n,\nu^n,\bar{\nu}}$ (blue curves) as well as the average trade signal of all agents $\bar{V}^{\bar{\nu}}_t$ (red dotted curve). The bottom panel shows each agent's optimal trading speed $\nu^n_t$ (blue curves) and the average trading speed $\bar{\nu}_t$ (red dotted curve). Parameters used are $\mu = 0$, $\sigma = 1$, $\eta = 0.5$, $\beta = 1$, $\gamma = 0.05$, $\bar{\gamma}=0.1$, $\rho = 0.3$, $b = 10^{-2}$, $k = 5\cdot10^{-3}$, $\bar{k} = 10^{-3}$, $\alpha = 0.1$, $T = 1$, $S_0 = 100$, $V^n_0 \sim\mathcal{N}(0,0.02^2)$, $Q_0^n\sim\mathcal{N}(0,0.5^2)$, and $N = 50$. \label{fig:MC-separate}}
\end{figure}

\section{Cross-Sectional Distribution of Inventories and Signals}\label{sec:distribution}

In this section we compute the joint distribution of the agents' inventories and signals when all agents use the mean-field equilibrium strategy given by \eqref{eq:nustar-separate} (with \eqref{eqn:separate-f} enforced). We do not directly consider the case when all agents observe the same trade signal because those corresponding results can be obtained from those of the separate signal by setting $\rho=1$, $\gamma=0$, and each $V_0^n$ the same constant. In addition, as we are assuming all agents are using the mean-field equilibrium strategies, we drop the notational dependencies on $\nu^n$ and $\bar{\nu}$.

We begin by defining the processes $Y^n = (Y^n_t)_{0 \leq t \leq T}$ and $\bar{Y} = (\bar{Y}_t)_{0 \leq t \leq T}$ by
\begin{align*}
	Y_t^n &= \left[\begin{array}{c} Q_t^n \\ V_t^n\end{array}\right]\,, & \bar{Y}_t &= \left[\begin{array}{c} \bar{Q}_t \\ \bar{V}_t\end{array}\right]\,.
\end{align*}
We also introduce random measure processes on $\mathds{R}^2$, denoted $m^N = (m^N_t)_{0\leq t \leq T}$ and $m = (m_t)_{0\leq t \leq T}$, which are given by
\begin{align*}
	m^N_t &= \frac{1}{N}\sum_{n=1}^N \delta_{Y^n_t}\,, &
	m_t &= \lim_{N\rightarrow\infty} m^N_t\,.
\end{align*}
In the next proposition we provide expressions for the mean vector $\bar{Y}_t$ and covariance matrix $\bar{\Sigma}_t$ of the distribution induced by $m$. 

\begin{proposition}\label{prop:pop_distribution}
	Let $a_t$, $B_t$, and $C_t$ be given by
	\begin{align*}
		a_t &= \frac{c_2(t)-\gamma c_4(t)}{2k+\bar{k}}\left[\begin{array}{c} 1 \\ -(\gamma+\bar{\gamma})\end{array}\right]\,,\\
		B_t &= \left[\begin{array}{cc} \nu_q^*(t) & \nu_V^*(t) \\ -\gamma\nu_q^*(t) & -(\beta+\gamma\nu_V^*(t)) \end{array}\right]\,,\\
		C_t &= \left[\begin{array}{cc} \nu_q^*(t) + \nu_{\bar{q}}^*(t) & \nu_V^*(t)+\nu_{\bar{V}}^*(t) \\ -(\gamma+\bar{\gamma})(\nu_q^*(t) + \nu_{\bar{q}}^*(t)) & -\biggl(\beta + (\gamma+\bar{\gamma})(\nu_V^*(t) + \nu_{\bar{V}}^*(t))\biggr) \end{array}\right]\,,
	\end{align*}
	and let $\Phi_t$ and $\Psi_t$ be the solutions to the matrix differential equations
	\begin{align}
		\Phi_t' &= C_t\,\Phi_t\,, & \Phi_0 &= I^{2\times2}\,,\label{eqn:dPhi}\\
		\Psi_t' &= B_t\,\Psi_t\,, & \Psi_0 &= I^{2\times2}\,.\label{eqn:dPsi}
	\end{align}
	The mean vector and covariance matrix induced by $m_t$ are given by
	\begin{align}
		\bar{Y}_t &= \Phi_t\,\biggl(\bar{Y}_0 + \int_0^t \Phi_u^{-1}\,a_u\,\dd u+ \rho\,\int_0^t \Phi_u^{-1}\,\Theta\, \dd W_u\biggr)\,,\label{eqn:population_mean}\\
		\bar{\Sigma}_t &= \Psi_t\bar{\Sigma}_0\Psi_t^\top + (1-\rho^2)\Psi_t\int_0^t \Psi^{-1}_u\Theta\Theta^\top(\Psi_u^{-1})^\top \dd u\Psi_t^\top\,,\label{eqn:population_cov}
	\end{align}
	where
	\begin{align*}
		\Theta &= \left[\begin{array}{c} 0 \\ \eta\end{array}\right]\,.
	\end{align*}
	If the distribution induced by $m_0$ is Gaussian, then $m_t$ induces a Gaussian distribution for all $t\in[0,T]$. If $\mu=\alpha=\gamma=0$ then the covariance matrix in \eqref{eqn:population_cov} has individual elements
	\begin{align}
		\bar{\Sigma}_t^Q &= \bar{\Sigma}_0^Q + \frac{\ee^{-\beta T}}{k}\,t\,\bar{\Sigma}_0^{QV} + \frac{\ee^{-2\beta T}}{4\,k^2}\,t^2\,\bar{\Sigma}_0^V + (1-\rho^2)\,\frac{\eta^2\,\ee^{-2\beta T}}{16\,\beta^3\,k^2}\,(\ee^{2\beta t} - 1 - 2\,\beta\, t - 2\,\beta^2\, t^2)\,,\label{eqn:Sigma_q}\\
		\bar{\Sigma}_t^V &= 2^{-2\beta t}\,\bar{\Sigma}_0^V + (1-\rho^2)\,\frac{\eta^2}{2\,\beta}\,(1-\ee^{-2\beta t})\,,\label{eqn:Sigma_v}\\
		\bar{\Sigma}_t^{QV} &= \ee^{-\beta t}\,\bar{\Sigma}_0^{QV} + \frac{\ee^{-\beta T}}{2\,k}\,t\,\ee^{-\beta t}\,\bar{\Sigma}_0^V + (1-\rho^2)\,\frac{\eta^2\,\ee^{-\beta T}}{4\,\beta^2\,k}(\sinh(\beta t) - \beta\,t\,\ee^{-\beta t})\,.\label{eqn:Sigma_qv}
	\end{align}
\end{proposition}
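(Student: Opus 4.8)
The plan is to observe that, once the consistency condition \eqref{eqn:separate-f} is imposed, the average process, each individual agent's process, and the cross-sectional fluctuations all solve \emph{linear} SDEs, so everything can be solved with the fundamental matrices $\Phi$ and $\Psi$. First I would pin down the dynamics of the population mean $\bar Y$. Substituting \eqref{eqn:separate-f} into \eqref{eq:nustar-separate} and averaging, the equilibrium average trading rate is the affine function $\bar\nu(t,\bar q,\bar V)=a_t^{(1)}+(\nu_q^*+\nu_{\bar q}^*)(t)\,\bar q+(\nu_V^*+\nu_{\bar V}^*)(t)\,\bar V$ with $a_t^{(1)}=(c_2(t)-\gamma c_4(t))/(2k+\bar k)$. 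Plugging this into $\dd\bar Q_t=\bar\nu_t\,\dd t$ and into the dynamics of $\bar V$ already computed in Section \ref{sec:separate}, namely $\dd\bar V_t=-(\beta\bar V_t+(\gamma+\bar\gamma)\bar\nu_t)\,\dd t+\eta\rho\,\dd W_t$ (the term $\tfrac{\eta}{N}\sum_n\dd W^{n,\perp}_t$ vanishing in $L^2$ as $N\to\infty$), one obtains the closed linear SDE
\begin{align*}
\dd\bar Y_t=(a_t+C_t\bar Y_t)\,\dd t+\rho\,\Theta\,\dd W_t,\qquad \bar Y_0=\Eb[Y_0^n],
\end{align*}
and variation of constants with $\Phi$ from \eqref{eqn:dPhi} yields \eqref{eqn:population_mean}.

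Next I would treat a single agent and the fluctuations. Reading off the feedback form \eqref{eq:nustar-separate}, the drift of $Q^n$ is $\nu_q^*(t)Q_t^n+\nu_V^*(t)V_t^n$ plus a term depending only on $(t,\bar Y_t)$, the drift of $V^n$ is $-\gamma\nu_q^*(t)Q_t^n-(\beta+\gamma\nu_V^*(t))V_t^n$ plus a term depending only on $(t,\bar Y_t)$, and the diffusion of $V^n$ is $\Theta\,\dd Z_t^n$; hence $\dd Y_t^n=(g_t+B_tY_t^n)\,\dd t+\Theta\,\dd Z_t^n$ with $g_t$ the same for every $n$ and $B_t$ as stated. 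Averaging over $n$ and letting $N\to\infty$ recovers the equation of the previous step, so $g_t=a_t+(C_t-B_t)\bar Y_t$, and the centred fluctuation $\widetilde Y_t^n:=Y_t^n-\bar Y_t$ satisfies
\begin{align*}
\dd\widetilde Y_t^n=B_t\widetilde Y_t^n\,\dd t+\sqrt{1-\rho^2}\,\Theta\,\dd W^{n,\perp}_t,\qquad \widetilde Y_0^n=Y_0^n-\Eb[Y_0^n],
\end{align*}
the common-noise part $\rho\Theta\,\dd W_t$ having cancelled. These processes are i.i.d.\ in $n$, and conditioning on the common Brownian motion $W$, the pairs $(\widetilde Y_0^n,W^{n,\perp})$ remain i.i.d.\ and independent of $W$; a conditional law of large numbers then identifies $m_t$ as the $\sigma(W)$-conditional law of $Y_t^n=\widetilde Y_t^n+\bar Y_t$, whose mean is $\bar Y_t$ and whose covariance is $\bar\Sigma_t=\mathrm{Cov}(\widetilde Y_t^n)$. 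Solving the linear SDE for $\widetilde Y^n$ with $\Psi$ from \eqref{eqn:dPsi} and applying the It\^o isometry (using independence of $\widetilde Y_0^n$ and $W^{n,\perp}$) gives \eqref{eqn:population_cov}, and Gaussianity follows since $\widetilde Y_0^n\mapsto\widetilde Y_t^n$ is affine with Gaussian additive noise while the shift by the $\sigma(W)$-measurable $\bar Y_t$ keeps the conditional law Gaussian.

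For the explicit formulas, when $\mu=\alpha=\gamma=0$ Proposition \ref{prop:separate_closed} gives $c_2=c_4=0$ (so $a_t=0$), $c_6=c_{10}=c_{13}=0$ and $c_{11}(t)=\ee^{-\beta(T-t)}$, whence $\nu_q^*\equiv\nu_{\bar q}^*\equiv0$ and $\nu_V^*(t)=\ee^{-\beta(T-t)}/(2k)$, so $B_t$ is upper triangular and
\begin{align*}
\Psi_t=\begin{pmatrix}1 & \tfrac{\ee^{-\beta T}}{2k}t\\[1mm] 0 & \ee^{-\beta t}\end{pmatrix},\qquad \Psi_t^{-1}=\begin{pmatrix}1 & -\tfrac{\ee^{-\beta T}}{2k}t\,\ee^{\beta t}\\[1mm] 0 & \ee^{\beta t}\end{pmatrix}.
\end{align*}
With $\Theta\Theta^\top=\mathrm{diag}(0,\eta^2)$, every entry of \eqref{eqn:population_cov} reduces to the elementary integrals $\int_0^t u^j\ee^{2\beta u}\,\dd u$ for $j=0,1,2$; carrying out the two triangular matrix products and collecting terms yields \eqref{eqn:Sigma_q}--\eqref{eqn:Sigma_qv} (the leading factor in \eqref{eqn:Sigma_v} should read $\ee^{-2\beta t}$).

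I expect two points to carry the weight. Conceptually, the only delicate step is the mean-field/propagation-of-chaos argument: that the empirical average converges to a $\bar Y$ solving a \emph{closed} equation, and that $m_t$ is exactly the conditional law given the common noise so that $\bar\Sigma_t$ is a genuine single-agent covariance; at the level of rigour of the paper this rests on the independence of the $W^{n,\perp}$ and the i.i.d.\ initial data. Computationally, the real work is entirely in the last step, namely forming $\Psi_t(\cdot)\Psi_t^\top$ and collapsing the polynomial-times-exponential terms into closed form — in particular checking that the $t^2\ee^{2\beta t}$ and $t\,\ee^{2\beta t}$ contributions to $\bar\Sigma^Q_t$ cancel, leaving the factor $\ee^{2\beta t}-1-2\beta t-2\beta^2 t^2$.
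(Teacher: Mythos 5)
Your proposal is correct and takes essentially the same route as the paper: linear SDEs for $\bar Y_t$ and for the centred fluctuations $Y^n_t-\bar Y_t$ solved via the fundamental matrices $\Phi$ and $\Psi$, the cross term eliminated by independence of $Y_0^n$ and $W^{n,\perp}$, the covariance obtained from the law of large numbers and the It\^o isometry, and the closed-form entries from the explicit triangular $\Psi_t$ when $\mu=\alpha=\gamma=0$ (the paper solves the $Y^n$ equation first and then subtracts $\bar Y_t$, whereas you subtract the SDEs first, which is only a presentational difference). Your side remark is also right: the factor $2^{-2\beta t}$ in the expression for $\bar\Sigma^V_t$ is a typo for $\ee^{-2\beta t}$.
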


\begin{proof}
	The dynamics of $Y^n$ and $\bar{Y}$ are given by
	\begin{align}
		\dd Y_t^n &= (a_t + B_t\,Y_t^n + (C_t-B_t)\,\bar{Y}_t)\, \dd t + \Theta \, \dd Z_t^n\,,\label{eqn:dY}\\
		\dd \bar{Y}_t &= (a_t + C_t\,\bar{Y}_t)\, \dd t + \rho\,\Theta \, \dd W_t\,.\label{eqn:dbarY}
	\end{align}
	The solution to \eqref{eqn:dbarY} is given by \eqref{eqn:population_mean} (see Section 5.6 of \cite{karatzas2012brownian}). By substituting this solution for $\bar{Y}_t$ into \eqref{eqn:dY} and performing some tedious computations we arrive at
	\begin{align}
		Y_t^n &= \Psi_t\,(Y_0^n-\bar{Y}_0) + \Phi_t\,\bar{Y}_0 + \Phi_t\int_0^t\Phi_u^{-1}\,a_u\,\dd u + \rho\,\Phi_t\int_0^t\Phi_u^{-1}\,\Theta\,\dd W_u + \sqrt{1-\rho^2}\,\Psi_t\int_0^t\Psi_u^{-1}\,\Theta\,\dd W^{n,\perp}_u\,.
	\end{align}
	Subtracting $\bar{Y}_t$ from this expression yields
	\begin{align}
		Y_t^n-\bar{Y}_t &= \Psi_t\,(Y_0^n-\bar{Y}_0) + \sqrt{1-\rho^2}\Psi_t\int_0^t\Psi_u^{-1}\Theta\, \dd W_u^{n,\perp}\,,
	\end{align}
	from which we also compute
	\begin{align*}
		(Y_t^n-\bar{Y}_t)(Y_t^n-\bar{Y}_t)^\top &= \Psi_t\,(Y_0^n-\bar{Y}_0)\,(Y_0^n-\bar{Y}_0)^\top\,\Psi_t^\top + 2\sqrt{1-\rho^2}\,\Psi_t\,(Y_0^n-\bar{Y}_0)\int_0^t\Psi_t^{-1}\Theta\,\dd W_u^{n,\perp}\\
			&\hspace{15mm} + (1-\rho^2)\Psi_t\biggl(\int_0^t\Psi_t^{-1}\Theta\,\dd W_u^{n,\perp}\biggr)\biggl(\int_0^t\Psi_t^{-1}\Theta\,\dd W_u^{n,\perp}\biggr)^\top\Psi_t^\top\,.
	\end{align*}
	We sum both sides over $1\leq n \leq N$ and divide by $N$. As $N\rightarrow\infty$ the left hand side converges to $\bar{\Sigma}_t$. The second term on the right converges to zero due to independence of $Y_0^n$ and $W^{n,\perp}$. Applying the law of large numbers and Ito's isometry to the third term yields \eqref{eqn:population_cov}. If the initial distribution of $m_0$ is Gaussian, then independence of $Y_0^n$ and $W^{n,\perp}$ and the fact that the stochastic integrand is deterministic result in $m_t$ being Gaussian.
	
	To obtain the expression in \eqref{eqn:Sigma_q}, \eqref{eqn:Sigma_v}, and \eqref{eqn:Sigma_qv} we first use Proposition \ref{prop:separate_closed} to write the matrix $B_t$ in closed form. Then \eqref{eqn:dPsi} can be solved in closed form, which yields
	\begin{align*}
		\Psi_t &= \left[\begin{array}{cc} 1 & \frac{\ee^{-\beta T}}{2\,k}\,t \\ 0 & \ee^{-\beta t}\end{array}\right]\,, & \Psi_t^{-1} &= \left[\begin{array}{cc} 1 & -\frac{\ee^{-\beta (T-t)}}{2\,k}\,t \\ 0 & \ee^{\beta t}\end{array}\right]\,.
	\end{align*}
	Substituting these expressions into \eqref{eqn:population_cov} and computing the integral gives the result.
\end{proof}

The covariance matrix in \eqref{eqn:population_cov} confirms an observation made in comparing the simulations of Figure \ref{fig:MC-shared} and Figure \ref{fig:MC-separate}: the sample variance of the terminal inventory of all agents is greater when they have separate signals compared to when they share the same signal. This is because of the lower correlation between signals implied by the separate signals and the term $1-\rho^2$ in \eqref{eqn:population_cov}. In fact the variance of inventory will be minimized when the correlation is $\rho=\pm1$. This has a clear intuitive reason being that if the agents have very similar signals then they will trade in a similar fashion, and any variance in their terminal inventory will be the result of variance of their initial inventory and the limited speed of trading due to market frictions such as temporary price impact.

In Figure \ref{fig:mf_moments} we show the variances and correlation across agents of inventories and signals in the mean-field limit. This gives a visual demonstration that the variance of inventories is lowest when $\rho^2$ is largest. In addition we also see that in the early parts of the trading period the variance does not depend much on the magnitude of shared information which is measured by $\rho^2$. This is due to the fact that for much of the trading interval the agents are controlling the size of their inventory by trading towards a common target of zero. When the end of the trading period is closer they begin to take advantage of the information in the trade signal, and their trading targets due to the trade signal may be different causing their inventories to diverge.

The behaviour of the trade signal variance is more expected. Since the initial distribution is relatively concentrated with a variance of $0.02^2$, the variance quickly increases, but at different rates depending on the magnitude of shared information measured by $\rho^2$. If $\rho^2$ is large then the agents share much of the same information, and so it is expected that the cross sectional variance of their trade signals is lower.

\begin{figure}
	\begin{center}
		{\includegraphics[trim=140 240 140 240, scale=0.4]{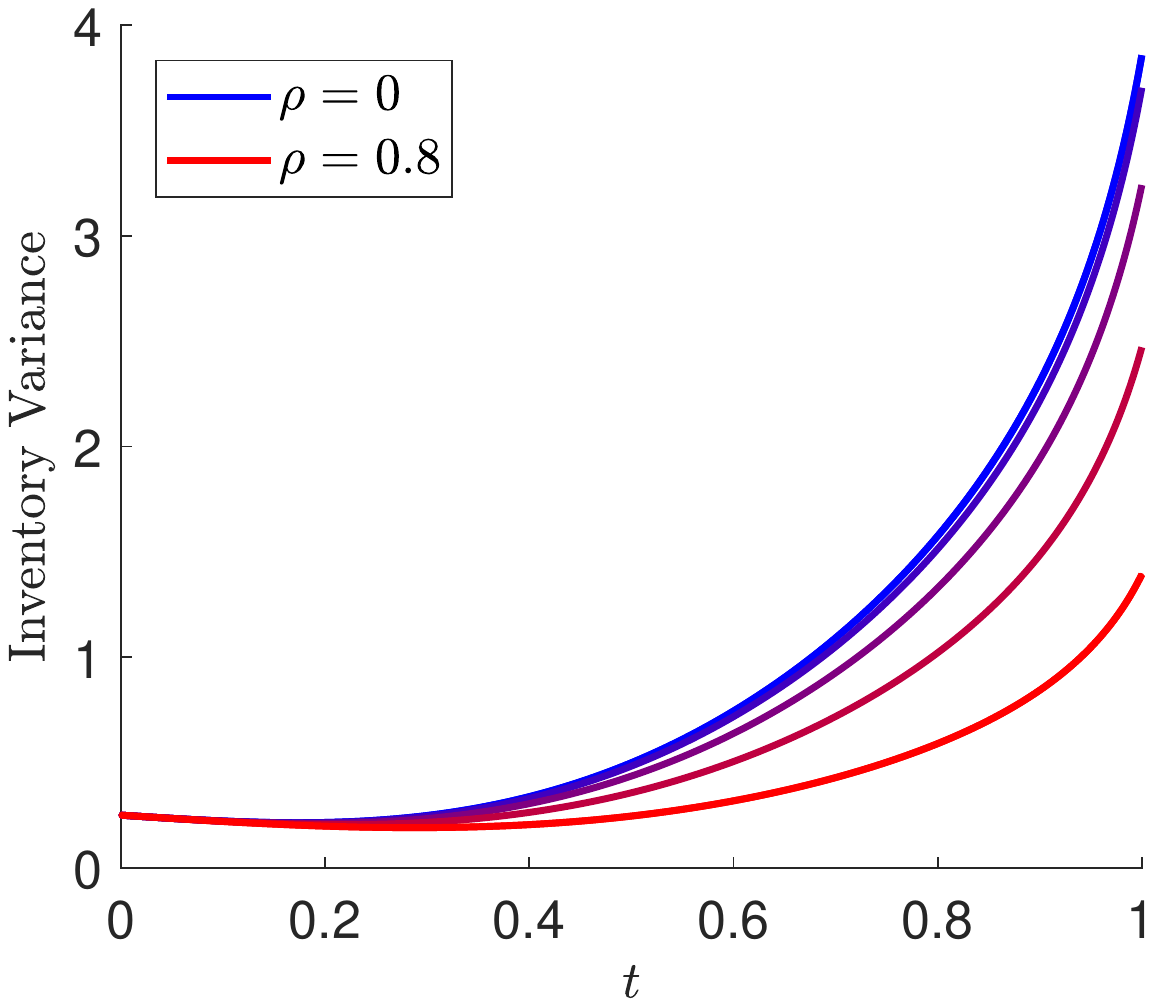}}\hspace{10mm}
		{\includegraphics[trim=140 240 140 240, scale=0.4]{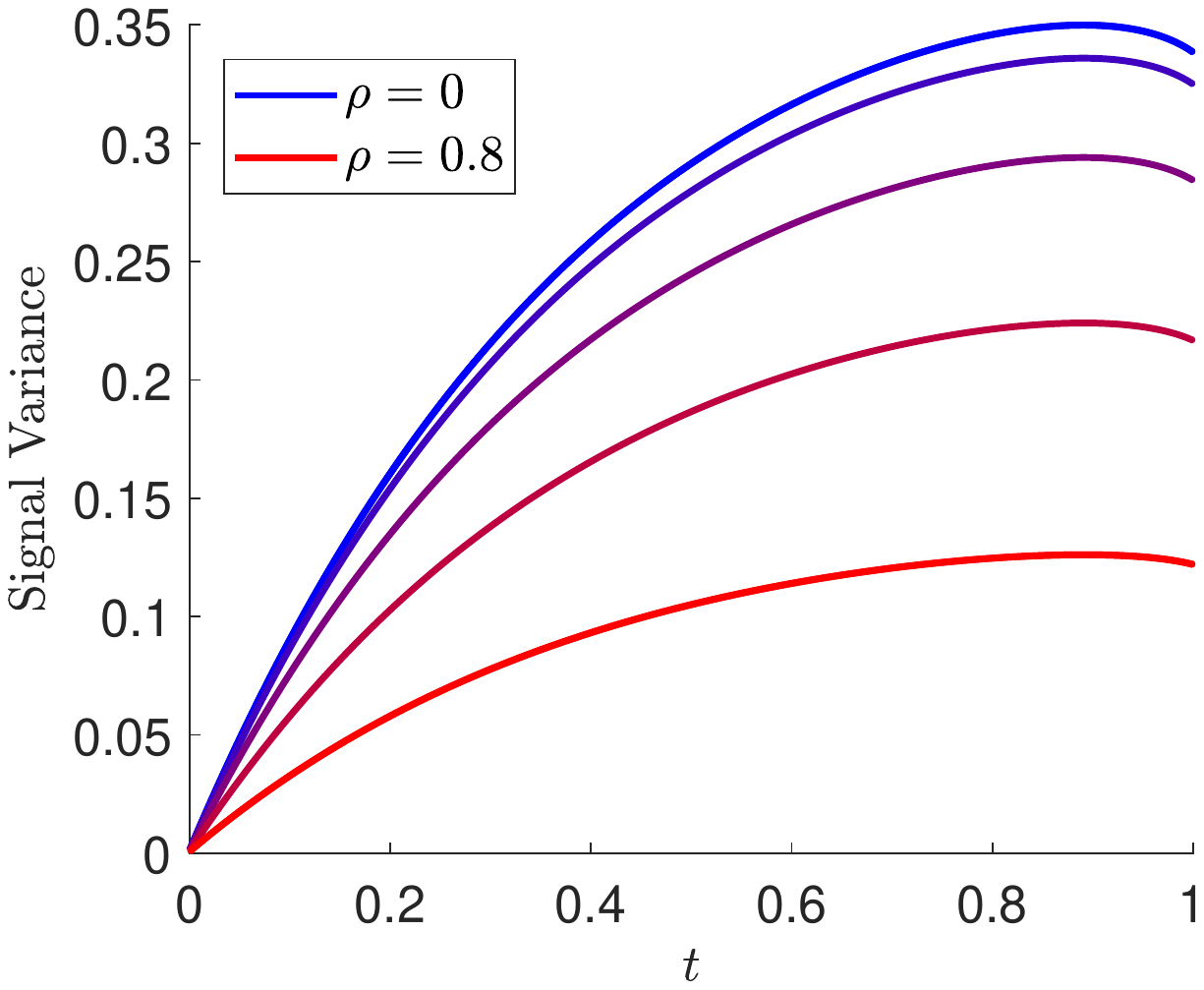}}\hspace{10mm}
		{\includegraphics[trim=140 240 140 240, scale=0.4]{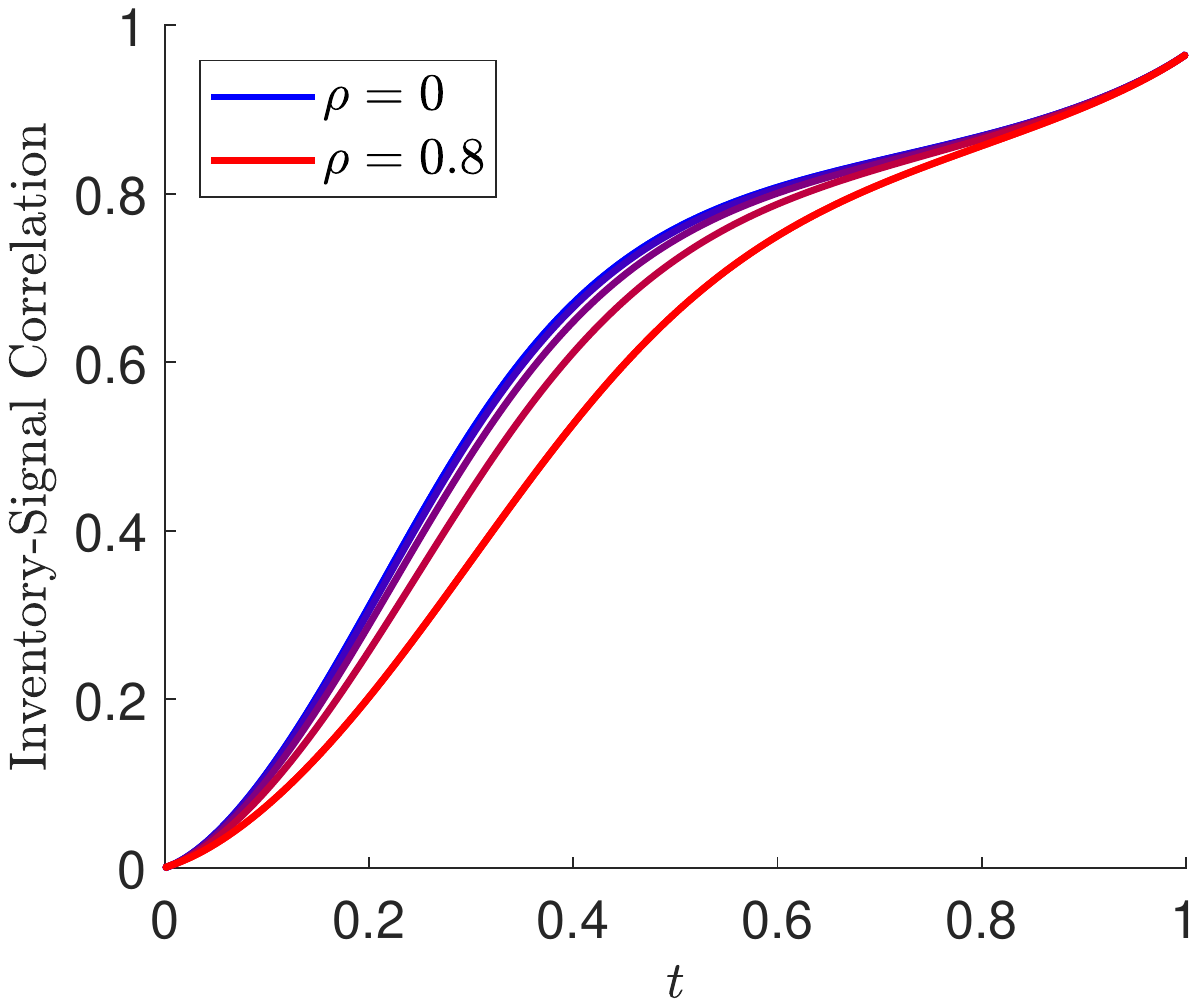}}
	\end{center}
	\vspace{-1em}
	\caption{Cross sectional variances and correlation of $Q_t^n$ and $V_t^n$ in mean-field limit. Parameters used are $\mu = 0$, $\sigma = 1$, $\eta = 1$, $\beta = 1$, $\gamma = 0.05$, $\bar{\gamma}=0.1$, $b = 5\cdot10^{-2}$, $k = 5\cdot10^{-3}$, $\bar{k} = 10^{-3}$, $\alpha = 0.1$, $T = 1$. The initial variances of inventories and signals are $0.5^2$ and $0.02^2$ respectively with an initial correlation of $0$.  \label{fig:mf_moments}}
\end{figure}

With the expression given in \eqref{eqn:population_mean} for the cross sectional mean of inventory and signal, we are able to demonstrate the effect of a shared trade signal on the variance of the asset price. This is done in the following proposition.

\begin{proposition} In mean-field equilibrium, the variance of the asset price is
	\begin{align}
		\mathds{E}[(S_t-\mathds{E}[S_t])^2] &= \int_0^t \biggl(\rho\,\eta\,b \left[\begin{array}{cc} 1 & 0 \end{array}\right]\,\Phi_t\,\Phi_u^{-1}\left[\begin{array}{c} 0 \\ 1 \end{array}\right] + \sigma\biggr)^2\,\dd u\,,\label{eqn:S_var}
	\end{align}
	where $\Phi_t$ is as in Proposition \ref{prop:pop_distribution}. If $\mu=\alpha=\gamma=0$ then this reduces to
	\begin{align}
		\mathds{E}[(S_t-\mathds{E}[S_t])^2] &= \int_0^t \biggl( \frac{2\,\rho\,\eta\, z\,(1-e^{-\frac{b}{\kappa}(t-u)})}{(1+2\,z)\,e^{\omega(T-u)}-1}    + \sigma\biggr)^2\,\dd u\,,\label{eqn:S_var_closed}
	\end{align}
	where
	\begin{align*}
		\omega &= \frac{\kappa\beta-b}{\kappa}\,, & z &= \frac{\kappa\beta - b}{2\bar{\gamma}}\,, & \kappa &= 2k+\bar{k}\,.
	\end{align*}
\end{proposition}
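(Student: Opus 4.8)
The plan is to reduce the computation to the affine dynamics of $\bar{Y}_t = (\bar{Q}_t,\bar{V}_t)^\top$ established in Proposition \ref{prop:pop_distribution} and then apply It\^o's isometry. First I would integrate the price equation (which in Section \ref{sec:separate} is unchanged from \eqref{eqn:S_shared}) to get $S_t = S_0 + \mu t + b\int_0^t \bar{\nu}_u\,\dd u + \sigma W_t$. The key simplification is the identity $\dd\bar{Q}_t = \bar{\nu}_t\,\dd t$ displayed just after \eqref{eqn:Q_bar-definition}, which gives $\int_0^t \bar{\nu}_u\,\dd u = \bar{Q}_t - \bar{Q}_0$; since $S_0$ and $\bar{Q}_0 = \Eb[Q_0^n]$ are deterministic in the mean-field limit, this yields $S_t - \Eb[S_t] = b(\bar{Q}_t - \Eb[\bar{Q}_t]) + \sigma W_t$.

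Next I would use the explicit representation \eqref{eqn:population_mean}. Because $\bar{Y}_0$ and the $a_u$-integral are deterministic, the centered vector is $\bar{Y}_t - \Eb[\bar{Y}_t] = \rho\,\Phi_t\int_0^t \Phi_u^{-1}\Theta\,\dd W_u$; extracting the first coordinate and using $\Theta = \eta[0,1]^\top$ gives $\bar{Q}_t - \Eb[\bar{Q}_t] = \rho\eta\int_0^t [1,0]\Phi_t\Phi_u^{-1}[0,1]^\top\,\dd W_u$. Substituting into the previous display and writing $\sigma W_t = \int_0^t \sigma\,\dd W_u$ puts $S_t - \Eb[S_t]$ into the form $\int_0^t\bigl(\rho\eta b\,[1,0]\Phi_t\Phi_u^{-1}[0,1]^\top + \sigma\bigr)\,\dd W_u$, and It\^o's isometry then produces \eqref{eqn:S_var}.

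For the closed form \eqref{eqn:S_var_closed} I would specialize to $\mu = \alpha = \gamma = 0$ and invoke Proposition \ref{prop:separate_closed}: there $c_6 = c_{10} = c_{13} \equiv 0$, so $\nu_q^* + \nu_{\bar{q}}^* = f_2 = 0$, which makes $C_t$ upper triangular with diagonal $(0,-(\beta+\bar{\gamma}f_3(t)))$ and off-diagonal entry $f_3(t) = (c_{11}(t)+c_{12}(t))/\kappa = \tfrac{1}{\kappa}\cdot\tfrac{2z}{(1+2z)e^{\omega(T-t)}-1}$. Hence $\Phi_t$ is upper triangular, and the transition matrix has $(1,2)$-entry $[1,0]\Phi_t\Phi_u^{-1}[0,1]^\top = \int_u^t f_3(s)\exp\bigl(-\int_u^s(\beta+\bar{\gamma}f_3(r))\,\dd r\bigr)\,\dd s$. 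The main obstacle is evaluating this double integral: using $2z\bar{\gamma} = \kappa\beta - b$ and $\omega = \beta - b/\kappa$ one rewrites $\bar{\gamma}f_3(r) = \omega/((1+2z)e^{\omega(T-r)}-1)$, integrates by the substitution $v = e^{\omega(T-r)}$ together with partial fractions to obtain $\exp\bigl(-\int_u^s(\beta+\bar{\gamma}f_3(r))\,\dd r\bigr) = e^{-\beta(s-u)}\frac{(1+2z)-e^{-\omega(T-s)}}{(1+2z)-e^{-\omega(T-u)}}$, and the remaining $s$-integral then collapses to $\frac{2z(1-e^{-(b/\kappa)(t-u)})}{b((1+2z)e^{\omega(T-u)}-1)}$. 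Inserting this into \eqref{eqn:S_var} gives \eqref{eqn:S_var_closed}. Everything else is bookkeeping; the only conceptual point to watch is that $\bar{Y}_0$ is deterministic in the mean-field limit, so it contributes nothing to the variance.
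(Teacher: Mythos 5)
Your proposal is correct, and it reaches both \eqref{eqn:S_var} and \eqref{eqn:S_var_closed}, but the mechanics differ from the paper's in two places worth noting. For the general formula, the paper substitutes the equilibrium $\bar{\nu}$ into the price SDE, writes the drift as $\mu + b\,a$-terms plus $N_t\,\bar{Y}_t$ with $\bar{Y}_t$ given by \eqref{eqn:population_mean}, and then (via the fact that $N_u\Phi_u/b$ is the derivative of the first row of $\Phi_u$, together with a stochastic-Fubini interchange) collapses the martingale part to $\int_0^t\bigl(\rho\eta b\,[1\;0]\Phi_t\Phi_u^{-1}[0\;1]^\top+\sigma\bigr)\dd W_u$; your observation that $b\int_0^t\bar{\nu}_u\,\dd u = b(\bar{Q}_t-\bar{Q}_0)$ with $\bar{Q}_0=\Eb[Q_0^n]$ deterministic gives $S_t-\Eb[S_t]=b(\bar{Q}_t-\Eb[\bar{Q}_t])+\sigma W_t$ immediately and reads the centered $\bar{Q}_t$ off \eqref{eqn:population_mean}, which bypasses that interchange entirely and is the cleaner route. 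For the closed form, the paper solves \eqref{eqn:dPhi} explicitly under $\mu=\alpha=\gamma=0$ (using Proposition \ref{prop:separate_closed}) to get $\Phi_t$ and $\Phi_t^{-1}$ in closed form and then multiplies; you instead exploit that $C_t$ is upper triangular with $(1,1)$-entry zero (since $c_6=c_{10}=c_{13}\equiv0$ forces $\nu_q^*+\nu_{\bar q}^*=f_2=0$) and compute only the needed $(1,2)$-entry of the transition matrix by variation of parameters, using $\bar{\gamma}f_3(r)=\omega/((1+2z)e^{\omega(T-r)}-1)$ and the substitution $v=e^{\omega(T-r)}$; your intermediate expressions (the exponential factor $e^{-\beta(s-u)}\frac{(1+2z)-e^{-\omega(T-s)}}{(1+2z)-e^{-\omega(T-u)}}$ and the final $(1,2)$-entry $\frac{2z(1-e^{-(b/\kappa)(t-u)})}{b((1+2z)e^{\omega(T-u)}-1)}$) agree with what one obtains from the paper's explicit $\Phi_t$, $\Phi_u^{-1}$, so the two computations are equivalent. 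What your route buys is avoiding the matrix inversion and the stochastic-Fubini step; what the paper's buys is an explicit $\Phi_t$ that is displayed and could be reused elsewhere. No gaps.
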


\begin{proof}
	With $\bar{\nu}_t$ in \eqref{eqn:S_shared} being set equal to the average trading speed in equilibrium we may write
	\begin{align*}
		\dd S_t &= \biggl(\mu + b\frac{c_2(t)-\gamma c_4(t)}{2k+\bar{k}} + N_t\,\bar{Y}_t\biggr)\,\dd t + \sigma\,\dd W_t\,,
	\end{align*}
	where
	\begin{align*}
		N_t &= b\left[\begin{array}{cc} \nu^*_q(t) + \nu^*_{\bar{q}}(t)\,\quad, & \nu^*_V(t) + \nu^*_{\bar{V}}(t) \end{array}\right]\,.
	\end{align*}
	With the expression for $\bar{Y}_t$ given in \eqref{eqn:population_mean} the solution to the SDE can be written as
	\begin{align*}
		S_t &= S_0 + \int_0^t \biggl(\mu + ba_u + N_u\,\Phi_u\,\bar{Y}_0 + N_u\,\Phi_u \int_0^u \Phi^{-1}_s\,a_s\,ds\biggr) \dd u \\ & \quad
					+ \int_0^t\biggl(\rho\,\eta\,b\left[\begin{array}{cc} 1 & 0 \end{array}\right]\, \Phi_t\,\Phi^{-1}_u\,\left[\begin{array}{c} 0 \\ 1 \end{array}\right] + \sigma\biggr)\, \dd W_u\,,
	\end{align*}
	where $a_t$ is defined as in Proposition \ref{prop:pop_distribution}. This allows us to write
	\begin{align*}
		S_t-\mathds{E}[S_t] &= \int_0^t\biggl(\rho\,\eta\,b\left[\begin{array}{cc} 1 & 0 \end{array}\right]\, \Phi_t\,\Phi^{-1}_u\,\left[\begin{array}{c} 0 \\ 1 \end{array}\right] + \sigma\biggr)\, \dd W_u\,,
	\end{align*}
	and the result in \eqref{eqn:S_var} follows from Ito's isometry. The expression in \eqref{eqn:S_var_closed} arises again from using Proposition \ref{prop:separate_closed} to solve \eqref{eqn:dPhi}, which yields
	\begin{align*}
		\Phi_t &= \left[\begin{array}{cc} 1 & \frac{2\,z\,(1-\ee^{-\frac{b}{\kappa}t})}{b((1+2\,z)\ee^{\omega T} - 1)} \\ 0 & \frac{(1+2\,z)\,\ee^{\omega(T-t)}-1}{(1+2\,z)\,\ee^{\omega T}-1}\ee^{-\frac{b}{\kappa}t}\end{array}\right]\,, & \Phi^{-1}_t &= \left[\begin{array}{cc} 1 & \frac{-2\,z\,(1-\ee^{-\frac{b}{\kappa}t})\,\ee^{\frac{b}{\kappa}t}}{b\,((1+2\,z)\,\ee^{\omega(T-t)}-1)} \\ 0 & \frac{(1+2\,z)\,\ee^{\omega T}-1}{(1+2\,z)\,\ee^{\omega (T-t)}-1}\ee^{\frac{b}{\kappa}t}\end{array}\right]\,.
	\end{align*}
	Substituting these expressions into \eqref{eqn:S_var} yields \eqref{eqn:S_var_closed}.
\end{proof}

In Figure \ref{fig:mf_S_var} we plot the variance of $S_t$ through time when agents trade according to the mean-field equilibrium strategy. If there were no price impact then this variance would be purely from the accumulated volatility over time. With price impact, the drift of the midprice has an element of randomness caused by the common noise component of the agents' trade signal. Here we see that the effect on price variance depends on more than just the information shared by agents, as measured by $\rho^2$, but the sign of $\rho$ also matters. When $\rho^2$ is large, agents share a lot of information and trade in a similar fashion. When this happens with positive $\rho$, their order flow is concentrated and tends to occur in the same direction as midprice changes, effectively increasing the size of midprice changes and therefore variance. When $\rho$ is negative, their order flow is concentrated but tends to occur in the opposite direction of midprice changes, lowering the variance. When $\rho$ is close to zero, they share little information and net order flow tends to be close to zero which adds no additional variance to the midprice.

\begin{figure}
	\begin{center}
		{\includegraphics[trim=140 240 140 240, scale=0.5]{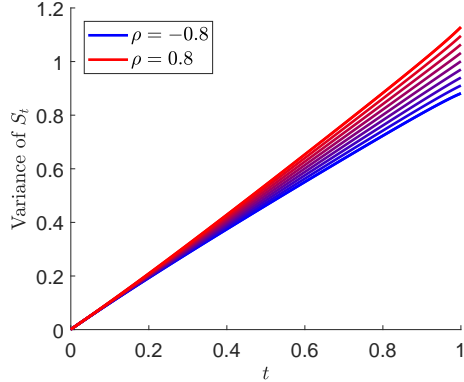}}
	\end{center}
	\vspace{-1em}
	\caption{Variances of midprice $S_t$ through time in mean-field limit for various $\rho$. Parameters used are $\mu = 0$, $\sigma = 1$, $\eta = 1$, $\beta = 1$, $\gamma = 0.05$, $\bar{\gamma}=0.1$, $b = 5\cdot10^{-2}$, $k = 5\cdot10^{-3}$, $\bar{k} = 10^{-3}$, $\alpha = 0.1$, $T = 1$. \label{fig:mf_S_var}}
\end{figure}

\section{Conclusion}\label{sec:conclusion}

In this paper we have presented a model for price dynamics and trading in which an agent attempts to extract profits from his own subjective valuation of an asset. When his subjective view of asset value is significantly different than the traded market price he wants to accumulate a large position, but friction effects and risk aversion prevent him from trading too quickly. Instead he manages a trade-off between the potential profits and costs. We continue our analysis when multiple agents are undertaking this task, either with a common trade signal shared between them or with individual signals correlated to each other. A mean-field game approach is taken to represent a setting with a large number of agents which keeps the problem tractable. This also allows us to study the cross sectional distribution of inventory as it depends on the correlation structure of the collection of signals. When correlation between signals is large, the inventory across all agents will have a tighter distribution because they are essentially trading off of the same information and therefore have similar behaviour. The correlation between signal and price innovations also modifies the asset price variance, as the random order flow will cause it to deviate from its accumulated volatility over time. Positive correlation between each signal and price innovations will increase the variance of the asset price at any fixed point in time.

\bibliographystyle{chicago}
\bibliography{References}

\end{document}